\documentclass[11pt,a4paper]{scrartcl}
\linespread{1.4}
\usepackage[english]{babel}  
\usepackage[margin=1in]{geometry}
\geometry{left=2.5cm, right=2.5cm, top=2.5cm, bottom=2.5cm}
\RequirePackage[OT1]{fontenc}
\RequirePackage{graphicx}
\RequirePackage{amsthm,amssymb}
\RequirePackage[cmex10]{amsmath}
\RequirePackage[colorlinks,citecolor=blue,urlcolor=blue]{hyperref}

%%%%%%%%%%%%%%%%%%%%%%%%%%%%%%%%
%%%%%%%%%%  PACKAGES  %%%%%%%%%%
%%%%%%%%%%%%%%%%%%%%%%%%%%%%%%%%
\usepackage[round,authoryear]{natbib}
\usepackage{listing}
\usepackage{amsfonts}
\usepackage{amsmath}		
\usepackage{bbm} 	
\usepackage{accents}
\usepackage{amsmath,bm}
\usepackage{mathtools}
\usepackage{booktabs}
\usepackage{xcolor}
\usepackage{amsmath}
\usepackage{amsfonts}
\usepackage{amssymb}
\usepackage{adjustbox}
\usepackage{amsmath, amssymb, array, adjustbox, booktabs}
\usepackage{booktabs}

\usepackage{caption} 
\usepackage{verbatim}
\usepackage{multirow}
\usepackage{authblk}
\usepackage[multiple]{footmisc}
%\usepackage[hang]{footmisc}
% \captionsetup[table]{skip=10pt}
\setlength{\footnotesep}{1\baselineskip}
\interfootnotelinepenalty=10000
\usepackage{easy-todo}

\makeatletter
\setlength{\@fptop}{0pt}
\makeatother
\usepackage{color}
\usepackage{array}
\definecolor{dkgreen}{rgb}{0,0.6,0}
\definecolor{gray}{rgb}{0.5,0.5,0.5}
\definecolor{mauve}{rgb}{0.58,0,0.82}

\usepackage{xr}
% \externaldocument[I-]{banditlimit_supplement_0.1}

\makeatletter
\newcommand*\rel@kern[1]{\kern#1\dimexpr\macc@kerna}
\newcommand*\widebar[1]{%
  \begingroup
  \def\mathaccent##1##2{%
    \rel@kern{0.8}%
    \overline{\rel@kern{-0.8}\macc@nucleus\rel@kern{0.2}}%
    \rel@kern{-0.2}%
  }%
  \macc@depth\@ne
  \let\math@bgroup\@empty \let\math@egroup\macc@set@skewchar
  \mathsurround\z@ \frozen@everymath{\mathgroup\macc@group\relax}%
  \macc@set@skewchar\relax
  \let\mathaccentV\macc@nested@a
  \macc@nested@a\relax111{#1}%
  \endgroup
}
\makeatother

%%%%%%%%%%%%%%%%%%%%%%%%%%%%%%%%
%%%%%%%%%%  COMMANDS  %%%%%%%%%%
%%%%%%%%%%%%%%%%%%%%%%%%%%%%%%%%
%\theoremstyle{remark}

\newtheorem{assumption}{Assumption}
\newtheorem{definition}{Definition}
\newtheorem{proposition}{Proposition}
\newtheorem{corollary}{Corollary}
\newtheorem{theorem}{Theorem}

\newtheorem{ancillary}{Ancillary}
\newtheorem{remark}{Remark}
\newtheorem*{assumption*}{Assumption}

%%%%% Public
%\newcommand{\defeq}{\coloneq}
\newcommand{\defeq}{\equiv}

\newcommand{\trans}{\prime}              % transpose
\newcommand{\post}{\tilde}
\newcommand{\indicator}{\mathbbm{1}}     % indicator
\newcommand{\dd}{\mathrm{d}}             % derivative
    % independent
\newcommand{\idm}{\boldsymbol{I}}        % identity matrix
\newcommand{\mfzero}{\boldsymbol{0}}         % zero vector
\newcommand{\rR}{\mathbb{R}}             % real value space
\newcommand{\rN}{\mathbb{N}}             % natural number space
\newcommand{\var}{\operatorname{Var}}
\newcommand{\cov}{\operatorname{Cov}}

\newcommand{\wto}{\Rightarrow}
\newcommand{\op}{o_{\mathrm{P}}}  
\newcommand{\Op}{O_{\mathrm{P}}}       
\newcommand{\filtr}{\mathcal{F}}
\newcommand{\filtrn}{\mathcal{F}^{(\n)}}
\newcommand{\law}{\mathbb{P}}
\newcommand{\lawn}{\mathrm{P}^{(\n)}}
\newcommand{\Exp}{\mathbb{E}}
\newcommand{\prob}{\mathrm{Pr}}

\newcommand{\experiment}{\mathcal{E}}
\newcommand{\experimentn}{\mathcal{E}^{(\n)}}
\newcommand{\stat}{\tau}
\newcommand{\cv}{\kappa}

\newcommand{\power}{\Upsilon}
\newcommand{\cond}{|}
%%%%%
\newcommand{\pint}{\theta}
\newcommand{\pintb}{\boldsymbol{\pint}}
\newcommand{\Pintb}{\boldsymbol{\Theta}}
\newcommand{\lpint}{h}
\newcommand{\lpintb}{\boldsymbol{\lpint}}

\newcommand{\vmu}{\boldsymbol{\mu}}
\newcommand{\m}{m}
\newcommand{\mref}{\bar{\m}}
\newcommand{\dlpint}{\delta}

\newcommand{\vbeta}{\boldsymbol{\beta}}
\newcommand{\vzeta}{\boldsymbol{\zeta}}

\newcommand{\vb}{\boldsymbol{b}}

\newcommand{\vH}{\mathbf{H}}
\newcommand{\vG}{\mathbf{G}}
\newcommand{\vOmega}{\boldsymbol{\Omega}}
\newcommand{\vGamma}{\boldsymbol{\Gamma}}
\newcommand{\LLR}{\Lambda}
\newcommand{\LLRn}{\LLR^{(\n)}}
\newcommand{\LLRlim}{\Lambda}
\newcommand{\ii}{t}
\newcommand{\jj}{s}
\newcommand{\uu}{u}
\newcommand{\n}{T}
\newcommand{\psista}{\varphi}

\newcommand{\measure}{\nu}
\newcommand{\reward}{R}
\newcommand{\rewardb}{\boldsymbol{\reward}}
\newcommand{\rewardd}{\reward^{\dagger}}
\newcommand{\freq}{D}
\newcommand{\freqb}{\boldsymbol{\freq}}
\newcommand{\freqd}{\freq^{\dagger}}
\newcommand{\vr}{\boldsymbol{r}}
\newcommand{\vd}{\boldsymbol{d}}
\newcommand{\vS}{\boldsymbol{S}}
\newcommand{\vC}{\boldsymbol{C}}
\newcommand{\vs}{\boldsymbol{s}}
\newcommand{\vc}{\boldsymbol{c}}
\newcommand{\policy}{\psi}
\newcommand{\Epolicy}{\varpi}
\newcommand{\Epolicyb}{\boldsymbol{\Epolicy}}

\newcommand{\Y}{Y}
\newcommand{\Z}{Z}
\newcommand{\A}{A}
\newcommand{\X}{X}

\newcommand{\vA}{\boldsymbol{\A}}
\newcommand{\vX}{\boldsymbol{\X}}
\newcommand{\vx}{\boldsymbol{x}}

\newcommand{\Zkt}{\Z_{k,\ii}}

\newcommand{\ZAt}{\Z_{\A_{\ii},\ii}}

\newcommand{\SX}{\mathcal{X}}
\newcommand{\SK}{[K]}
\newcommand{\Sn}{[\n]}

\newcommand{\funqb}{\boldsymbol{v}^{\star}}
\newcommand{\funfb}{\boldsymbol{v}^{\circ}}
\newcommand{\parsumq}{V^{\star}}
\newcommand{\parsumf}{V^{\circ}}
\newcommand{\parsumqb}{\boldsymbol{V}^{\star}}
\newcommand{\parsumfb}{\boldsymbol{V}^{\circ}}
\newcommand{\Ubparsum}{\boldsymbol{U}}
\newcommand{\UbparsumE}{\Ubparsum^{\dagger}}
\newcommand{\Ublim}{\tilde\Ubparsum}
\newcommand{\UblimE}{\tilde\Ubparsum^{\dagger}}
\newcommand{\vu}{\boldsymbol{u}}

%%%%% 
\newcommand{\e}{\varepsilon}

\newcommand{\ekt}{\e_{k,\ii}}
\newcommand{\f}{f}
\newcommand{\fzk}{\f_{k}}

\newcommand{\W}{W}
\newcommand{\vW}{\boldsymbol{\W}}
\newcommand{\Wek}{\W_{\e_k}}
\newcommand{\B}{B}
\newcommand{\vB}{\boldsymbol{\B}}
%%%%%
\newcommand{\CS}{\Delta}
\newcommand{\CSb}{\boldsymbol{\CS}}
\newcommand{\QV}{\mathcal{Q}}
\newcommand{\QVb}{\boldsymbol{\QV}}
\newcommand{\FJ}{J}

\newcommand{\FJbpintk}{\boldsymbol{\FJ}_{\pintb,k}}

\newcommand{\score}{\dot\ell}
\newcommand{\scorek}{\dot\ell_{k}}

\newcommand{\scorebk}{\dot{\boldsymbol{\ell}}_{\pintb,k}}

\graphicspath {{figures/}}

\title{Valid Post-Contextual Bandit Inference}
% \subtitle{Preliminary working draft}
\author[1]{Ramon van den Akker}
\author[2]{Bas J.M. Werker}
\author[3]{Bo Zhou}
\affil[1]{Econometrics Group, Tilburg University}
\affil[2]{Econometrics and Finance Group, Tilburg University}
\affil[3]{Department of Economics, Virginia Tech}

\begin{document}

\setlength{\footnotesep}{12pt}

\maketitle

\begingroup
\renewcommand\thefootnote{}\footnote{We thank Keisuke Hirano and participants at various seminars and conferences for their helpful feedback.}
\addtocounter{footnote}{-1}
\endgroup

\abstract{We establish an asymptotic framework for the statistical analysis of the stochastic contextual multi-armed bandit problem (CMAB), which is widely employed in adaptively randomized experiments across various fields. While algorithms for maximizing rewards or, equivalently, minimizing regret have received considerable attention, our focus centers on statistical inference with adaptively collected data under the CMAB model. To this end we derive the limit experiment (in the H\'ajek-Le Cam sense). This limit experiment is highly nonstandard and, applying Girsanov's theorem, we obtain a structural representation in terms of stochastic differential equations. This structural representation, and a general weak convergence result we develop, allow us to obtain the asymptotic distribution of statistics for the CMAB problem. In particular, we obtain the asymptotic distributions for the classical t-test (non-Gaussian), Adaptively Weighted tests, and Inverse Propensity Weighted tests (non-Gaussian). We show that, when comparing both arms, validity of these tests requires the sampling scheme to be translation invariant in a way we make precise. We propose translation-invariant versions of Thompson, tempered greedy, and tempered Upper Confidence Bound sampling. Simulation results corroborate our asymptotic analysis.} \\

%\textbf{JEL classification:} C12, C14

\textbf{Keywords:} contextual multi-armed bandit, limit experiment, locally asymptotically quadratic, adaptive inference.

\section{Introduction} \label{sec:introduction}
Stochastic contextual multi-armed bandits (CMABs) are a cornerstone for adaptive randomized experiments and sequential decision-making under uncertainty. At each round, an agent observes a covariate vector $\vX_\ii$ (the context), selects one of $K$ treatment arms, and receives a reward $\Y_\ii$ drawn from that arm’s context-dependent distribution (without observing the counterfactual outcomes). CMABs now underpin real-world systems in dynamic pricing, news and ad recommendation, online education, and mobile health, amongst others.\footnote{\cite{bouneffouf2020survey} provides a review of applications.} The standard objective is to minimize (expected) cumulative regret, which yields the classic exploration-exploitation dilemma. Often used sampling schemes, or policies, for MABs/CMABs include Thompson sampling (\cite{thompson1933likelihood}, \cite{agrawal2013thompson}), upper-confidence-bound (UCB) algorithms (\cite{lai1985asymptotically}, \cite{auer2002finite}, \cite{li2010contextual}), $\epsilon$-greedy/tempered-greedy heuristics, and the recent exploration sampling of \cite{kasy2021adaptive}. An extensive literature---initiated by \cite{lai1985asymptotically} and surveyed by, amongst others, \cite{slivkins2019introduction} and \cite{lattimore2020bandit}---focuses on studying the (asymptotic) regret properties of policies.\footnote{While these policies aim to optimize expected cumulative regret, \cite{fan2024fragility} and \cite{simchi2024simple} show that this can lead to heavy-tailed regret, motivating alternative designs that limit tail risk.}

The same adaptivity that makes CMABs efficient, in the sense of minimizing the expected regret while conducting an experiment, complicates statistical inference. Data arising from a bandit experiment are not independent and identically distributed (i.i.d.), which can ruin the standard properties of estimators and test statistics. For example, the sample mean of the observations collected on a treatment arm is, in general, a biased estimator of the mean of the arm and asymptotically, has a non-Gaussian distribution. See \cite{deshpande2018accurate}, \cite{nie2018adaptively} and \cite{shin2019sample} for detailed discussions and illustrations. As a result, naive statistical analyses that ignore the adaptive sampling generally lead to invalid statistical inference. This underscores the need for rigorous post-bandit statistical inference methods. \cite{zhang2020inference, zhang2021statistical} and \cite{hadad2021confidence} studied statistics that reweight the observations in classical estimators in order to recover asymptotic normality of estimators or other desired properties. \cite{bibaut2025demystifying} surveys such approaches and discusses open questions.

The distribution of arm-pull frequencies and the regret strongly depend on the ``gap'' between the mean of the best arm and the means of the other arms. When one studies this behavior for $\n\to\infty$, where $\n$ denotes the number of rounds the bandit is run, one can consider the means of the arms to be constant or allow for gaps shrinking to zero. \cite{kuang2024weak} and \cite{fan2025diffusion} showed for arm means of size $O(\n^{-1/2})$, ``weak signal asymptotics'', that the limiting behavior of the arm-pull frequencies and the sample means can be described by a set of coupled Stochastic Differential Equations (SDEs) for several policies including Thompson sampling.\footnote{\cite{kuang2024weak} extends their results, developed under weak signal asymptotics, to equal-arms asymptotics, which allows the common reward parameter $\mu$ to take non-zero values (with $\mu = 0$ corresponding to weak signal asymptotics), by replacing Thompson sampling by its translation-invariant version. This indicates that imposing $\mu = 0$ by those papers is not an innocent assumption for notational convenience.} Henceforth we will refer to these papers by KWFG. In a similar setup \cite{adusumilli2021risk} studied the local asymptotic behavior of the regret.

This paper contributes to the literature on valid post-bandit statistical inference methods. Instead of focusing on a specific class of statistics as starting point, we will first consider the (statistical) limit experiment in the H\'ajek-Le Cam sense (see, for example, \cite{le1986asymptotic}, \cite{hajek1970characterization}, or \cite{van2000asymptotic}) for the CMAB problem.  While the derivation of a limit experiment requires some effort, it yields powerful tools. Firstly, if you know the asymptotic behavior of a statistic under a probability distribution, then the limit experiment (via Le Cam’s third lemma; see, for example, \citet[Theorem~6.6]{van2000asymptotic}) immediately yields the asymptotic distribution under local alternatives (i.e., contiguous probability measures). In the context of analyzing a test statistic this means that we ``only'' need to derive its null distribution. The local asymptotic power function of the test statistic ``easily'' follows via the limit experiment. In this way tedious ``triangular array arguments'' are avoided. Secondly, the so-called Asymptotic Representation Theorem (see, for example, \cite[Chapter~9]{van2000asymptotic}) can be used to derive bounds to the (asymptotic) performances (for example, power) of statistics, to study the asymptotic properties of existing inferential procedures, and to leverage insights from the limit to guide the development of new inferential tools.

The contributions of this paper are threefold. Firstly, this paper obtains the limit experiment for the CMAB problem. We prove that, for arm-means that are $O(\n^{-1/2})$ apart, ``equal-arms (local) asymptotics'', the log-likelihood ratios of a CMAB model converges to those of a Locally Asymptotically Quadratic (LAQ)  experiment (\cite{Jeganathan1995}).\footnote{While this limit experiment is LAQ, it is not of the familiar Locally Asymptotically Normal type (as one gets for smooth parametric models for i.i.d.\ data and sufficiently ergodic and stationary time series), and also not of the  Locally Asymptotically Mixed Normal (LAMN) or Locally Asymptotically Brownian Functional (LABF) types (as one often gets for smooth parametric models for nonstationary time series).}$^{,}$\footnote{Although we do not consider the \textit{batched} bandit setting in this paper, we would like to mention \cite{hirano2025asymptotic} recently derived the limit experiment for the batched bandits. This experiment, however, is of a very different type compared to the one associate with (C)MAB problems under continuously updated policies. \cite{chen2023optimal} develop inference procedures exploiting their limit experiment.}  Using Girsanov’s theorem, we show that a system of coupled SDEs provides a structural description of the limit experiment. Secondly, we establish a weak convergence result for a general class of statistics in CMAB, under suitable regularity conditions, jointly with the likelihood ratio process. facilitates invoking the aforementioned Asymptotic Representation Theorem and Le~Cam's third lemma for a wide range of statistics. As a result, the limiting behaviors of these statistics can be characterized by their associated SDEs. In the special case of the non-contextual bandit, our results yield the SDEs in KWFG. We thus i) provide an alternative proof for the ``diffusion results'' in KFWG, ii) extend these results to contextual bandits, and iii) enable the study of inference via the obtained SDEs. 

% Secondly, in case of the non-contextual bandit and restricting to Gaussian arm-distributions, our results yield the SDEs in KWFG. We thus i) provide an alternative proof for the ``diffusion results'' in KFWG and ii) extend their ``diffusion results'' to the contextual case and non-Gaussian arms. 
 
Thirdly, we investigate the hypothesis, in two-armed MAB and CMAB problems, that both arms have equal (unknown) means. (For the two-armed MAB, we also consider a hypothesis on the mean of a single arm.) We analyze three inference procedures that have been considered in, for example, \cite{hadad2021confidence}, \cite{zhang2021statistical}, and \cite{bibaut2025demystifying}: (i) the classical $t$-test, (ii) an Adaptively Weighted (AW) estimator, and (iii) an Inverse Probability Weighted (IPW) estimator.\footnote{In the case of comparing the two arms, we propose a two-sample version of the AW statistic.} It turns out that the null distribution of the t-test is ``unstable'' (see Sections~\ref{subsec:MAB_comparetwoarms} and~\ref{subsec:CMAB_comparetwoarms} for details) in case standard policies such as Thompson sampling, UCB, or $\epsilon$-greedy/tempered-greedy are used. The reason is that such policies are not \textit{translation-invariant}: adding a constant to all rewards---which is possible under our composite null---alters the sampling probabilities, thereby causing the distribution of the t-test to change. We rigorously introduce translation invariance of policies for the MAB/CMAB problem and propose modified policies for Thompson sampling, tempered-greedy, and a newly introduced tempered-UCB/tempered-LinUCB algorithm that satisfy this property.\footnote{Other popular policies like (a tempered version of) Explore-Then-Commit can also be analyzed within our framework.} This notion of translation invariance might be of independent interest, for instance, in regret analysis, which we leave to future work. For each aforementioned test statistic, we derive its asymptotic null distribution and show how critical value can be obtained. For the valid tests, we further derive local asymptotic power functions by exploiting the structural limit experiment in SDEs. While the two-sample AW test is appealing due to its asymptotic standard normal distribution under the null, the two-sample t-test (with a nonstandard limiting distribution) has significantly higher power for the policies we investigated. We corroborate the asymptotic results with Monte Carlo simulations.

% For each, we derive its asymptotic null distribution and examine whether it is asymptotically valid. We find that the use of standard sampling policies, as Thompson sampling, in general yields invalid tests. We identify a property called translation invariance: roughly, the sampling policy should depend on observed rewards only through differences (Arm 2 minus Arm 1) rather than their absolute levels. Intuitively, for testing equal means, if adding a constant to all rewards can alter the algorithm’s behavior, the test statistic’s null distribution can shift, invalidating the test. We rigorously define this invariance for the CMAB problem and propose modified policies that satisfy translation invariance. But even when we use translation-invariant sampling policies, it turns out, perhaps surprisingly, that not all of the three aforementioned statistics are asymptotically valid for the hypotheses we consider. For each of the three test statistics we point out, for each of the hypotheses considered, whether the test is asymptotically valid and derive its asymptotic null distribution, and show how critical values can be obtained. For the tests that are asymptotically valid, we also present local asymptotic powers. All theoretical results are corroborated with Monte Carlo simulations

Our paper is organized as follows. Section~\ref{sec:setup} sets up the stochastic contextual multi-armed bandit problem. In Section~\ref{sec:limitexperiment}, we develop the CMAB limit experiment and its structural representation written in SDEs, along with those of a class of statistics intended for inference. These results are then applied to the analysis of hypothesis tests in Section~\ref{sec:MAB} and Section~\ref{sec:CMAB}, for the non-contextual and contextual bandits, respectively, with corresponding Monte Carlo studies presented in each section. Section~\ref{sec:conclusion} concludes the paper and our findings.

\section{Setup} \label{sec:setup}
Consider the following multi-armed contextual bandit problem. At the beginning of each time step $\ii$, where $\ii\in\Sn\defeq\{1,\dots,\n\}$, an agent observes an exogenous variable $\vX_{\ii}$ collecting contextual information.\footnote{Non-contextual bandits can be embedded in this framework by using $\vX_{\ii} = 1$.} We assume the contextual variables $\vX_{\ii}\in\SX\subset\rR^{q}$, to be independently and identically distributed. We denote by $\measure_X$ the corresponding probability measure. Subsequently, on basis of previously collected observations (in a way that will be made precise below) and the new context $\vX_{\ii}$, the agent chooses one of $K \geq 2$ possible arms, or treatments, $\A_{\ii}\in\SK\defeq\{1,\dots,K\}$. Each arm is associated with an unknown probability distribution of outcomes. All outcomes are assumed to be mutually independent, both over arms and over time. Let $\Zkt$ denote the $\rR$-valued potential outcome of arm $k \in \SK$ at time $\ii$. The agent only observes $\Y_{\ii} = \ZAt$. 
%We also assume that the outcomes $\Zkt$ are independent of contextual variables $\vX_{\ii}$. 

For $k=1,\dots,K$, $\Zkt$ given $\vX_{\ii}$ has law $\mathcal{L}_{\pintb}(\Z_k\cond\vX)$, where $\pintb$ is a common parameter in an open parameter space $\Pintb\subset\rR^{p}$. We assume that, with respect to some $\sigma$-finite dominating measure $\measure$, densities $\fzk(\cdot \cond \vx,\pintb)$ of $\mathcal{L}_{\pintb}(\Z_k\cond\vX=\vx)$ exist, $\vx\in\SX$. Furthermore, we impose the following \textit{Differentiable in Quadratic Mean (DQM)} condition on these densities.

\medskip
\begin{assumption} \label{assm:DQM}
For all $k\in\SK$, $\pintb\in\Pintb$, and for $\measure_X$-almost every $\vx\in\SX$, the densities $\fzk$ are strictly positive and differentiable in quadratic mean (DQM) at $\pintb$, that is
\begin{align*}
\frac{\sqrt{\fzk(\Z\cond\vx,\pintb + \boldsymbol{\omega})}}{\sqrt{\fzk(\Z\cond\vx,\pintb)}} = 1 + \frac{1}{2}\left(\scorebk(\Z\cond\vx)^\trans\boldsymbol{\omega} + r_k(\Z\cond\vx,\boldsymbol{\omega})\right),
\end{align*}
for all $\boldsymbol{\omega}\in\rR^p$ such that $\pintb + \boldsymbol{\omega}\in \Pintb$, and some $p$-vector $\scorebk(\cdot\cond\vx)$, called score functions, with $\Exp\big[|\scorebk(\Z_{k}\cond\vX)|^2\big] < \infty$, where the remainder term $r_k$ satisfies $\Exp\big[r_k^2(\Z_{k}\cond\vX,\boldsymbol{\omega})\cond\vX=\vx\big] = o(|\boldsymbol{\omega}|^2)$ and $\Exp\big[r_k^2(\Z_{k}\cond\vX,\boldsymbol{\omega})\big] = o(|\boldsymbol{\omega}|^2)$.

Moreover, we assume $(\partial\Exp_{\pintb + \boldsymbol{\omega}}[\Z_{k}]/\partial\omega_1,\dots, \partial\Exp_{\pintb + \boldsymbol{\omega}}[\Z_{k}]/\partial\omega_p)^\trans$ exists.
\end{assumption}
\medskip

The DQM condition (Assumption~\ref{assm:DQM}) implies that $\Exp\big[\scorebk(\Z_{k}\cond\vX)\cond\vX\big] = \mfzero$ and the $p\times p$ Fisher information matrix $\FJbpintk(\vX) \defeq \Exp\big[\scorebk(\Z_{k}\cond\vX)\scorebk(\Z_{k}\cond\vX)^\trans\cond\vX\big]$ exists (see \citet[Theorem 7.2]{van2000asymptotic}), almost surely. Moreover, the moment condition on the score function in Assumption~\ref{assm:DQM} ensures that $\Exp[\FJbpintk(\vX)]$ is finite.

\begin{remark}
At this stage we do not assume that $\FJbpintk(\vX)$ is positive definite a.s., e.g., some elements of $\pintb$ may be specific to a single arm only. This happens for instance in the simple two-arm non-contextual bandit problem where the rewards for both arms are two unrestricted parameters.
\end{remark}

\begin{remark}
Following classical arguments it is probably possible to relax the assumption that $\fzk(\Z\cond\vX,\pintb) > 0$, but as it blurs some of our insights we have chosen to leave this for future work.
\end{remark}

%The goal of the agent is to sequentially select actions in order to maximize the cumulative reward obtained over time. To this end, the agent will face the trade-off of exploration and exploitation: On one hand, she needs to explore different actions to learn about their reward distributions and identify the most rewarding actions, while on the other hand, she also needs to exploit the actions that are currently believed to be the most rewarding based on the available information. 

The agent is allowed to update her sampling strategy according to all the information available at each moment $\ii$. Formally, we define the filtration $(\filtr_{\ii})_{\ii \geq 1}$ through
\begin{align*}
\filtr_{\ii} \defeq \sigma\left((\vX_{\jj},\A_{\jj},\Y_{\jj}): \jj = 1,\dots,{\ii}\right),
\end{align*}
which collects the historical information of contexts, actions, and rewards up to and including time $\ii$. The agent chooses the $(\ii+1)$-th action $A_{\ii+1}$ via a draw from a multinomial distribution conditional on $\filtr_{\ii}$ and the newly observed context $\vX_{\ii+1}$. We call a strategy \textit{feasible} if it satisfies the following assumption. 

\smallskip
\begin{assumption} \label{assm:policy_feasible}
For all $\ii = 1,\ldots,\n-1$, the conditional sampling probability
\begin{align*}
\pi_{\ii+1}(k \cond \vX_{\ii+1},\filtr_{\ii}) \defeq \prob(A_{\ii+1} = k \cond \vX_{\ii+1},\filtr_{\ii}),
\end{align*}
for $k = 1,\dots,K$, does not depend on $\pintb$.
\end{assumption}
\smallskip

Assumption~\ref{assm:policy_feasible} basically states that the agent has no knowledge of the true value of $\pintb$. However, it is worth noting that the unconditional distribution of $A_{\ii}$ typically \textit{does} depend on $\pintb$. In applications, the dependence of the conditional sampling probability $\pi_{\ii+1}$ on $\filtr_{\ii}$ is usually through some summary statistics. We also adopt this approach; see Assumption~\ref{assm:policy_limit} below. 

In this paper, we focus on the inferential problem of $\pintb$, particularly the testing problems detailed in Section~\ref{sec:MAB} (for non-contextual bandits) and Section~\ref{sec:CMAB} (for contextual bandits). In Section~\ref{sec:limitexperiment}, we first develop the limit experiment for the general case. This limiting framework, in turn, allows us to readily analyze the asymptotic behavior of the original sequence of bandit experiment---specifically, for our purpose, the asymptotic properties of statistics (see the end of Section~\ref{sec:limitexperiment} for details). While our framework is primarily used for inference in the present paper, it can also be seamlessly applied to other tasks, such as designing sampling schemes, which we leave for future work.

\section{Limit Experiment} \label{sec:limitexperiment}
We derive the limit experiment for the contextual bandit problem described in Section~\ref{sec:setup} at some fixed point $\pintb\in\Pintb$, typically where the arms have equal expected rewards—we refer to this as the \textit{equal-arms asymptotics}. It turns out that, given our parametric setup, we need to use $\sqrt{\n}$ as localizing rate.

\subsection{Quadratic expansion of likelihood ratios}
Localizing the parameter of interest at $\pintb$ as
\begin{equation} \label{eqn:localparameter_general}
\pintb_{\n} = \pintb + \frac{\lpintb}{\sqrt{\n}},
\end{equation}
we denote by $\lawn_{\pintb,\lpintb}$ the law of $(\vX_{1},\A_{1},\Y_{1},\dots,\vX_{\n},\A_{\n},\Y_{\n})$
%$(\vX_{1},\dots,\vX_{\n},\A_{1},\dots,\A_{\n},\Y_{1},\dots,\Y_{\n})$ 
generated by the aforementioned stochastic contextual $K$-armed bandit problem. Formally, we define the sequence of experiments as
\begin{align*}
\experimentn_{\pintb} \defeq \left(\Omega^{(\n)}, \mathcal{F}^{(\n)}, \left(\lawn_{\pintb,\lpintb} : \lpintb\in\rR^{p}\right)\right), ~~~ \n\in\rN,
\end{align*}
where $\Omega^{(\n)} = {\SX}^{\n}\otimes\rR^{\n}\otimes\SK^{\n}$ and $\filtrn = \mathcal{B}\left(\SX^{\n}\otimes\rR^{\n}\otimes\SK^{\n}\right)$, the Borel $\sigma$-field.

From the previous arguments, using Assumptions~\ref{assm:DQM} and \ref{assm:policy_feasible}, and using that the distribution of the contexts $\vX_{\ii}$ does not depend on $\pintb$, the log-likelihood ratio equals
\begin{equation*}
\begin{aligned}
    \log\frac{\dd\lawn_{\pintb,\lpintb}}{\dd\lawn_{\pintb,\mfzero}} 
&= \log\frac{\prod_{\ii=1}^{\n} \pi_{\ii}(\A_{\ii}\cond\vX_{\ii},\filtr_{\ii-1})\f_{A_\ii}(\Y_{\ii}\cond\vX_{\ii},\pintb_{\n})}{\prod_{\ii=1}^{\n} \pi_{\ii}(\A_{\ii}\cond\vX_{\ii},\filtr_{\ii-1})\f_{A_{\ii}}(\Y_{\ii}\cond\vX_{\ii},\pintb)}  \\
&= \sum_{\ii=1}^{\n} \log\frac{\f_{A_{\ii}}(\Y_{\ii}\cond\vX_{\ii},\pintb_{\n})}{\f_{A_{\ii}}(\Y_{\ii}\cond\vX_{\ii},\pintb)}  \\
&= \sum_{\ii=1}^{\n}\sum_{k=1}^{K}\indicator_{\{A_{\ii} = k\}} \log\frac{\fzk(\Y_{\ii}\cond\vX_{\ii},\pintb_{\n})}{\fzk(\Y_{\ii}\cond\vX_{\ii},\pintb)}
\defeq \sum_{k=1}^{K}\LLRn_{\pintb,k}(\lpintb).
\end{aligned}
\end{equation*}
In the following proposition, we provide a quadratic expansion of this log-likelihood ratio which shows that the model exhibits the LAQ property (see \cite{Jeganathan1995}, \citet[Section~9.6]{van2000asymptotic}).

\smallskip
\begin{proposition}\label{prop:LAQ}
Let Assumptions \ref{assm:DQM}--\ref{assm:policy_feasible} hold. Under $\lawn_{\pintb,\mfzero}$, we have, for $k\in\SK$ and $\lpintb\in\rR^{p}$, the decomposition
\begin{align} \label{eqn:loglikelihoodratio_sequence}
\LLRn_{\pintb,k}(\lpintb)  
= \lpintb^\trans\CSb_{k,\n} - \frac{1}{2}\lpintb^\trans\QVb_{k,\n}\lpintb + \op(1), 
\end{align}
where the $\rR^p$-valued random variables $\CSb_{k,\ii}$ and the $\rR^{p \times p}$-valued random matrices $\QVb_{k,\ii}$ are defined by
\begin{align*}
\CSb_{k,\ii} \defeq&~ \frac{1}{\sqrt{\n}}\sum_{\jj=1}^{\ii}\indicator_{\{A_{\jj} = k\}}\scorebk(\Y_{\jj}\cond\vX_{\jj}), \\
\QVb_{k,\ii} \defeq&~ \frac{1}{\n}\sum_{\jj=1}^{\ii}\indicator_{\{A_{\jj} = k\}}\FJbpintk(\vX_{\jj}).
\end{align*}
\end{proposition}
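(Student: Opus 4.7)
The plan is to adapt Le Cam's classical quadratic-expansion argument for DQM parametric models (in the style of van der Vaart Lemma~7.6 / Theorem~7.2) to the adaptive setting by exploiting the predictable structure induced by the filtration $(\filtr_\ii)$. The key observation is that under Assumption~\ref{assm:policy_feasible}, the arm indicator $\indicator_{\{A_\ii = k\}}$ and the context $\vX_\ii$ are both $\filtr_{\ii-1}\vee\sigma(\vX_\ii,A_\ii)$-measurable, while the conditional law of $\Y_\ii$ given $\{\filtr_{\ii-1},\vX_\ii,A_\ii=k\}$ under $\lawn_{\pintb,\mfzero}$ has density $\fzk(\cdot\cond\vX_\ii,\pintb)$. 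Therefore the DQM expansion in Assumption~\ref{assm:DQM} can be applied term by term with $\boldsymbol{\omega}=\lpintb/\sqrt{\n}$, and each summand has tractable conditional moments.

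The execution breaks into three steps. First, using $g_{k,\ii}\defeq\sqrt{\fzk(\Y_\ii\cond\vX_\ii,\pintb_\n)/\fzk(\Y_\ii\cond\vX_\ii,\pintb)}$, write $2\log g_{k,\ii} = 2(g_{k,\ii}-1) - (g_{k,\ii}-1)^2 + R_{k,\ii}$, where $R_{k,\ii}$ is a remainder that is cubic near $g_{k,\ii}=1$. Substituting the DQM identity $g_{k,\ii}-1 = \tfrac12(\scorebk^\trans\boldsymbol{\omega}+r_{k,\ii})$ gives the decomposition of $\LLRn_{\pintb,k}(\lpintb)$ into a linear piece $\tfrac{1}{\sqrt{\n}}\sum_{\ii}\indicator_{\{A_\ii=k\}}\scorebk(\Y_\ii\cond\vX_\ii)^\trans\lpintb = \lpintb^\trans\CSb_{k,\n}$, a quadratic piece $-\tfrac{1}{4\n}\sum_{\ii}\indicator_{\{A_\ii=k\}}(\scorebk^\trans\lpintb)^2$, and remainder terms involving $r_{k,\ii}$ and $R_{k,\ii}$. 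Second, for the quadratic piece, I would center the summands at their conditional expectations: conditional on $\filtr_{\ii-1}\vee\sigma(\vX_\ii,A_\ii)$, one has $\Exp[\scorebk\scorebk^\trans\cond\vX_\ii] = \FJbpintk(\vX_\ii)$, so the difference $\indicator_{\{A_\ii=k\}}[(\scorebk^\trans\lpintb)^2-\lpintb^\trans\FJbpintk(\vX_\ii)\lpintb]/\n$ forms a mean-zero martingale difference array whose $\op(1)$ vanishing I would establish via a truncation argument combined with conditional Chebyshev (using $\Exp|\scorebk|^2<\infty$ and $\Exp[\FJbpintk(\vX)]<\infty$). Third, the $r_{k,\ii}$-remainder contributes an extra term whose conditional mean can be computed from the Hellinger identity $\Exp[r_{k,\ii}\cond\vX_\ii] = -\tfrac14\boldsymbol{\omega}^\trans\FJbpintk(\vX_\ii)\boldsymbol{\omega} + o(|\boldsymbol{\omega}|^2)$, which combines with the quadratic piece to produce exactly the factor $-\tfrac12\lpintb^\trans\QVb_{k,\n}\lpintb$; the fluctuations of $\sum_\ii\indicator r_{k,\ii}$ around its conditional mean are controlled by another application of the conditional $L^2$-bound in Assumption~\ref{assm:DQM}, and the cubic cubic remainder $R_{k,\ii}$ is handled by splitting on whether $|g_{k,\ii}-1|$ is small.

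The main obstacle is the second step: establishing the weak law of large numbers
\begin{equation*}
\frac{1}{\n}\sum_{\ii=1}^{\n}\indicator_{\{A_\ii=k\}}\bigl[(\scorebk(\Y_\ii\cond\vX_\ii)^\trans\lpintb)^2 - \lpintb^\trans\FJbpintk(\vX_\ii)\lpintb\bigr] = \op(1),
\end{equation*}
because the summands are only assumed square-integrable (so the naive variance bound fails) and because the adaptive weights $\indicator_{\{A_\ii=k\}}$ prevent a direct appeal to an i.i.d.\ LLN. My approach would be a standard truncation: split $\scorebk\scorebk^\trans$ at a level $M_\n\to\infty$ slowly, apply the conditional Chebyshev inequality to the bounded part (which yields $\op(1)$ since the martingale differences have variance $O(M_\n^2/\n)$), and control the unbounded part by uniform integrability of $|\scorebk|^2$ together with $\sum_\ii\indicator_{\{A_\ii=k\}}/\n\leq 1$. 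Once this WLLN is in place, assembling the three pieces yields the LAQ decomposition~\eqref{eqn:loglikelihoodratio_sequence}.
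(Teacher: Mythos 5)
Your proposal is correct, but it follows a genuinely different route from the paper. The paper does not redo the Taylor--Hellinger analysis: it writes the per-observation square-root likelihood ratio as $1+\tfrac12\lpintb^\trans\vS_{\n\ii}+\tfrac12 R_{\n\ii}$ with $\vS_{\n\ii}=\n^{-1/2}\indicator_{\{A_{\ii}=k\}}\scorebk(\Z_{k,\ii}\cond\vX_{\ii})$ and invokes the general quadratic-expansion result of \citet[Proposition~1]{hallin2015quadratic}, so the whole proof reduces to checking that result's conditions with respect to the enlarged filtration $\filtr_{\n,\ii-1}=\sigma(\filtr_{\ii-1}\cup\sigma(\vX_{\ii}))$: conditional centering of $\vS_{\n\ii}$, an $\Op(1)$ bound on $\sum_{\ii}\Exp[\vS_{\n\ii}\vS_{\n\ii}^\trans\cond\filtr_{\n,\ii-1}]$ (which is essentially $\QVb_{k,\n}$), a conditional Lindeberg condition obtained by bounding $\pi_{\ii}(k\cond\cdot)\le 1$ and applying dominated convergence, the remainder bound $\sum_{\ii}\Exp[R_{\n\ii}^2\cond\cdot]=\op(1)$ from the DQM rate, and the identity $\sum_{\ii}(1-\Exp[LR_{\n\ii}\cond\cdot])=0$ from strict positivity of the densities. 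You instead reprove that lemma from first principles: the expansion $2\log g=2(g-1)-(g-1)^2+O(|g-1|^3)$, the Hellinger computation of $\Exp[r_{k,\ii}\cond\vX_{\ii}]$ which supplies half of the $-\tfrac12\lpintb^\trans\QVb_{k,\n}\lpintb$, the truncation-plus-conditional-Chebyshev WLLN for the other half, and the cubic remainder via $\max_{\ii}|g_{k,\ii}-1|=\op(1)$. The analytic ingredients are the same in both arguments (the martingale-difference structure of the scores given $\filtr_{\ii-1}\vee\sigma(\vX_{\ii},A_{\ii})$, the bound $\pi_{\ii}\le 1$, dominated convergence for the tails), so your version buys self-containedness and transparency about where each piece of the quadratic term comes from, at the cost of carrying the cross terms and the cubic remainder by hand, while the paper's version is shorter. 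One detail to make explicit in a full write-up: when you sum your pointwise Hellinger identity over the $\n$ rounds, the $o(|\boldsymbol{\omega}|^2)$ error must be controlled in $L^1$ uniformly across $\ii$, which is precisely why Assumption~\ref{assm:DQM} imposes the unconditional bound $\Exp\big[r_k^2(\Z_{k}\cond\vX,\boldsymbol{\omega})\big]=o(|\boldsymbol{\omega}|^2)$ in addition to the conditional one; combine it with Cauchy--Schwarz to handle the cross term $\Exp\big[\scorebk(\Z_{k}\cond\vX)^\trans\boldsymbol{\omega}\,r_k\big]$.
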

\smallskip

The proof of Proposition~\ref{prop:LAQ}, which exploits \cite{hallin2015quadratic}, is provided in Appendix~\ref{appsec:proofs}.

%\begin{remark}
%\bo{A remark on relaxing rates to $\rateM_\n$.} 
%\end{remark}
%
%
%\begin{remark}
%\bo{On clipping condition to guarantee the convergence rate} There exists a rate $\clipp(\n)\in(0,0.5)$ for all arms $k\in\SK$ such that
%\begin{equation} \label{eqn:clipp}
%\lim_{\n\to\infty} \prob\left(\policy_k^{(\n)} \in [\clipp(\n), 1-\clipp(\n)]\right) = 1.
%\end{equation}
%Without it we may have zero Fisher information (e.g., under UCB algorithm).
%\end{remark}

\subsection{Weak convergence of relevant processes}
In order to derive the limit experiment for our contextual bandit problem, we establish weak convergence of the central sequence and the Fisher information appearing in Proposition~\ref{prop:LAQ}, jointly with some other statistics of interest. This will allow us to describe the limit experiment in Section~\ref{subsec:limitexperiment_likelihood} and study commonly used inference methods in subsequent sections. 
% KWFG proved similar results for the non-contextual case. We provide a self-contained proof which generalizes results to the contextual case and a more general class of statistics.

For this analysis, we need the joint limiting behavior of the following partial-sum processes, for $k\in\SK$:
\begin{equation*} 
\begin{aligned}
\parsumfb_{k,\ii} \defeq &~ \frac{1}{\sqrt{\n}}\sum_{\jj=1}^{\ii}\indicator_{\{A_{\jj} = k\}}\funfb_k(\Y_{\jj}, \vX_{\jj} \cond \Ubparsum_{\jj-1}), \\
\parsumqb_{k,\ii} \defeq &~ \frac{1}{\n}\sum_{\jj=1}^{\ii}\indicator_{\{A_{\jj} = k\}}\funqb_k(\Y_{\jj}, \vX_{\jj}, \Ubparsum_{\jj-1}),
\end{aligned}
\end{equation*}
for vector-valued functions $\funfb_k$ and $\funqb_k$. Here, $\Ubparsum_{\ii} \defeq ({\parsumfb_{\ii}}^\trans,{\parsumqb_{\ii}}^\trans)^\trans$, with $\parsumfb_{\ii} \defeq \big({\parsumfb_{1,\ii}}^\trans,\dots,{\parsumfb_{K,\ii}}^\trans\big)^\trans$ and $\parsumqb_{\ii} \defeq \big({\parsumqb_{1,\ii}}^\trans,\dots,{\parsumqb_{K,\ii}}^\trans\big)^\trans$. We denote by $m_1$ and $m_2$ the dimensions of $\funfb_k$ and $\funqb_k$, respectively. For readability, we omit indexing $\Ubparsum_{\ii}$ by $\n$ and do not explicitly indicate the dimension $m = (m_1+m_2)\times K$ of the underlying components of $\Ubparsum_{\ii}$. 

We assume that $\funfb_k$ and $\funqb_k$ satisfy, for all $k\in\SK$, $\ii = 1,\dots,\n$, and under $\lawn_{\pintb,\mfzero}$, 
\begin{equation} \label{eqn:functioncondition_moment_I}
\Exp_{\pintb}\big[\funfb_k(\Y_{\ii}, \vX_{\ii} \cond \Ubparsum_{\ii-1})\cond\A_{\ii} = k, \Ubparsum_{\ii-1}\big] = 0  
\end{equation}
and there exists $\delta > 0$ such that, for all $\vu\in\rR^{m}$,
\begin{equation} \label{eqn:functioncondition_moment_II}
\begin{aligned} 
&~ \Exp_{\pintb}\big[\|\funfb_k(\Z_{k}, \vX \cond \vu) \|^{2+\delta}\big] < \infty ~~\text{and}~~ \Exp_{\pintb}\big[\|\funqb_k(\Z_{k}, \vX, \vu)\|^{2+\delta}\big] < \infty, 
\end{aligned}
\end{equation}
Moreover, we impose the following assumption on the sampling policy, which states that $\Ubparsum_{\ii-1}$ forms a sufficient statistic for the selection strategy at round $\ii$.

\smallskip
\begin{assumption} \label{assm:policy_limit}
We assume that the sampling probabilities are described via deterministic functions $\policy_k^{(\n)}$ such that, for $k \in \SK$ and $\ii=1,\dots,\n$,
\begin{equation} \label{eqn:policy_k}
\pi_{\ii}(k\cond\vX_{\ii},\filtr_{\ii-1}) = \policy_k^{(\n)}\left(\Ubparsum_{\ii-1},\vX_{\ii}\right).
\end{equation}
Moreover, for all $k\in\SK$, there exists a continuous function $\policy_k(\vu,\vx)$ such that $\vu \mapsto \Exp[\policy_k(\vu,\vX)]$\footnote{The expectation $\Exp$ refers to integrating with respect to the distribution $\measure_X$ of each $\vX_\ii$ and the potential outcomes $\Z_{k}$.} exists, is continuous, and strictly positive for all $\vu$. Additionally, the following convergences hold uniformly over bounded sets of $\vu$: 
\begin{align*}
\Exp\left[\policy_k^{(\n)}(\vu,\vX)\funfb_{k}(\Z_{k}, \vX \cond \vu)\funfb_{k}(\Z_{k},\vX\cond\vu)^\trans\right] &\to \Epolicyb_k^{\circ}(\vu), \\
\Exp\left[\policy_k^{(\n)}(\vu,\vX)\funqb_{k}(\Z_{k},\vX,\vu)\right] &\to \Epolicyb_k^{\star}(\vu),
\end{align*}
where $\Epolicyb_k^{\circ}(\vu) \defeq \Exp\left[\policy_k(\vu,\vX)\funfb_{k}(\Z_{k},\vX\cond\vu)\funfb_{k}(\Z_{k},\vX\cond\vu)^\trans\right]$ and $\Epolicyb_k^{\star}(\vu) \defeq \Exp\big[\policy_k(\vu,\vX)\funqb_{k}(\Z_{k},\vX,\vu)\big]$. 
\end{assumption}
\smallskip

It is easily seen that, for each $\n$, the process $\Ubparsum_{\ii} = \Ubparsum^{(\n)}_{\ii}$, $\ii = 1,\dots,\n$, is a time-homogeneous Markov chain, as $\vX_{\ii}$ is independent of $\filtr_{\ii-1}$ and $\Y_{\ii}$ is independent of $\filtr_{\ii-1}$ conditionally on $\vX_{\ii}$ and $\A_{\ii}$. In Proposition~\ref{prop:weakconvergence_general}, building on KWFG, we demonstrate that an appropriately time-changed version of $\Ubparsum_{\ii}$ converges weakly to the solution of a system of stochastic differential equations. 

%\ramon{Ramon: IMPORTANT: $\Ubparsum$ may depend on $\pintb_0$! If $\pintb_0$ is not known in application, you cannot implement sampling scheme and there is violation of Assumption 2.}

\smallskip
\begin{proposition} \label{prop:weakconvergence_general}
Embed $\Ubparsum_{\ii}$ in $D^{m}[0,1]$ by defining $\widebar\Ubparsum(\uu) \defeq \Ubparsum_{\lfloor\uu\n\rfloor}$, for $\uu\in[0,1]$, where $\lfloor\cdot\rfloor$ takes the integer part of the argument. Assume that Assumptions~\ref{assm:DQM}--\ref{assm:policy_limit} hold and that a unique solution to (\ref{eqn:limit_QVVV}) below exists. Then, under $\lawn_{\pintb,\mfzero}$ and as $\n\to\infty$, 
\begin{equation}
\widebar\Ubparsum
\wto 
\Ublim
\defeq
\begin{pmatrix}
( \parsumfb_{k} )_{k=1}^{K} \\
( \parsumqb_{k} )_{k=1}^{K}
\end{pmatrix},
\end{equation}
where
\begin{equation} \label{eqn:limit_QVVV}
\begin{aligned}
\dd\parsumfb_{k}(\uu) \defeq&~ \sqrt{\Epolicyb_k^{\circ}(\Ublim(\uu))} \dd\vW_{k}(\uu), \\
\dd\parsumqb_{k}(\uu) \defeq&~ \Epolicyb_k^{\star}(\Ublim(\uu)) \dd\uu,
\end{aligned}
% \begin{aligned}
% \dd\parsumfb_{k}(\uu) \defeq&~ \sqrt{\Exp\big[\policy_k(\Ublim(\uu),\vX)\funfb_{k}(\Z_{k},\vX \cond \Ublim(\uu))\funfb_{k}(\Z_{k},\vX \cond \Ublim(\uu))^\trans\big]} \dd\vW_{k}(\uu), \\
% \dd\parsumqb_{k}(\uu) \defeq&~ \Exp\big[\policy_k(\Ublim(\uu),\vX)\funqb_{k}(\Z_{k},\vX,\Ublim(\uu))\big] \dd\uu,
% \end{aligned}
\end{equation}
for $\uu\in[0,1]$ and $k\in\SK$.\footnote{We use the square root sign on a matrix $\vA$, denoted $\sqrt{\vA}$, to represent a matrix $\vB$ such that $\vB^2 = \vA$, whenever such a matrix exists.} Here, $\vW_{k}$ is a multivariate Brownian motion with covariance matrix per-unit-of-time given by $\cov\left[\vW_{k}(1),\vW_{k'}(1)\right] = \indicator_{\{k = k'\}}\idm_{m_1}$, $k,k'\in\SK$, where $\idm_{m_1}$ denotes the $m_1$-dimensional identity matrix.
\end{proposition}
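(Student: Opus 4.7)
The plan is to apply the martingale-problem method of Stroock-Varadhan to prove weak convergence of the Markov chain $\Ubparsum_{\ii}$, time-rescaled to $\widebar\Ubparsum(\uu)=\Ubparsum_{\lfloor\uu\n\rfloor}$, to the diffusion $\Ublim$ solving (\ref{eqn:limit_QVVV}). Three ingredients are required: a computation of the pre-generator of $\widebar\Ubparsum$ for each $\n$, uniform-on-compacts convergence of this pre-generator to the generator of the target SDE, and tightness together with compact containment. Since uniqueness of the limiting SDE is assumed, any subsequential limit must then coincide with $\Ublim$, giving convergence of the full sequence.

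For the pre-generator computation I would fix a test function $\phi \in C^2_c(\rR^m)$ and Taylor-expand $\phi(\Ubparsum_{\ii})$ around $\Ubparsum_{\ii-1}$. By construction the increment is $O(\n^{-1/2})$ in the $\parsumfb_k$ coordinates and $O(\n^{-1})$ in the $\parsumqb_k$ coordinates. Taking conditional expectations given $\filtr_{\ii-1}$ and invoking (\ref{eqn:functioncondition_moment_I}), the first-order $\parsumfb_k$ term has mean zero, so only its quadratic contribution survives; first-order $\parsumqb_k$ terms yield the drift at order $\n^{-1}$. Because the indicators $\indicator_{\{\A_{\ii}=k\}}$ are mutually exclusive across $k$, all cross-covariances between different arms vanish. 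Combining this with Assumption~\ref{assm:policy_limit} and the continuity of $\Epolicyb_k^{\circ}$ and $\Epolicyb_k^{\star}$, the scaled pre-generator $\n\,\Exp[\phi(\Ubparsum_{\ii})-\phi(\Ubparsum_{\ii-1})\cond\filtr_{\ii-1}]$ converges uniformly on compacts to
\[
\mathcal{L}\phi(\vu) \;=\; \sum_{k=1}^{K}\Epolicyb_k^{\star}(\vu)^{\trans}\nabla_{\parsumqb_k}\phi(\vu) + \frac{1}{2}\sum_{k=1}^{K}\mathrm{tr}\!\left(\Epolicyb_k^{\circ}(\vu)\nabla^2_{\parsumfb_k}\phi(\vu)\right),
\]
which is precisely the generator of (\ref{eqn:limit_QVVV}). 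The third-order Taylor remainder is controlled in $L^1$ via the $(2+\delta)$-moment bound in (\ref{eqn:functioncondition_moment_II}).

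For tightness and compact containment I would localize the chain at the first time $|\Ubparsum_{\ii}|$ exits a ball of large radius $M$. On the stopped chain the $(2+\delta)$-moment bound of (\ref{eqn:functioncondition_moment_II}), combined with the uniform-on-bounded-sets statements in Assumption~\ref{assm:policy_limit}, yields a uniform-in-$\ii$ bound on conditional second moments of the increments. Aldous' criterion (or equivalently an increment-moment criterion) then gives tightness of the stopped processes in $D^{m}[0,1]$. Non-explosion of $\Ublim$ on $[0,1]$, implicit in the assumed existence and uniqueness of the SDE solution, lets $M$ be chosen so that the exit time exceeds $1$ with probability arbitrarily close to one, transferring tightness back to $\widebar\Ubparsum$ itself. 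Well-posedness of the martingale problem for $\mathcal{L}$, again inherited from the SDE uniqueness assumption, then identifies every subsequential limit with $\Ublim$ and yields $\widebar\Ubparsum\wto\Ublim$.

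The main obstacle is the tightness/compact-containment step: the functions $\funfb_k(\cdot,\cdot\cond\vu)$ and $\funqb_k(\cdot,\cdot,\vu)$ depend on the state $\vu$, so their $(2+\delta)$-moments are only controlled pointwise in $\vu$ rather than uniformly. Making the third-order Taylor remainders uniformly $\op(1)$ in $\ii$ therefore requires the localization argument to interact cleanly with the uniform-on-bounded-sets conditions in Assumption~\ref{assm:policy_limit}, and the integration over the contextual variable $\vX_{\ii}$ (which weights the policy probabilities) must be tracked through each moment bound. This state dependence, absent from the non-contextual diffusion approximation of KWFG, is where most of the additional technical work is concentrated.
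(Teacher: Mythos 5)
Your proposal is correct and follows essentially the same route as the paper: the paper also invokes the Stroock--Varadhan martingale-problem framework, via Theorem~2.1 of Nelson and Foster (1994) (a packaged diffusion-approximation result for Markov chains), and verifies exactly the ingredients you identify --- vanishing drift for the $\parsumfb_k$-components via (\ref{eqn:functioncondition_moment_I}), convergence of the conditional drift and covariance to $\Epolicyb_k^{\star}$ and $\Epolicyb_k^{\circ}$ using Assumption~\ref{assm:policy_limit} and the mutual exclusivity of the arm indicators, and the scaled $(2+\delta)$-moment condition from (\ref{eqn:functioncondition_moment_II}). The only difference is that the cited theorem absorbs the tightness/compact-containment and subsequential-identification steps you propose to carry out by hand.
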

\smallskip

We organize the proof for Proposition~\ref{prop:weakconvergence_general} in Appendix~\ref{appsec:proofs}.

A sufficient condition for the existence of a unique solution to (\ref{eqn:limit_QVVV}) is that both $\vu \mapsto \Epolicyb_k^{\star}(\vu)$ and $\vu \mapsto \sqrt{\Epolicyb_k^{\circ}(\vu)}$ are Lipschitz continuous (see, e.g., Theorem 2.9 in Chapter 5.2 of \citet{karatzas2012brownian}).

\subsection{Likelihood representation of limit experiment} \label{subsec:limitexperiment_likelihood}
To fully leverage the limit experiment framework (see the detailed discussion at the end of this section), we establish joint convergence of the CMAB log-likelihood ratio, $\LLRn_{\pintb,k}(\lpintb)$, along with the following two statistics
\begin{equation} \label{eqn:statistic_general_CS}
\begin{aligned}
\vC_{r_1;k,\ii} \defeq&~ \frac{1}{\sqrt{\n}}\sum_{\jj=1}^{\ii}\frac{\indicator_{\{A_{\jj} = k\}}}{\big(\policy_k^{(\n)}(\UbparsumE_{\jj-1},\vX_\jj)\big)^{r_1}}\vX_\jj(\Y_{\jj} - \Exp_{\pintb}[\Z_{k}]), \\
\vS_{r_2;k,\ii} \defeq&~ \frac{1}{\n}\sum_{\jj=1}^{\ii}\frac{\indicator_{\{A_{\jj} = k\}}}{\big(\policy_k^{(\n)}(\UbparsumE_{\jj-1},\vX_\jj)\big)^{r_2}}\vX_\jj\vX_\jj^\trans,
\end{aligned}
\end{equation}
for $r_1, r_2\in\rR$. Here, $\UbparsumE_{\ii} \defeq \left(({\CSb_{k,\ii}}^\trans)_{k=1}^{K}, ({\QVb_{k,\ii}}^\trans)_{k=1}^{K}, ({\vC_{r_1;k,\ii}}^\trans)_{k=1}^{K}, ({\vS_{r_2;k,\ii}}^\trans)_{k=1}^{K}\right)^\trans$ collects all relevant statistics. These two statistics encompass both the quantities used by the sampling policy and those employed as test statistics in subsequent sections. For instance, in the non-contextual bandits setting (where $\vX_{\ii} = 1$ for all $\ii$), $\vC_{r_1=0;k,\ii}$ and $\vS_{r_2=0;k,\ii}$ reduce to the re-scaled accumulated rewards and sampling frequencies, $\reward_{k, \ii}$ and $\freq_{k,\ii}$, respectively (see their definitions in (\ref{eqn:reward_frequency}) below).
% \begin{equation}
% \begin{aligned}
% \reward_{k, \ii} \defeq&~ \frac{1}{\sqrt{\n}}\sum_{\jj=1}^{\ii}\indicator_{\{A_{\jj} = k\}}(\Y_{\jj} - \Exp_{\pintb}[\Z_{k}]), \\
% \freq_{k,\ii} \defeq&~ \frac{1}{\n}\sum_{\jj=1}^{\ii}\indicator_{\{A_{\jj} = k\}}.
% \end{aligned}
% \end{equation}

We define the sampling policy $\policy^{(\n)}_k$ as a function of $\UbparsumE_{\ii}$ and the new incoming contextual observation $\vX_{\ii+1}$. Applying Proposition~\ref{prop:weakconvergence_general} then yields the following joint weak convergence result.

\smallskip
\begin{proposition} \label{prop:likelihoodweakconvergence}
Let Assumption~\ref{assm:DQM}--\ref{assm:policy_limit} hold. 
\begin{itemize}
\item[(a)] Under $\lawn_{\pintb,\mfzero}$, we have the joint convergence, for $k = 1,\dots,K$ and $r_1,r_2\in\rR$,
\begin{equation}
\begin{aligned}
\LLRn_{\pintb,k}(\lpintb) &\wto \LLRlim_{\pintb,k}(\lpintb) = \lpintb^\trans\CSb_{k}(1) - \frac{1}{2}\lpintb^\trans\QVb_{k}(1)\lpintb, \\
\vC_{r_1;k,\n} &\wto \vC_{r_1;k}(1)  \text{~~and~~}
\vS_{r_2;k,\n} \wto \vS_{r_2;k}(1) 
\end{aligned}
\end{equation}
with, for $\uu\in[0,1]$, 
\begin{equation}
\begin{aligned}
\CSb_{k}(\uu) \defeq&~ \int_0^{\uu}\sqrt{\Epolicyb^{\vartriangle}_{k}(\UblimE(s))}\dd\vW_{k}(s), \\
\QVb_{k}(\uu) \defeq&~ \int_0^{\uu}\Epolicyb^{\vartriangle}_{k}(\UblimE(s))\dd s, \\
\vC_{r_1;k}(\uu) \defeq&~ \int_0^{\uu}\sqrt{\Epolicyb^{\vc}_{r_1;k}(\UblimE(s))}\dd\vW_{\e_k}(s), \\
\vS_{r_2;k}(\uu) \defeq&~ \int_0^{\uu}\Epolicyb^{\vs}_{r_2;k}(\UblimE(s))\dd s, 
\end{aligned}
\end{equation}
where $\UblimE(\uu) \defeq \left(({\CSb_{k}(\uu)}^\trans)_{k=1}^{K}, ({\QVb_{k}(\uu)}^\trans)_{k=1}^{K}, (\vC_{r_1;k}(\uu)^\trans)_{k=1}^{K}, (\vS_{r_2;k}(\uu)^\trans)_{k=1}^{K}\right)^\trans$ collects all processes, $\Epolicyb^{\vartriangle}_{k}(\vu) \defeq \Exp\big[\policy_k(\vu,\vX)\FJbpintk(\vX)\big]$, $\Epolicyb^{\vc}_{r_1;k}(\vu) \defeq \Exp\big[\policy_k^{1-2r_1}(\vu,\vX)\vX\vX^\trans\big]$, and $\Epolicyb^{\vs}_{r_2;k}(\vu) \defeq \Exp\big[\policy_k^{1-r_2}(\vu,\vX)\vX\vX^\trans\big]$. Here, the $p$-dimensional processes $\vW_{k}$ and $\vW_{\e_k}$ form a $2p$-dimensional Brownian motion with $\cov(\vW_{k}(1),\vW_{k}(1)) = \cov(\vW_{\e_k}(1),\vW_{\e_k}(1)) = \idm_p$ and $\cov[\vW_{\e_{k}}(1),\vW_{k}(1)^\trans] = \Exp\left[\vX\vX^\trans\right]^{-1/2}\Exp\big[\vX(\Z_{k}-\Exp_{\pintb}[\Z_{k}])\scorebk(\Z_{k}\cond\vX)^\trans\big]\Exp\left[\FJbpintk(\vX)\right]^{-1/2}$. Brownian motions associated with different arms are mutually independent. 
% Here, $\vW_{\e_k}$ and $\vW_{k}$ are both $p$-dimensional zero-drift Brownian motions, with covariances $\cov[\vW_{\e_k}(1),\vW_{\e_{k'}}(1)] = \indicator_{\{k = k'\}}\idm_p$, $\cov[\vW_{k}(1),\vW_{k'}(1)] = \indicator_{\{k = k'\}}\idm_p$, and $\cov[\sqrt{\Exp\left[\vX\vX^\trans\right]}\vW_{\e_{k'}}(1),\sqrt{\Exp\left[\FJbpintk(\vX)\right]}\vW_{k}(1)] = \indicator_{\{k = k'\}}\Exp\big[\vX(\Z_{k}-\Exp_{\pintb}[\Z_{k}])\scorebk(\Z_{k}\cond\vX)^\trans\big]$ for $k,k'\in\SK$. 
\item[(b)] We have, still under $\lawn_{\pintb,\mfzero}$,
\begin{equation}
\log\frac{\dd\lawn_{\pintb,\lpintb}}{\dd\lawn_{\pintb,\mfzero}} \wto \LLRlim_{\pintb}(\lpintb) \defeq \sum_{k=1}^{K}\LLRlim_{\pintb,k}(\lpintb).
\end{equation}
\item[(c)] Under $\law_{\pintb,\mfzero}$, $\Exp\left[\exp\LLRlim_{\pintb}(\lpintb)\right] = 1$, for all $\lpintb\in\rR^{p}$. 
\end{itemize}
\end{proposition}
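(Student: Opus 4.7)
The plan is to derive parts (a) and (b) as almost immediate corollaries of Propositions~\ref{prop:LAQ} and~\ref{prop:weakconvergence_general}, and to handle part (c) via stochastic calculus applied to the limiting SDE representation. For part (a), I would embed all of $(\CSb_{k,\ii},\QVb_{k,\ii},\vC_{r_1;k,\ii},\vS_{r_2;k,\ii})$ in a single instance of the partial-sum process $\Ubparsum_\ii$ of Proposition~\ref{prop:weakconvergence_general}. Concretely, stack the score $\scorebk(Y\mid\vX)$ and the weighted innovation $\vX(Y-\Exp_\pintb[\Z_k])/\policy_k^{(\n)}(\vu,\vX)^{r_1}$ into $\funfb_k(Y,\vX\mid\vu)$, and stack $\FJbpintk(\vX)$ and $\vX\vX^\trans/\policy_k^{(\n)}(\vu,\vX)^{r_2}$ into $\funqb_k(Y,\vX,\vu)$. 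The DQM hypothesis supplies the conditional centering $\Exp[\scorebk(\Z_k\mid\vX)\mid\vX]=\mfzero$ required by (\ref{eqn:functioncondition_moment_I}), while Assumptions~\ref{assm:DQM} and~\ref{assm:policy_limit} together deliver the moment bounds (\ref{eqn:functioncondition_moment_II}) and identify the limiting kernels as $\Epolicyb^\vartriangle_k$, $\Epolicyb^\vc_{r_1;k}$, and $\Epolicyb^\vs_{r_2;k}$. Across-arm independence of the Brownian drivers is immediate from the disjoint indicators $\indicator_{\{A_\jj=k\}}$ and the mutual independence of the potential outcomes; the cross-covariance between $\vW_k$ and $\vW_{\e_k}$ is then read off from the limiting per-round covariance $\Exp[\policy_k(\vu,\vX)\funfb_k\funfb_k^\trans]$, after diagonalization by the prefactors $\Exp[\FJbpintk(\vX)]^{-1/2}$ and $\Exp[\vX\vX^\trans]^{-1/2}$ hidden in the square-root decomposition of each block. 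Combining this with the quadratic expansion of Proposition~\ref{prop:LAQ} and the continuous mapping theorem yields the joint convergence in part (a), and part (b) then follows immediately from the additive decomposition $\log(\dd\lawn_{\pintb,\lpintb}/\dd\lawn_{\pintb,\mfzero})=\sum_k \LLRn_{\pintb,k}(\lpintb)$ of Proposition~\ref{prop:LAQ}.

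The non-routine step is part (c). Part (a) rewrites the limit as
\begin{equation*}
\LLRlim_\pintb(\lpintb)=N(1)-\tfrac12\langle N\rangle(1),
\qquad
N(\uu)\defeq \sum_{k=1}^{K}\int_0^{\uu}\lpintb^\trans\sqrt{\Epolicyb^\vartriangle_k(\UblimE(s))}\,\dd\vW_k(s),
\end{equation*}
and cross-arm independence gives $\langle N\rangle(1)=\lpintb^\trans\sum_k\QVb_k(1)\lpintb$. Hence $\exp(\LLRlim_\pintb(\lpintb))$ is exactly the Dol\'eans--Dade exponential of the continuous martingale $N$, so it is a nonnegative local martingale and automatically a supermartingale, yielding the easy inequality $\Exp[\exp\LLRlim_\pintb(\lpintb)]\le 1$. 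To upgrade to equality I would verify Novikov's criterion. The decisive observation is that $\policy_k$ is the limit of conditional probabilities and therefore takes values in $[0,1]$, so that $\Epolicyb^\vartriangle_k(\vu)=\Exp[\policy_k(\vu,\vX)\FJbpintk(\vX)]\preceq \Exp[\FJbpintk(\vX)]$ in the psd order; hence $\langle N\rangle(1)$ is almost surely dominated by the deterministic constant $\lpintb^\trans\sum_k\Exp[\FJbpintk(\vX)]\lpintb$, which is finite by the moment bound in Assumption~\ref{assm:DQM}. Therefore $\Exp[\exp(\tfrac12\langle N\rangle(1))]<\infty$ trivially, Novikov applies, and the claim follows. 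The only delicate point I anticipate is ensuring well-posedness of the limiting SDE so that the stochastic-exponential identity is legitimate, but this is built directly into the hypotheses of Proposition~\ref{prop:weakconvergence_general}, so no extra work is required.
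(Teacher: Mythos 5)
Your proposal is correct and follows essentially the same route as the paper: part (a) is obtained by stacking the score/weighted-innovation terms and the information/weighted-design terms into the two components of the general partial-sum process of Proposition~\ref{prop:weakconvergence_general}, part (b) follows from the additive decomposition of the log-likelihood ratio, and part (c) is the Dol\'eans--Dade exponential argument with Novikov's condition holding trivially because $\policy_k\le 1$. Your write-up is somewhat more explicit than the paper's (in verifying (\ref{eqn:functioncondition_moment_I})--(\ref{eqn:functioncondition_moment_II}) and in bounding $\langle N\rangle(1)$ by a deterministic constant), but the substance is identical.
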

\smallskip

\begin{proof}
The proof of Part (a) follows by an immediate application of Proposition~\ref{prop:weakconvergence_general}, where we stack $\CSb_{k,\ii}$ and $\vC_{r_1;k,\ii}$ in the first component, $\parsumfb_{k}$, and stack $\QVb_{k,\ii}$ and $\vS_{r_2;k,\ii}$ in the second component, $\parsumqb_{k}$. This, in turn, implies Part (b). Note that for the $\CSb_{k}$ terms, we have
\begin{align*} 
&~ \Exp\left[\policy_k(\vu,\vX)\scorebk(\Z_{k}\cond\vX)\scorebk(\Z_{k}\cond\vX)^\trans\right] \\
&= \Exp\left[\Exp\left[\policy_k(\vu,\vX)\scorebk(\Z_{k}\cond\vX)\scorebk(\Z_{k}\cond\vX)^\trans\cond\vX\right]\right] \\
&= \Exp\left[\policy_k(\vu,\vX)\Exp\left[\scorebk(\Z_{k}\cond\vX)\scorebk(\Z_{k}\cond\vX)^\trans\cond\vX\right]\right] \\
&= \Exp\left[\policy_k(\vu,\vX)\FJbpintk(\vX)\right] = \Epolicyb^{\vartriangle}_{k}(\vu).
\end{align*} 
Part (c) follows from standard stochastic calculus of the Dol\'eans-Dade exponential, verifying the Novikov’s condition which is trivial as $\policy_k$'s are all bounded by one.
\end{proof}

Part (a) demonstrates that the limiting log-likelihood ratio is quadratic in $\lpintb$. Consequently, the limit experiment falls into the \textit{Locally Asymptotically Quadratic (LAQ)} class of \citet{Jeganathan1995}. However, due to the possible dependence between the integrand $\sqrt{\Epolicyb^{\vartriangle}_{k}(\UblimE(s))}$ and the integrator $\vW_{k}(s)$, $\CSb_{k}$ is generally not normally or mixed-normally distributed. Therefore, the limit experiment does not adhere to the traditional \textit{Locally Asymptotically Normal (LAN)} or \textit{Locally Asymptotically Mixed Normal (LAMN)} forms. This feature leads to non-standard behaviors of commonly-used test statistics, e.g., the Student's t-test, which phenomenon has been recently realized in the literature; see \citet{deshpande2018accurate}, \citet{hadad2021confidence}, and \citet{zhang2021statistical}. Moreover, the limiting likelihood ratio does not satisfy the conditions for being \textit{Locally Asymptotically Brownian Functional (LABF)} either, due to the presence of the $\QVb_k$-processes.

%\begin{remark}
%\bo{[todo]} Xu and Wager paper, which set $\boldsymbol{\mu}_0 = 0$. 
%\end{remark}

\subsection{Structural representation of limit experiment} \label{subsec:limitexperiment_structural}
Part (c) of Proposition~\ref{prop:likelihoodweakconvergence} allows us to introduce a new collection of probability measures, denoted by $\law_{\pintb,\lpintb}$, with $\lpintb\in\rR^{p}$, via
\begin{equation}
\frac{\dd\law_{\pintb,\lpintb}}{\dd\law_{\pintb,\mfzero}} = \exp(\LLRlim_{\pintb}(\lpintb)),
\end{equation}
i.e., the Radon-Nikodym derivative with respect to $\law_{\pintb,\mfzero}$. Now we formally define the limit experiment as
\begin{equation}
\experiment_{\pintb} \defeq \left(\Omega, \mathcal{F}, \left(\law_{\pintb,\lpintb} : \lpintb\in\rR^{p}\right)\right),
\end{equation}
where the sample space is that of $\vW_{k}$, $\QVb_{k}$, $\vW_{\e_k}$, $\vC_{r_1;k}$, and $\vS_{r_2;k}$, defined as $\Omega \defeq C^{K}[0,1]$ and $\mathcal{F} \defeq \mathcal{B}(C^{K}[0,1])$, the Borel $\sigma$-field on $C^{K}[0,1]$.

Applying Girsanov's theorem, we have the following theorem.

\smallskip
\begin{theorem} \label{thm:structurallimitexperiment_f}
Let $\lpintb\in\rR^{p}$ and fix the sampling algorithm function $\policy$. The limit experiment $\experiment_{\pintb}$ associated with the log-likelihood ratio $\LLRlim_{\pintb}(\lpintb)$ can be described as follows. We observe $\vW_{k}$, $\QVb_{k}$, $\vW_{\e_k}$, $\vC_{r_1;k}$, and $\vS_{r_2;k}$ generated according to the following stochastic different equations (SDEs): 
\begin{equation} \label{eqn:structurallimitexperiment}
\begin{aligned} 
\dd\vW_{k}(\uu) &= \sqrt{\Epolicyb^{\vartriangle}_{k}(\UblimE(s))}\lpintb\dd\uu + \dd\vB_{k}(\uu),  \\
\dd\CSb_{k}(\uu) &= \sqrt{\Epolicyb^{\vartriangle}_{k}(\UblimE(s))}\dd\vW_{k}(\uu),  \\
\dd\QVb_{k}(\uu) &= \Epolicyb^{\vartriangle}_{k}(\UblimE(s))\dd\uu,  \\
\dd\vW_{\e_k}(\uu) &= \sqrt{\Epolicyb^{\vc}_{r_1;k}(\UblimE(\uu))}\dot\vmu_{k}^\trans\lpintb\dd\uu + \dd\vB_{\e_k}(\uu),  \\
\dd\vC_{r_1;k}(\uu) &= \sqrt{\Epolicyb^{\vc}_{r_1;k}(\UblimE(\uu))}\dd\vW_{\e_k}(\uu),  \\
\dd\vS_{r_2;k}(\uu) &= \Epolicyb^{\vs}_{r_2;k}(\UblimE(s))\dd\uu,
\end{aligned}
\end{equation}
for $\uu\in[0,1]$ and $k\in\SK$, where $\dot\vmu_{k} = (\partial\Exp_{\pintb + \boldsymbol{\omega}}[\Z_{k}]/\partial\omega_1,\dots,\partial\Exp_{\pintb + \boldsymbol{\omega}}[\Z_{k}]/\partial\omega_p)^\trans$, $\vB_{k}$ and $\vB_{\e_k}$ are $p$-dimensional zero-drift Brownian motions with covariance structure described as in Proposition~\ref{prop:likelihoodweakconvergence}(a).
\end{theorem}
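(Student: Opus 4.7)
The plan is to apply the multidimensional Girsanov theorem to the Radon--Nikodym derivative $\dd\law_{\pintb,\lpintb}/\dd\law_{\pintb,\mfzero} = \exp(\LLRlim_{\pintb}(\lpintb))$, exploiting the stochastic-integral representation of $\LLRlim_{\pintb}$ already available from Proposition~\ref{prop:likelihoodweakconvergence}. The whole argument is then essentially a change-of-measure computation on the joint Brownian motion $(\vW_k,\vW_{\e_k})_{k=1}^{K}$ constructed in that proposition.

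First, I would rewrite $\LLRlim_{\pintb}(\lpintb)$ in stochastic-exponential form. Substituting $\CSb_k(1) = \int_0^1 \sqrt{\Epolicyb^{\vartriangle}_k(\UblimE(s))}\dd\vW_k(s)$ and $\QVb_k(1) = \int_0^1 \Epolicyb^{\vartriangle}_k(\UblimE(s))\dd s$ and using that the PSD matrix square root $\sqrt{\Epolicyb^{\vartriangle}_k}$ is symmetric, I obtain
\begin{align*}
\LLRlim_{\pintb}(\lpintb) = \sum_{k=1}^{K}\Bigl\{\int_0^1 \bigl[\sqrt{\Epolicyb^{\vartriangle}_k(\UblimE(s))}\lpintb\bigr]^\trans \dd\vW_k(s) - \tfrac{1}{2}\int_0^1 \bigl\|\sqrt{\Epolicyb^{\vartriangle}_k(\UblimE(s))}\lpintb\bigr\|^2 \dd s\Bigr\},
\end{align*}
so $\exp(\LLRlim_{\pintb}(\lpintb))$ is the Dol\'eans--Dade exponential of the continuous local martingale $M(\uu) = \sum_{k=1}^{K}\int_0^{\uu}\bigl[\sqrt{\Epolicyb^{\vartriangle}_k(\UblimE(s))}\lpintb\bigr]^\trans \dd\vW_k(s)$.

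Second, I would justify applying Girsanov. Novikov's condition holds immediately, since $\Epolicyb^{\vartriangle}_k(\vu) = \Exp[\policy_k(\vu,\vX)\FJbpintk(\vX)]\preceq \Exp[\FJbpintk(\vX)]$ is uniformly bounded (policies lie in $[0,1]$) and $\Exp[\FJbpintk(\vX)]$ is finite by Assumption~\ref{assm:DQM}; equivalently, Proposition~\ref{prop:likelihoodweakconvergence}(c) already asserts that the stochastic exponential is a true mean-one martingale. Hence $\law_{\pintb,\lpintb}$ is a well-defined probability measure equivalent to $\law_{\pintb,\mfzero}$.

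Third, applying the multidimensional Girsanov theorem, under $\law_{\pintb,\lpintb}$ the driving Brownian motion $\vW_k$ picks up precisely the integrand of $M$ against which it has identity quadratic covariation per unit time; that is, $\vB_k(\uu) := \vW_k(\uu) - \int_0^\uu \sqrt{\Epolicyb^{\vartriangle}_k(\UblimE(s))}\lpintb\,\dd s$ is a standard $p$-dimensional Brownian motion under $\law_{\pintb,\lpintb}$, yielding the first SDE in (\ref{eqn:structurallimitexperiment}). Substituting $\dd\vW_k = \sqrt{\Epolicyb^{\vartriangle}_k(\UblimE(s))}\lpintb\,\dd\uu + \dd\vB_k$ into the stochastic integrals defining $\CSb_k$ and $\QVb_k$ then produces the second and third SDEs. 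The correlated Brownian motion $\vW_{\e_k}$ acquires a drift determined by its cross-covariation with $M$, namely $\cov[\vW_{\e_k}(1),\vW_k(1)^\trans]\sqrt{\Epolicyb^{\vartriangle}_k(\UblimE(s))}\lpintb\,\dd s$, with the cross-covariance from Proposition~\ref{prop:likelihoodweakconvergence}(a); using the DQM identity $\Exp[\Z_k\scorebk(\Z_k\cond\vX)^\trans\cond\vX] = \partial_{\boldsymbol{\omega}}\Exp_{\pintb+\boldsymbol{\omega}}[\Z_k\cond\vX]$ at $\boldsymbol{\omega}=\mfzero$, the drift algebraically collapses to the claimed form $\sqrt{\Epolicyb^{\vc}_{r_1;k}(\UblimE(\uu))}\dot\vmu_k^\trans\lpintb\,\dd\uu$, and defining $\vB_{\e_k}$ as the corresponding Brownian residual gives the fourth SDE. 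The remaining SDEs for $\vC_{r_1;k}$ and $\vS_{r_2;k}$ follow by substitution into their stochastic-integral representations. The main obstacle is the algebraic reduction of the Girsanov-induced drift on $\vW_{\e_k}$ into $\sqrt{\Epolicyb^{\vc}_{r_1;k}}\dot\vmu_k^\trans\lpintb$; once the right matrix identities from DQM and the definitions of $\Epolicyb^{\vc}_{r_1;k}$ and $\Epolicyb^{\vartriangle}_k$ are deployed, the remainder is a routine substitution.
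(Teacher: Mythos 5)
Your proposal follows the paper's route exactly: the paper offers no written proof of Theorem~\ref{thm:structurallimitexperiment_f} beyond the sentence ``Applying Girsanov's theorem,'' with the required mean-one martingale property supplied by Proposition~\ref{prop:likelihoodweakconvergence}(c) (Novikov, policies bounded by one). Your Dol\'eans--Dade representation, the Novikov verification, and the identification of the drifts of $\vW_k$, $\CSb_k$, $\QVb_k$, $\vC_{r_1;k}$, and $\vS_{r_2;k}$ are precisely the intended argument, spelled out in more detail than the paper provides.

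One caveat concerns your fourth step. The Girsanov drift acquired by $\vW_{\e_k}$ is, as you correctly write, the cross-variation $\cov[\vW_{\e_k}(1),\vW_{k}(1)^\trans]\sqrt{\Epolicyb^{\vartriangle}_{k}(\UblimE(\uu))}\lpintb\,\dd\uu$. This expression cannot depend on $r_1$, since neither $\vW_{\e_k}$ nor the likelihood martingale involves $r_1$; the form printed in (\ref{eqn:structurallimitexperiment}), by contrast, contains $\sqrt{\Epolicyb^{\vc}_{r_1;k}}$ and hence does depend on $r_1$ (it is also a $p\times p$ matrix times the scalar $\dot\vmu_k^\trans\lpintb$, which does not match the dimension of a $p$-vector drift). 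Your assertion that the cross-variation ``algebraically collapses to the claimed form'' therefore cannot be carried out as stated. In the paper's own specializations the drift of $\W_{\e_k}$ is $\m_k\sqrt{\policy_k}\,\dd\uu$ (Ancillary~\ref{ancillary:structurallimitexperiment_MAB}) and $\sqrt{\Epolicyb_k}\vb_k\,\dd\uu$ (Ancillary~\ref{ancillary:structurallimitexperiment_CMAB}), both of which agree with your cross-variation computation and are free of $r_1$; this indicates the discrepancy is a typo in the theorem's display rather than a flaw in your mechanism. You should present the drift as the cross-variation you derived (simplified via the DQM identity where applicable), not claim an algebraic identity with the printed expression.
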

\smallskip

%\begin{remark}
%\bo{[todo]} Proofs under the null, Le Cam's third lemma and Girsanov's theorem give us the alternative. 
%\end{remark}
%
%
%\ramon{
%\begin{itemize}
%\item Should we denote the measures/laws with policies $\policy$?
%\item The limiting algorithm $\policy$ may depend on $\pintb$, although we have not make it explicit in its notation.
%\end{itemize}
%}

Up to this point, we have developed the limit experiment for the bandit problem and its structural version formulated in SDEs. These results demonstrate that the bandit limit experiment $\experiment_{\pintb}$ corresponds to observing certain continuous-time processes $\vW_{k}$, $\QVb_{k}$, $\vW_{\e_k}$, and $\vS_{r_2;k}$, $k\in\SK$, from a model $(\law_{\pintb,\lpintb} \cond \lpintb\in\rR^{p})$. We now illustrate how this limiting framework can be used to analyze the local asymptotic behavior of statistics and, consequently, the local asymptotic performance of inferential procedures. Such analyses typically require cumbersome `triangular array' limit theory. Fortunately, we can exploit the limit experiment to avoid this by invoking Le Cam's third lemma. This works as follows. 

Suppose we are interested in studying the bandit problem around a specific value, $\pintb_0$, typically the value under the null hypothesis, though in some cases this choice may be inconsequential (see, e.g., the case of comparing arms in Sections~\ref{subsec:MAB_comparetwoarms} and~\ref{subsec:CMAB_comparetwoarms}). To this end, we adopt the local reparameterization $\pintb_{\n} = \pintb_0 + \lpintb/\sqrt{\n}$, which translates the hypotheses of the original bandit problem into those in the limit experiment derived above, $(\law_{\pintb_0,\lpintb} \cond \pintb_0\in\Pintb, \lpintb\in\rR^{p})$, now characterized by $\lpintb$. Consider a sequence of test statistics $\stat_\n$ that satisfies, under $\lawn_{\pintb_0,\mfzero}$ and for all $\lpintb\in\rR^p$,
\begin{equation}\label{eqn:statistic_conv_joint_lr}
\left(\stat_\n, \frac{\dd\lawn_{\pintb_0,\lpintb}}{\dd\lawn_{\pintb_0,\mfzero}} \right)
\Rightarrow  \mathcal{L}\left( g(\UblimE), \frac{\dd\law_{\pintb_0,\lpintb}}{\dd\law_{\pintb_0,\mfzero}} \cond \law_{\pintb_0,\mfzero}\right).
\end{equation}
for some functional $g$. Le Cam's third lemma (see, for example, \citet[Theorem~6.6]{van2000asymptotic}) implies that, under $\lawn_{\pintb_0,\lpintb}$,
\begin{equation}\label{eqn:statistic_conv_alt}
\stat_\n \Rightarrow   \mathcal{L}( g(\UblimE) \cond \law_{\pintb_0,\lpintb}).
\end{equation}
% That is, the asymptotic behavior of $\stat_\n$ under any alternative measure will be described by the behavior of its asymptotic representation, $g(\UblimE)$, under the same $\lpintb\in\rR^p$ in the limit experiment.

In short, establishing (\ref{eqn:statistic_conv_joint_lr}) for a given sequence of test statistics provides the full distributional behavior under local alternatives $\lpintb$ via (\ref{eqn:statistic_conv_alt}). Notably, their asymptotic behaviors are explicitly described by the structural representation in (\ref{eqn:structurallimitexperiment}), expressed in SDEs, which can be readily simulated using an Euler scheme. In Sections~\ref{sec:MAB} and \ref{sec:CMAB} we will repeatedly leverage this property to study asymptotic validity and (local) asymptotic powers of commonly used and newly proposed test statistics.

Beyond facilitating the derivation of the (local) asymptotic behavior of test statistics, the limit experiment can also be used to obtain upper bounds on the (local) asymptotic power of asymptotically valid tests. The Asymptotic Representation Theorem (see, for example, \citet[Chapter~9]{van2000asymptotic}) states that 
% any convergent sequence of statistics in the original bandit problem $\experimentn_{\pintb}$ must converge to a statistic in its corresponding bandit limit experiment $\experiment_{\pintb}$. 
for any sequence of statistics that converges in distribution to a distribution $\mathcal{L}_{\lpintb}$ under $\lawn_{\lpintb}$ for all $\lpintb$, there exists a (randomized) statistic $\stat$ in the limit experiment such that the distribution of $\stat$ under $\law_{\lpintb}$ is given by $\mathcal{L}_{\lpintb}$. 
Consequently, the best (asymptotic) procedure of $\experimentn_{\pintb}$ are determined by the best procedure of $\experiment_{\pintb}$ for the same purpose. Building on this insight, we derive upper power bounds for both non-contextual and contextual bandit problems in the subsequent sections.

% Recall that a sequence of tests $\test_\n = \test_\n(\vX_{1},\A_{1},\Y_{1},\dots,\vX_{\n},\A_{\n},\Y_{\n}) \in [0,1]$ is said to be  asymptotically valid if
% \begin{align*}
%     \limsup_{\n\to\infty} \Exp_{\pintb, \lpintb} \test_\n(\A_{1},\Y_{1},\dots,\A_{\n},\Y_{\n}) \leq \alpha
% \end{align*}
% holds for all values $\pintb$ and $\lpintb$ that are allowed under the null hypothesis and where $\alpha \in (0,1)$ is the chosen significance level.

%%%%%%%%%%%%%%%%%%%%%%%%%%%%%%%%%%%%%%%%%%%%%%%%%%%%%
\section{Statistical applications: non-contextual bandits}\label{sec:MAB}
In this section, we consider the non-contextual multi-armed bandit (MAB) problem with two arms ($K = 2$). The potential outcomes for $k = 1,2$ are generated by
\begin{align*}
\Zkt = \mu_k + \ekt,
\end{align*}
where the innovations $\ekt$ have mean zero, unit variance, and are independent across $k$ and $\ii$. Additionally, $\ekt$ are identically distributed across $\ii$. We denote the density of $\ekt$ by $\f_k$ and impose Assumption~\ref{assm:DQM} on each $\mu \mapsto \f_k(\cdot - \mu)$, with $\pintb = (\mu_1,\mu_2)$.

Our analysis focuses on \textit{equal-arms asymptotics}, where the arms' (global) reward parameters $\mu_k$ are localized as
\begin{equation} \label{eqn:equalarms_asymptotics}
\mu_{k,\n} = \mu + \frac{\m_k}{\sqrt{\n}}
\end{equation} 
around a common value $\mu$. For $\mu = 0$, this corresponds to the \textit{weak-signal asymptotics} in \citet{fan2025diffusion} and \citet{kuang2024weak}. We will see below that $\mu \neq 0$ leads to non-trivial adaptations in the asymptotic analysis relative to $\mu = 0$.

We consider two hypothesis testing problems. Section~\ref{subsec:MAB_evaluateonearm} focuses on evaluating Arm~$2$. Specifically, for a known $\mu_0\in\mathbb{R}$ such that
$\mu_{1,\n} = \mu_0 + \frac{\m_1}{\sqrt{\n}}$ and $\mu_{2,\n} = \mu_0 + \frac{\m_2}{\sqrt{\n}}$, we aim to test 
\begin{equation} \label{eqn:hyp_1}   
H_0: \m_2 = 0, \, \m_1 \in \rR  \text{~~versus~~} H_1: \m_2 > 0, \, \m_1 \in \rR.
\end{equation}
This hypothesis aims to detect small deviations from a known, common baseline mean $\mu_0$ for Arm~$2$, while allowing Arm~$1$ to exhibit small deviations, which can be interpreted as a robustness feature with respect to the local nuisance parameter $\m_1$.\footnote{Instead of the one-sided alternative $\m_2 > 0$, one could, of course, also consider $\m_2 < 0$ or $\m_2 \neq 0$. Additionally, one could consider a hypothesis in which $\mu_0$ is treated as a nuisance parameter. This, however, would complicate the analysis and is left for future research.} Section~\ref{subsec:MAB_comparetwoarms} considers tests for comparing the two arms (e.g., treatment versus control). For that purpose, we introduce a convenient reparameterization, $\mu_{1,\n} = \mu + \frac{\m - \dlpint}{\sqrt{\n}}$ and $\mu_{2,\n} = \mu + \frac{\m + \dlpint}{\sqrt{\n}}$, at an unknown global parameter $\mu$. Then, $\m$ and $\dlpint$ represents the common local reward parameter and the local difference parameter, respectively. The hypothesis of interest is
\begin{equation} \label{eqn:hyp_2}
H_0: \dlpint = 0, \, \m\in \mathbb{R} \text{~~versus~~} H_1: \dlpint >0, \, \m\in \mathbb{R}.
\end{equation}

Section~\ref{subsec:MABLimit} specializes the general results of Section~\ref{sec:limitexperiment} to this non-contextual two-armed setting. Before moving on to inference, in Section~\ref{subsec:TranslationInvariantSamplingMAB}, we introduce the notion of translation-invariant sampling schemes in order to ensure that the sampling schemes $\psi^{(\n)}_k$, $k=1,2$, satisfy Assumption~\ref{assm:policy_feasible}. Finally, we analyze and propose tests for the above two hypotheses. After discussing the test statistics, in Sections~\ref{subsec:MAB_evaluateonearm} and~\ref{subsec:MAB_comparetwoarms}, we will also present an (asymptotic) upper bound on the power of (asymptotically) valid tests. Simulation results are presented separately for each case.

% \todo{XXX: Add paragraph/remark on non-equal arm case?}

\subsection{MAB limit experiment}\label{subsec:MABLimit}
Let $\lawn_{\m_1,\m_2}$ denote the law of $(\A_{1},\Y_{1},\dots,\A_{\n},\Y_{\n})$ under (\ref{eqn:equalarms_asymptotics}), where for notational convenience we omit $\mu$ in our notation. The log-likelihood ratio is given by
\begin{align*}
    \log\frac{\dd\lawn_{\m_1,\m_2}}{\dd\lawn_{0,0}} 
&= \sum_{k=1}^{2}\sum_{\ii=1}^{\n} \indicator_{\{A_{\ii} = k\}} \log\frac{\f_k(\Y_{\ii} - \mu - \m_k/\sqrt{\n})}{\f_k(\Y_{\ii} - \mu)} \defeq \LLRn_{1}(\m_1) + \LLRn_{2}(\m_2).
\end{align*} 
By Proposition~\ref{prop:LAQ}, we have, for $k = 1,2$ and under $\lawn_{0,0}$,
\begin{align*}
\LLRn_{k}(\m_k)
%=&~ - \frac{1}{2}\sum_{\ii=1}^{\n}\indicator_{\{A_{\ii} = k\}}\left\{\left(\Y_{\ii} - \mu - \frac{\m_k}{\sqrt{\n}}\right)^2 - \left(\Y_{\ii} - \mu\right)^2\right\}  \\
=&~ \m_k\CS_{k,\n} - \frac{1}{2}\m_k^2\QV_{k,\n} + \op(1), 
\end{align*}
with, for $\ii = 1,\dots,\n$,
\begin{align*}
\CS_{k,\ii} \defeq&~ \frac{1}{\sqrt{\n}}\sum_{\jj=1}^{\ii}\indicator_{\{A_{\jj} = k\}}\score_{k}(\Y_{\jj} - \mu), \\
\QV_{k,\ii} \defeq&~ \frac{1}{\n}\sum_{\jj=1}^{\ii}\indicator_{\{A_{\jj} = k\}}\FJ_{k},
\end{align*}
where, see Assumption~\ref{assm:DQM}, $\score_{k} = -\f^\prime_k/\f_k$ and $\FJ_{k} = \int_{\e}\score_{k}^2(\e)\f_k(\e)\dd\e$. 
 
In this MAB problem, most existing sampling policies rely on the (re-scaled) accumulated rewards and arm pulls up to time $\ii$, defined as
\begin{equation} \label{eqn:reward_frequency}
\begin{aligned}
\reward_{k,\ii} \defeq&~ \frac{1}{\sqrt{\n}}\sum_{\jj=1}^{\ii}\indicator_{\{A_{\jj} = k\}}(\Y_{\jj} - \mu), \\
\freq_{k,\ii} \defeq&~ \frac{1}{\n}\sum_{\jj=1}^{\ii}\indicator_{\{A_{\jj} = k\}},
\end{aligned}
\end{equation}
respectively. That is, the probability of selecting Arm-$k$ at round $\ii+1$, given the information till time $\ii$, is a function of these statistics: $\pi_{\ii+1}(k\cond\filtr_{\ii}) = \policy_k^{(\n)}(\freqb_{\ii}, \rewardb_{\ii})$, where $\rewardb_{\ii} = (\reward_{1,\ii},\reward_{2,\ii})^\trans$ and $\freqb_{\ii} = (\freq_{1,\ii},\freq_{2,\ii})^\trans$. Note that this structure may violate Assumption~\ref{assm:policy_feasible} as $\rewardb_{\ii}$ depends on the (unknown) parameter $\mu$. However, omitting the centering by $\mu$ from the definition of $\rewardb_{\ii}$ is not feasible as our framework requires weak convergence of the input arguments $(\freqb_{\ii}, \rewardb_{\ii})$ (embedded in an appropriate function space). To address this issue we impose, in Section~\ref{subsec:TranslationInvariantSamplingMAB}, a natural invariance condition on the sampling schemes. We also adjust classical sampling schemes to satisfy this condition and show, also by simulations, that the unadjusted versions lead to invalid tests in the sense that their size is not controlled.

% To address this issue, in Section~\ref{subsec:TranslationInvariantSamplingMAB}, we will impose additional structure on the sampling schemes to ensure that Assumption~\ref{assm:policy_feasible} is satisfied. 

The results of Section~\ref{sec:limitexperiment} now imply that the two-armed bandit problem converges to a limit experiment with laws $\law_{\m_1,\m_2}$, as described below. In particular, we establish the joint convergence of the likelihood ratios with the following statistics,
\begin{equation} \label{eqn:reward_frequency_r1r2}
\begin{aligned}
\rewardd_{r_1;k,\ii} \defeq&~ \frac{1}{\sqrt{\n}}\sum_{\jj=1}^{\ii} \policy_{k,\jj}^{-r_1}\indicator_{\{A_{\jj} = k\}}(\Y_{\jj} - \mu), \\
\freqd_{r_2;k,\ii} \defeq&~ \frac{1}{\n}\sum_{\jj=1}^{\ii} \policy_{k,\jj}^{-r_2}\indicator_{\{A_{\jj} = k\}},
\end{aligned}
\end{equation}
where $\policy_{k,\ii}$ is a shorthand for $\policy_k^{(\n)}(\freqb_{\ii-1}, \rewardb_{\ii-1})$ and $r_1, r_2 \in \rR$. These statistics serve as the non-contextual bandit counterparts of $\vC_{r_1;k,\ii}$ and $\vS_{r_2;k,\ii}$ defined in (\ref{eqn:statistic_general_CS}), and are subsequently used to construct-test statistics in Sections~\ref{subsec:MAB_evaluateonearm} and \ref{subsec:MAB_comparetwoarms}. Note that $\reward_{k,\ii} = \rewardd_{r_1=0;k,\ii}$ and $\freq_{k,\ii} = \freqd_{r_2=0;k,\ii}$ for all $k = 1,2$ and $t = 1,\dots,\n$. 

\smallskip
\begin{ancillary} \label{ancillary:weakconvergence_MAB}
\noindent
\begin{itemize}
\item[(a)] Let Assumptions~\ref{assm:DQM}--\ref{assm:policy_limit} hold. Under $\lawn_{0,0}$, with $\policy_k$, $k=1,2$, defined in Assumption~\ref{assm:policy_limit}, we have
\begin{align*}
\LLRn_{k}(\m_k) &\wto \LLRlim_k(\m_k) = \m_k\CS_{k}(1) - \frac{1}{2}\m_k^2\QV_{k}(1), \\
\rewardd_{r_1;k,\n} &\wto \rewardd_{r_1;k}(1)  \text{~~and~~} 
\freqd_{r_2;k,\n} \wto \freqd_{r_2;k}(1)
\end{align*}
where, for $\uu\in[0,1]$, $k = 1,2$, and $r_1, r_2 \in \rR$,
\begin{align*}
\CS_{k}(\uu) =&~ \int_0^{\uu} \sqrt{\policy_k(\freqb(s),\rewardb(s))}\dd\W_{\score_k}(s),  \\
\QV_{k}(\uu) =&~ \int_0^{\uu} \policy_k(\freqb(s),\rewardb(s)) \dd s, \\
\rewardd_{r_1;k}(\uu) =&~ \int_0^{\uu}\sqrt{\policy_k^{1-2r_1}(\freqb(s),\rewardb(s))}\dd\Wek(s), \\
\freqd_{r_2;k}(\uu) =&~ \int_0^{\uu}\policy_k^{1-r_2}(\freqb(s),\rewardb(s))\dd s,
\end{align*}
with $\freq_{k}(\uu) = \freqd_{0;k}(\uu)$, $\reward_{k}(\uu) = \rewardd_{0;k}(\uu)$, $\freqb(\uu) = \big(\freq_{1}(\uu),\freq_{2}(\uu)\big)^\trans$, and $\rewardb(\uu) = \big(\reward_{1}(\uu),\reward_{2}(\uu)\big)^\trans$. Here, $(\Wek,\W_{\score_k})^\trans$ is a zero-drift Brownian motion with covariance $\left(\begin{smallmatrix}1 & 1 \\ 1 & \FJ_{k}\end{smallmatrix}\right)$. Brownian motions associated with different arms are independent.
\item[(b)] We have, still under $\lawn_{0,0}$,
\begin{align*}
\log\frac{\dd\lawn_{\m_1,\m_2}}{\dd\lawn_{0,0}} \wto
\log\frac{\dd\law_{\m_1,\m_2}}{\dd\law_{0,0}} =
\LLRlim(\m_1,\m_2) = \LLRlim_1(\m_1) + \LLRlim_2(\m_2).
\end{align*}
\item[(c)] Under $\law_{0,0}$, $\Exp\left[\exp(\LLRlim(\m_1,\m_2))\right] = 1$, for all $\m_1,\m_2\in\rR$. 
\end{itemize}
\end{ancillary}
\smallskip

This limit experiment can, again following Section~\ref{sec:limitexperiment}, be described structurally in terms of stochastic differential equations.
\smallskip
\begin{ancillary} \label{ancillary:structurallimitexperiment_MAB}
The limit experiment $\experiment$, associated with the limiting log-likelihood ratios of Ancillary~\ref{ancillary:weakconvergence_MAB}, is described as follows. We observe $\W_{\score_k}$, $\QV_{k}$, and $\Wek$ (and thus $\CS_{k}$, $\rewardd_{r_1;k}$, and $\freqd_{r_2;k}$ as well), $k = 1,2$, generated by 
\begin{equation} \label{eqn:SDE_MAB}
\begin{aligned} 
\dd\W_{\score_k}(\uu) &= \m_k\FJ_{k}\sqrt{\policy_k(\freqb(\uu),\rewardb(\uu))}\dd\uu + \dd\B_{\score_k}(\uu),  \\
\dd\CS_{k}(\uu) &= \sqrt{\policy_k(\freqb(\uu),\rewardb(\uu))}\dd\W_{\score_k}(\uu),  \\
\dd\QV_{k}(\uu) &= \FJ_{k}\policy_k(\freqb(\uu),\rewardb(\uu))\dd\uu, \\
\dd\Wek(\uu) &= \m_k\sqrt{\policy_k(\freqb(\uu),\rewardb(\uu))}\dd\uu + \dd\B_{\e_k}(\uu),  \\
\dd\rewardd_{r_1;k}(\uu) &= \sqrt{\policy_k^{1-2r_1}(\freqb(\uu),\rewardb(\uu))}\dd\Wek(\uu),  \\
\dd\freqd_{r_2;k}(\uu) &= \policy_k^{1-r_2}(\freqb(\uu),\rewardb(\uu))\dd\uu,
\end{aligned}
\end{equation}
for $\uu\in[0,1]$. Here, $(\B_{\e_k},\B_{\score_k})^\trans$ is a zero-drift Brownian motion with covariance $\left(\begin{smallmatrix}1 & 1 \\ 1 & \FJ_{k}\end{smallmatrix}\right)$. Brownian motions associated to different arms are independent. 
\end{ancillary}
\smallskip

\begin{remark} \label{remark:KWFG}
Let $r_1 = r_2 = 0$ in Ancillary~\ref{ancillary:structurallimitexperiment_MAB}. The structural limit experiment implies that the asymptotic representations of the accumulated rewards and arm-pulling frequencies statistics in (\ref{eqn:reward_frequency}) obey the following
\begin{equation} \label{eqn:SDE_MAB_reward_frequency}
\begin{aligned} 
\dd\Wek(\uu) &= \m_k\sqrt{\policy_k(\freqb(\uu),\rewardb(\uu))}\dd\uu + \dd\B_{\e_k}(\uu),  \\
\dd\reward_{k}(\uu) &= \sqrt{\policy_k(\freqb(\uu),\rewardb(\uu))}\dd\Wek(\uu),  \\
\dd\freq_{k}(\uu) &= \policy_k(\freqb(\uu),\rewardb(\uu))\dd\uu.
\end{aligned}
\end{equation}
Plugging $\dd\Wek(\uu)$ into $\dd\reward_{k}(\uu)$ yields
\begin{equation} \label{eqn:xuwager_eq3.1}
\begin{aligned} 
\dd\reward_{k}(\uu) &= \m_k\policy_k(\freqb(\uu),\rewardb(\uu))\dd\uu + \sqrt{\policy_k(\freqb(\uu),\rewardb(\uu))}\dd\B_{\e_k}(\uu),
\end{aligned}
\end{equation} 
which corresponds to the diffusion approximation for these two statistics developed by \cite{fan2025diffusion} and \cite{kuang2024weak}. 
\end{remark}

\begin{remark} \label{remark:MAB_Gaussian}
Under standard normal distributions (i.e., $\f_1 = \f_2 = \phi$), we have $\score_{k}(\e) = \e$, $\FJ_{k} = 1$, $\CS_{k,\ii} = \reward_{k,\ii}$ and $\QV_{k,\ii} = \freq_{k,\ii}$, for $k = 1,2$ and $t = 1,\dots,\n$. Correspondingly, in the limit experiment, Gaussianity implies $\W_{\score_k} = \Wek$, $\CS_{k}(\uu) = \reward_{k}(\uu)$, and $\QV_{k}(\uu) = \freq_{k}(\uu)$, for $k = 1,2$ and $\uu\in[0,1]$. The limit experiment in Ancillary~\ref{ancillary:weakconvergence_MAB} then reduces to $\LLRlim_k(\m_k) = \m_k\reward_{k}(1) - \frac{1}{2}\m_k^2\freq_{k}(1)$, for $k = 1,2$, and the structural representation in  Ancillary~\ref{ancillary:structurallimitexperiment_MAB} reduces to (\ref{eqn:SDE_MAB_reward_frequency}). Note that, even in this Gaussian setting, the limit experiment is generally not a Gaussian shift experiment. An exception is when sampling is non-adaptive, i.e., $\policy_k$ is constant over both arguments.
\end{remark}
\smallskip

To further streamline the exposition, the remainder of Section~\ref{sec:MAB} will focus on introducing the sampling scheme and statistics developed under the Gaussian assumption discussed in Remark~\ref{remark:MAB_Gaussian}. Notably, the statistical properties generally hold without requiring Gaussianity. Although our results can be used to develop tests under more general distributional assumptions, this would involve more complicated notation and is left for future work.

\subsection{Translation-invariant sampling schemes}\label{subsec:TranslationInvariantSamplingMAB}
Note that $\rewardb_{\ii}$ depends on $\mu$ so that a sampling scheme based on $\rewardb_{\ii}$ may inherently depend on $\mu$, thereby violating Assumption~\ref{assm:policy_feasible}. In order to satisfy Assumption~\ref{assm:policy_feasible}, we introduce the following \textit{translation-invariance} restriction on the sampling schemes. 

\smallskip
\begin{definition} \label{def:translationinvariant_MAB}
In the context of MAB experiment as introduced above, the sequence of sampling schemes $\policy_k^{(\n)}$ is called \textit{translation invariant} if 
\begin{equation}
\policy_k^{(\n)}\big(\vd, \vr + c\vd\big) = \policy_k^{(\n)}\big(\vd, \vr\big)
\end{equation}
for all $\n\in\rN$, $c\in\rR$, $\vr\in\rR^2$, and $\vd\in[0,1]^2$.
\end{definition}
\smallskip

The translation invariant property in Definition~\ref{def:translationinvariant_MAB} leads to the following results. 

\smallskip
\begin{corollary} \label{cor_MAB_I}
In this MAB experiment, let $\policy_k^{(\n)}$, $\n\in\rN$, be a translation-invariant sequence of sampling scheme as defined in Definition~\ref{def:translationinvariant_MAB}. Then,  
\begin{itemize}
\item[a)] $\policy_k^{(\n)}$ satisfies Assumption~\ref{assm:policy_feasible};
\item[b)] the associated limiting sampling scheme $\policy_k$ from Assumption~\ref{assm:policy_limit} satisfies
\begin{equation}
\policy_k\big(\vd, \vr + c\vd\big) = \policy_k\big(\vd, \vr\big)
\end{equation}
for all $c\in\rR$, $\vr\in\rR^2$, and $\vd\in[0,1]^2$. 
\end{itemize}
\end{corollary}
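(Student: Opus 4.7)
The plan is to prove part (a) by showing that translation invariance lets us rewrite $\policy_k^{(\n)}(\freqb_{\ii-1},\rewardb_{\ii-1})$ in terms of the observable (uncentered) reward statistic, thereby decoupling it from the unknown common mean $\mu$, and to prove part (b) by passing to the limit in the translation-invariance identity that holds at each finite $\n$.

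For part (a), I would introduce the uncentered reward statistic $\bar\reward_{k,\ii}\defeq\n^{-1/2}\sum_{\jj=1}^{\ii}\indicator_{\{A_{\jj}=k\}}\Y_{\jj}$, which is $\filtr_{\ii}$-measurable and is defined without reference to $\mu$ or any other unknown parameter. A direct computation gives $\rewardb_{\ii-1}=\bar\rewardb_{\ii-1}-\mu\sqrt{\n}\,\freqb_{\ii-1}$. Applying Definition~\ref{def:translationinvariant_MAB} with $c=-\mu\sqrt{\n}$ then yields
\[
\policy_k^{(\n)}(\freqb_{\ii-1},\rewardb_{\ii-1})=\policy_k^{(\n)}(\freqb_{\ii-1},\bar\rewardb_{\ii-1}),
\]
and the right-hand side is a function of $\filtr_{\ii-1}$-measurable quantities alone. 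Hence, viewed as a function of the conditioning variables, the sampling probability $\pi_{\ii+1}(k\cond\filtr_{\ii})$ carries no functional dependence on $\pintb$, verifying Assumption~\ref{assm:policy_feasible}.

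For part (b), I would pass to the limit in the finite-$\n$ identity $\policy_k^{(\n)}(\vd,\vr+c\vd)=\policy_k^{(\n)}(\vd,\vr)$. In the non-contextual MAB setup, $\vX$ is degenerate, so $\policy_k^{(\n)}(\vu,\vX)$ factors out of the expectations appearing in Assumption~\ref{assm:policy_limit}, and the stated (uniform on bounded sets) convergence of $\Exp[\policy_k^{(\n)}(\vu,\vX)\funfb_k\funfb_k^\trans]$ to $\Exp[\policy_k(\vu,\vX)\funfb_k\funfb_k^\trans]$ reduces---after dividing by the (nonzero) factor $\Exp[\funfb_k\funfb_k^\trans]$---to pointwise convergence $\policy_k^{(\n)}(\vu)\to\policy_k(\vu)$, locally uniformly in $\vu$. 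Passing to the limit on both sides of the finite-$\n$ identity then gives $\policy_k(\vd,\vr+c\vd)=\policy_k(\vd,\vr)$ for all $c\in\rR$, $\vr\in\rR^2$, and $\vd\in[0,1]^2$.

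The only real technical point is justifying the pointwise passage to the limit for $\policy_k^{(\n)}$ in part (b), since Assumption~\ref{assm:policy_limit} only asserts convergence of integrated expressions; in the non-contextual setting this reduction is immediate once $\vX$ is factored out, and the continuity of the limit $\policy_k$ posited in Assumption~\ref{assm:policy_limit} ensures that the resulting translation-invariance identity holds everywhere rather than merely on the set where we establish convergence. Beyond this minor issue, both conclusions are essentially direct consequences of Definition~\ref{def:translationinvariant_MAB}.
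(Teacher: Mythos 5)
Your proof is correct and follows essentially the same route as the paper's: part (a) uses the identical decomposition $\rewardb_{\ii}=\rewardb_{\ii}^{\circ}-\mu\sqrt{\n}\,\freqb_{\ii}$ into the observable uncentered reward statistic followed by an application of the translation-invariance identity with $c=\pm\mu\sqrt{\n}$, and part (b) is obtained by passing to the limit in the finite-$\n$ identity (the paper simply states this ``follows immediately,'' while you supply the extra detail that, with $\vX$ degenerate, the integrated convergence of Assumption~\ref{assm:policy_limit} reduces to locally uniform pointwise convergence $\policy_k^{(\n)}\to\policy_k$).
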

\smallskip

\begin{proof}
For Part a), observe
\begin{align*}
\reward_{k,\ii}
= \reward_{k,\ii}^\circ - \mu\sqrt{\n}\freq_{k,\ii},
\end{align*}
for $k = 1,2$, where $\reward_{k,\ii}^\circ \defeq \frac{1}{\sqrt{\n}}\sum_{\jj=1}^{\ii}\indicator_{\{A_{\jj} = k\}}\Y_{\jj}$. Let $\rewardb_{\ii}^\circ \defeq (\reward_{1,\ii}^\circ,\reward_{2,\ii}^\circ)$. By translation-invariance of $\policy_k^{(\n)}$, and setting $c = \sqrt{\n}\mu$, we obtain $\policy_k^{(\n)}\big(\freqb_{\ii}, \rewardb_{\ii}\big) = \policy_k^{(\n)}\big(\freqb_{\ii}, \rewardb_{\ii} + c\freqb_{\ii}\big) = \policy_k^{(\n)}\big(\freqb_{\ii}, \rewardb_{\ii}^\circ\big)$, which does not depend on $\mu$.  Part (b) follows immediately.
\end{proof}
\smallskip

One can easily derive a sufficient condition for a sampling scheme $\policy_k^{(\n)}$ to be translation invariant. Observe
\begin{align*}
\frac{\reward_{2,\ii} + c\freq_{2,\ii}}{\freq_{2,\ii}} - \frac{\reward_{1,\ii} + c\freq_{1,\ii}}{\freq_{1,\ii}} 
=
\frac{\reward_{2,\ii}}{\freq_{2,\ii}} - \frac{\reward_{1,\ii}}{\freq_{1,\ii}} 
\end{align*}
for any $c\in\rR$. As a result, $\policy^{(\n)}$ is translation invariant if, for $k=1,2$, $\policy_k^{(\n)}\big(\freqb_{\ii},\rewardb_{\ii}\big)$ is a function of $\freqb_{\ii}$ and $2\hat\dlpint_{\ii} \defeq \reward_{2,\ii}/\freq_{2,\ii} - \reward_{1,\ii}/\freq_{1,\ii}$ only. Note that $\hat\dlpint_{\ii}$ can be seen as an estimate for $\dlpint$ based on the data available at round $\ii$.

In Corollary~\ref{cor_MAB_II} below, we show that employing a sequence of translation-invariant sampling schemes $\policy_k^{(\n)}$, $\n\in\mathbb{N}$, based on $\reward_{2,\ii}/\freq_{2,\ii} - \reward_{1,\ii}/\freq_{1,\ii}$ also implies some distribution-freeness in the limit experiment.

\smallskip
\begin{corollary} \label{cor_MAB_II}
In the limiting MAB experiment, under translation-invariant sampling schemes as defined in Definition~\ref{def:translationinvariant_MAB}, the joint distribution of $\freqb(\uu)$ and $\reward_{2}(\uu)/\freq_{2}(\uu) - \reward_{1}(\uu)/\freq_{1}(\uu)$ for $\uu\in[0,1]$ satisfies
\begin{itemize}
    \item[a)] it remains unchanged when $(\m_1,\m_2) = (\m,\m)$ (i.e., $\dlpint = 0$) for all $\m\in\rR$;
    \item[b)] it does not depend on $\mu$.
\end{itemize} 
\end{corollary}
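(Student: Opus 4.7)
}
My plan is to work directly with the structural limit experiment of Ancillary~\ref{ancillary:structurallimitexperiment_MAB}, specialized to $r_1 = r_2 = 0$, which yields the autonomous SDE system already displayed in Remark~\ref{remark:KWFG}:
\begin{equation*}
\dd\reward_{k}(\uu) = \m_k\policy_k(\freqb(\uu),\rewardb(\uu))\dd\uu + \sqrt{\policy_k(\freqb(\uu),\rewardb(\uu))}\dd\B_{\e_k}(\uu), \qquad \dd\freq_{k}(\uu) = \policy_k(\freqb(\uu),\rewardb(\uu))\dd\uu,
\end{equation*}
with initial conditions $\reward_k(0) = \freq_k(0) = 0$. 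The key observation is that $\mu$ does not appear anywhere in this system: the dependence of $\reward_{k,\ii}$ on $\mu$ (via the centering $\Y_\jj - \mu$) has been fully absorbed into the $\sqrt{\n}$-localization, and translation invariance of $\policy_k^{(\n)}$ transfers to the limiting $\policy_k$ by Corollary~\ref{cor_MAB_I}(b) so that $\policy_k$ is also $\mu$-free. Part~b) then follows immediately: by uniqueness of the strong solution (a condition already invoked in Proposition~\ref{prop:weakconvergence_general}), the joint law of $(\freqb(\uu),\rewardb(\uu))_{\uu\in[0,1]}$ under the limit experiment is the same for every $\mu$, hence so is the law of $\big(\freqb(\uu), \reward_{2}(\uu)/\freq_{2}(\uu) - \reward_{1}(\uu)/\freq_{1}(\uu)\big)_{\uu\in[0,1]}$.

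For Part~a), setting $\m_1 = \m_2 = \m$, I propose the process-level change of variable $\tilde\reward_k(\uu) \defeq \reward_k(\uu) - \m\freq_k(\uu)$. Subtracting $\m\,\dd\freq_k$ from $\dd\reward_k$ eliminates the drift, giving $\dd\tilde\reward_k(\uu) = \sqrt{\policy_k(\freqb(\uu),\rewardb(\uu))}\dd\B_{\e_k}(\uu)$. Then, using Corollary~\ref{cor_MAB_I}(b) with $\vr = \tilde\rewardb(\uu)$, $\vd = \freqb(\uu)$, and $c = \m$,
\begin{equation*}
\policy_k(\freqb(\uu),\rewardb(\uu)) = \policy_k\bigl(\freqb(\uu), \tilde\rewardb(\uu) + \m\freqb(\uu)\bigr) = \policy_k\bigl(\freqb(\uu), \tilde\rewardb(\uu)\bigr).
\end{equation*}
Thus $(\tilde\rewardb,\freqb)$ satisfies the closed system $\dd\tilde\reward_k = \sqrt{\policy_k(\freqb,\tilde\rewardb)}\dd\B_{\e_k}$ and $\dd\freq_k = \policy_k(\freqb,\tilde\rewardb)\dd\uu$ with zero initial conditions, in which $\m$ has disappeared entirely. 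Invoking uniqueness of the strong solution once more, the law of $(\freqb,\tilde\rewardb)$ on $[0,1]$ is independent of $\m$. To conclude, I would observe the crucial algebraic identity
\begin{equation*}
\frac{\reward_{2}(\uu)}{\freq_{2}(\uu)} - \frac{\reward_{1}(\uu)}{\freq_{1}(\uu)} = \frac{\tilde\reward_{2}(\uu) + \m\freq_{2}(\uu)}{\freq_{2}(\uu)} - \frac{\tilde\reward_{1}(\uu) + \m\freq_{1}(\uu)}{\freq_{1}(\uu)} = \frac{\tilde\reward_{2}(\uu)}{\freq_{2}(\uu)} - \frac{\tilde\reward_{1}(\uu)}{\freq_{1}(\uu)},
\end{equation*}
so the quantity of interest is a fixed functional of $(\freqb,\tilde\rewardb)$ and therefore $\m$-free.

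The only delicate points are invoking uniqueness of the SDE solution (already an ambient assumption) and checking that the translation-invariance identity of Corollary~\ref{cor_MAB_I}(b) may legitimately be applied pathwise to the continuous processes $\freqb$, $\tilde\rewardb$ at every $\uu$; since $\freq_k(\uu)\in[0,1]$ and $\tilde\reward_k(\uu)\in\rR$ lie in the stated domain of the identity, this is automatic. I therefore do not anticipate any substantive obstacle: once the SDE system is in hand, both parts reduce to reading off, respectively, the absence of $\mu$ from the coefficients and the drift-removing change of variable that uses translation invariance.
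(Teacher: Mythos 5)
Your proposal is correct and follows essentially the same route as the paper: the paper's proof likewise rewrites the SDE as $\reward_{k}(\uu) = \m_k\freq_{k}(\uu) + \int_0^\uu \sqrt{\policy_k(\freqb(s),\rewardb(s))}\,\dd\B_{\e_k}(s)$ (your $\tilde\reward_k$ is exactly that stochastic-integral remainder) and then appeals to Corollary~\ref{cor_MAB_I}(b) to see that the drift term cancels in the ratio difference, with part~b) declared immediate. Your version merely spells out the closed system for $(\freqb,\tilde\rewardb)$, the uniqueness-of-solution step, and the algebraic cancellation that the paper leaves implicit.
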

\smallskip

\begin{proof}
Observe that (\ref{eqn:xuwager_eq3.1}) can be rewritten as $$\dd\reward_{k}(\uu) = \m_k\dd\freq_{k}(\uu) + \sqrt{\policy_k(\freqb(\uu),\rewardb(\uu))}\dd\B_{\e_k}(\uu),$$ 
thus 
$$\reward_{k}(\uu) = \m_k\freq_{k}(\uu) + \int_0^\uu \sqrt{\policy_k(\freqb(s),\rewardb(s))}\dd\B_{\e_k}(s).$$ 
This observation together with part~b) of Corollary~\ref{cor_MAB_I}, implies that the joint distribution of $\freqb(\uu)$ and $\reward_{2}(\uu)/\freq_{2}(\uu) - \reward_{1}(\uu)/\freq_{1}(\uu)$ for $\m_1 = \m_2\in\rR$ is the same as for $\m_1 = \m_2 = 0$. Part (b) is immediate.
%which further implies that $\policy_k$, $k=1,2$, is translation invariant with respect to $\m$ if it is a function of $\freqb(\uu)$ and $\reward_{2}(\uu)/\freq_{2}(\uu) - \reward_{1}(\uu)/\freq_{1}(\uu)$ only. Combining these observations with Corollary~\ref{cor_MAB_I} Part b) yields Corollary~\ref{cor_MAB_II}.
\end{proof}

Using these results, we can construct translation-invariant versions of classical Thompson, tempered-greedy, and tempered-UCB sampling. The prefix `tempered' refers to applying a softmax operation to the familiar greedy and UCB sampling schemes. Translation-invariant Thompson sampling has been used and discussed in \citet[Section 4.2]{kuang2024weak}. The translation-invariant tempered-greedy sampling scheme is a modified version of the tempered-greedy scheme (see, e.g., \citet[Section 2]{kuang2024weak}). In a similar fashion, we propose the translation-invariant tempered-UCB algorithm. To the best of our knowledge we are the first to formally introduce translation-invariant sampling schemes. 

These algorithms are summarized in Table~\ref{tab:algorithms_MAB}, with detailed derivations provided in Appendix~\ref{appsubsec:TI_algorithms_MAB}. For all the three algorithms, we extend the definition of $\policy_k^{(\n)}$ and $\policy_k$ to $\vd\in[0,1]^2$ via continuous extension. Note that the pointwise convergence of $\policy_k^{(\n)}$ to $\policy_k$ also holds true on the boundary. By applying Dini's theorem on suitable regions of $[0,1]^2\times\rR^2$---i.e., on $\{ (d,r) :  r_2 / d_2 - r_1 / d_1 > 0, d_1, d_2 > 0 \}$ and $\{ (d,r) :  r_2 / d_2 - r_1 / d_1 < 0, d_1, d_2 > 0\}$---we can conclude that the convergence is uniform on compact subsets. In practice, we introduce a single initial draw from each arm in round $\ii = 1,2$ to ensure $\freq_{k,\ii} > 0$.

\bigskip
\renewcommand{\arraystretch}{2}  

\begin{table}[ht]
\centering
\caption{Translation-Invariant (TI) Sampling Schemes for Non-contextual Multi-Armed Bandits}
\label{tab:algorithms_MAB}
\begin{adjustbox}{width=1.1\textwidth,center}
\begin{tabular}{@{}l>{$}c<{$}>{\centering\arraybackslash}m{3.5cm}>{$}c<{$}@{}}
\toprule
\addlinespace[-0.5em]
\textbf{Algorithm} 
  & \policy_k^{(\n)}\left(\vd,\vr\right) 
  & \textbf{Hyperparameters} 
  & \policy_k\left(\vd,\vr\right) \\
\midrule
\addlinespace[0.3em]

\text{\small TI Thompson}
  & \Phi\left(\frac{\frac{d_1d_2}{d_1+d_2}\left(\frac{r_2}{d_2}-\frac{r_1}{d_1}\right)}{\sqrt{\frac{d_1d_2}{d_1+d_2} + \frac{b_\n^2}{\n}}}\right) 
  & $\frac{b_\n^2}{\n} \to b^2 \in[0,\infty)$ 
  & \Phi\left(\frac{\frac{d_1d_2}{d_1+d_2}\left(\frac{r_2}{d_2}-\frac{r_1}{d_1}\right)}{\sqrt{\frac{d_1d_2}{d_1+d_2} + b^2}}\right) \\
\addlinespace[1em]

\text{\small TI tempered-greedy}
  & \exp\left(\frac{\alpha_\n}{\sqrt{\n}}\frac{r_2}{d_2}\right)\Big/\sum\limits_{k = 1}^{2} \exp\left(\frac{\alpha_\n}{\sqrt{\n}}\frac{r_k}{d_k}\right)
  & $\frac{\alpha_\n}{\sqrt{\n}} \to \alpha \in(0,\infty)$ 
  & \exp\left(\alpha\frac{r_2}{d_2}\right) \Big/ \sum\limits_{k = 1}^{2} \exp\left(\alpha\frac{r_k}{d_k}\right) \\
\addlinespace[1em]

\text{\small TI tempered-UCB} 
  & \frac{\exp\left(\frac{\alpha_\n}{\sqrt{\n}}\left(\frac{r_2}{d_2} + \sqrt{\frac{\log(\n/\delta_\n)}{2d_2}}\right)\right)}{\sum\limits_{k=1}^{2}\exp\left(\frac{\alpha_\n}{\sqrt{\n}}\left(\frac{r_k}{d_k} +  \sqrt{\frac{\log(\n/\delta_\n)}{2d_k}}\right)\right)} 
  & \parbox[c][4.5em][c]{3.5cm}{\centering\small
      $\frac{\alpha_\n}{\sqrt{\n}} \to \alpha \in (0,\infty)$ \\
      $\frac{\delta_\n}{\n} \to \delta \in (0,\infty)$
    }
  & \frac{\exp\Big(\alpha\Big(\frac{r_2}{d_2} + \sqrt{\frac{\log(1/\delta)}{2d_2}}\Big)\Big)}{\sum\limits_{k=1}^{2}\exp\left(\alpha\left(\frac{r_k}{d_k} + \sqrt{\frac{\log(1/\delta)}{2d_k}}\right)\right)} \\
\addlinespace[0.3em]
\bottomrule
\end{tabular}
\end{adjustbox}
\end{table}

\subsection{Hypothesis on a single arm} \label{subsec:MAB_evaluateonearm}
This subsection focuses on testing the hypothesis (\ref{eqn:hyp_1}) to evaluate Arm-$2$. Throughout this subsection and the next, which is dedicated to comparing arms, we impose the sequence of sampling schemes $\policy_k^{(\n)}$, $k=1,2$, to be translation invariant as in Definition~\ref{def:translationinvariant_MAB}. Again, for notational convenience, we abbreviate $\policy_k^{(\n)}(\freqb_{\ii-1}, \rewardb_{\ii-1})$---the sampling probability of Arm-$k$ for round $\ii$---as $\policy_{k,\ii}$, and $\policy_k(\freqb(\uu), \rewardb(\uu))$ as $\policy_k(\uu)$ in this and the following subsection.

In Section~\ref{subsubsec:MAB_onearmstatistics} we will exploit the structural MAB limit experiment (in (\ref{eqn:SDE_MAB})) and (\ref{eqn:statistic_conv_joint_lr})--(\ref{eqn:statistic_conv_alt}) in particular to study asymptotic distributions of commonly-used test statistics. We will analyze three test statistics that are natural to consider for the MAB problem: the classical Student's t statistic, the Adaptively-Weighted (AW) statistic by \citet{zhang2021statistical}, and a statistic based on Inverse Propensity Weighting (IPW) as studied in, for example, \cite{hadad2021confidence}. 

It turns out that only the AW test statistic is asymptotically distribution-free with respect to $\m_1\in\rR$ under the null. For this test statistic it thus is trivial to obtain a critical value that yields an asymptotically valid test. For the other test statistics, the asymptotic null distribution depends on $\m_1$ (which, being a parameter at the contiguity rate, cannot be estimated consistently). As such it is unclear, and perhaps even impossible, to obtain asymptotically valid tests from these statistics. Hence, perhaps surprisingly, the t-test and IPW test cannot be used---assuming one insists on (asymptotic) validity of tests---for testing hypothesis (\ref{eqn:hyp_1}). After our discussion of the test statistics, we will also present an upper bound to the (local) asymptotic power of asymptotically valid tests. Finally, in Section~\ref{subsubsec:montecarlo_MAB_onearm}, we will provide simulation results that corroborate our theoretical results.

\subsubsection{Test statistics}\label{subsubsec:MAB_onearmstatistics}

\subsubsection*{Student's t-test}
Using $\mu=\mu_0$ in the definition of $\reward_{2,\n}$ (and exploiting known unit variance), the classical Student's t statistic is given by
\begin{align*}
\stat_{\n}^{\text{t}} 
%= \frac{\n^{-1/2} \reward_{2,\n} / \freq_{2,\n}}{1/\sqrt{\n\freq_{2,\n} }} 
= \frac{ \reward_{2,\n}}{\sqrt{\freq_{2,\n}}}.
\end{align*}
Ancillary~\ref{ancillary:weakconvergence_MAB} yields that the joint convergence in (\ref{eqn:statistic_conv_joint_lr}) holds. Specifically, under $\lawn_{0,0}$, we have
$\left(\stat_\n^{\text{t}}, \log\left(\dd\lawn_{\m_1,\m_2}/\dd\lawn_{0,0}\right)\right) \wto \left(\stat^{\text{t}}, \log\left(\dd\law_{\m_1,\m_2}/\dd\law_{0,0}\right) \right)$, where $\stat^{\text{t}} \defeq \reward_2(1)/\sqrt{\freq_2(1)}$. Invoking Le Cam’s third lemma, see (\ref{eqn:statistic_conv_alt}), yields, under $\lawn_{\m_1,\m_2}$ for all $\m_1,\m_2\in\rR$, $\stat_\n^{\text{t}} \Rightarrow \stat^{\text{t}}$, where the behavior of $\reward_2$ and $\freq_2$ is characterized by the SDEs in (\ref{eqn:SDE_MAB}). The distribution of $\stat^{\text{t}}$ under $\law_{\m_1,\m_2}$ thus is
\begin{equation}
\mathcal{L}(\stat^{\text{t}} \cond \law_{\m_1,\m_2}) = \mathcal{L}\left( \m_2\int_0^1\policy_2(\uu)\dd\uu + \int_0^1\sqrt{\policy_2(\uu)}\dd\B_{\e_2}(\uu) \cond \law_{\m_1,\m_2}\right).
\end{equation}

We make two observations: First, the distribution of $\stat^{\text{t}}$ is generally not normal. This is because, in $\int_0^1\sqrt{\policy_2(\uu)}\dd\B_{\e_2}(\uu)$, the integrand process $\sqrt{\policy_2(\uu)} = \sqrt{\policy_2(\freqb(\uu),\rewardb(\uu))}$ depends on the process $\rewardb(\uu)$, which in turn depends on the integrator process $\B_{\e_2}(\uu)$. This is the same reason as why the MAB limit experiment is not normal. This result has been documented in recent literature, including \cite{deshpande2018accurate}, \cite{zhang2020inference} and \cite{hadad2021confidence}. Second, under the null hypothesis where $\m_2 = 0$ and $\m_1\in\rR$, $\stat^{\text{t}}$ is not distribution-free with respect to $\m_1$, as $\m_1$ influences $\reward_1(\uu)$, which in turn affects $\policy_2(\freqb(\uu),\rewardb(\uu))$. We provide Monte Carlo evidence based on simulations of the SDEs in (\ref{eqn:SDE_MAB}), showing that the null distribution of $\stat^{\text{t}}$ indeed varies with $\m_1$ (see Figure~\ref{fig:MABlim_statM_ThompsonInv} and Figure~\ref{fig:MABlim_cdfM_ThompsonInv}). To the best of our knowledge, we are the first to highlight this issue, which becomes evident through the structural limit experiment.

\subsubsection*{Adaptively-Weighted (AW) test}
Adaptively-Weighted statistics for bandits were introduced, in the context of M-estimators, by \citet{zhang2021statistical}. Using $\mu=\mu_0$ and $r_1 = 1/2$ in the definition of $\rewardd_{r_1;2,\n}$, we consider a simple version of AW test statistic defined as\footnote{The general version of the AW statistic is given by $\stat^{\text{AW}}_{\n}(\psista) \defeq \frac{1}{\widebar\psista_2}  \sum_{\ii=2}^{\n}\sqrt{\frac{\psista_2\left((\ii-1)/\n \right)}{\policy_{2,\ii-1}}}(\reward_{2,\ii} - \reward_{2,\ii-1})$, where $\psista_2: [0,1] \to [0,1]$ is a so-called variance stabilizing policy which is deterministic for which $\widebar\psista_2 = \int_0^1\psista_2(\uu)\dd\uu$ is well-defined. In this sense, the simple version corresponds to $\psista_1(\uu) = \psista_2(\uu) = 1/2$, $\uu\in[0,1]$.}
\[
\stat^{\text{AW}}_{\n}
\defeq \rewardd_{r_1=\frac{1}{2};2,\n}
% = \sum_{\ii=2}^{\n}\frac{1}{\sqrt{\policy_{2,\ii}}}(\reward_{2,\ii} - \reward_{2,\ii-1})
= \frac{1}{\sqrt{\n}}\sum_{\jj=1}^{\ii} \frac{\indicator_{\{A_{\jj} = k\}}}{\sqrt{\policy_{k,\jj}}}(\Y_{\jj} - \mu)
\]
By Ancillary~\ref{ancillary:weakconvergence_MAB}, (\ref{eqn:statistic_conv_joint_lr}) holds for the AW statistic. That is, under $\lawn_{0,0}$, $\big(\stat_\n^{\text{AW}}, \log\big(\dd\lawn_{\m_1,\m_2}/\dd\lawn_{0,0}\big)\big) \wto \big(\stat^{\text{AW}}, \log(\dd\law_{\m_1,\m_2}/\dd\law_{0,0})\big)$, where
\begin{equation} \label{eq:AWTest}
\stat^{\text{AW}}
% \defeq \int_0^1 \frac{1}{\sqrt{\policy_2(\uu)}}\dd\reward_{2}(\uu) 
= \W_{\e_2}(1).
\end{equation}
Then, invoking Le Cam's third lemma and according to (\ref{eqn:SDE_MAB}), the distribution of $\stat^{\text{AW}}$ under $\law_{\m_1,\m_2}$ can be represented by
\begin{equation}\label{eqn:AW1arm_shift}
\mathcal{L}(\stat^{\text{AW}} \cond \law_{\m_1,\m_2}) = \mathcal{L}\left( \m_2\int_0^1\sqrt{\policy_2(\uu)}\dd\uu + \B_{\e_2}(1) \cond \law_{\m_1,\m_2}\right).
\end{equation}

Note that $\B_{\e_2}$ is a standard Brownian motion for all $\m_1, \m_2\in\rR$. Consequently, under the null ($\m_2=0$), $\stat^{\text{AW}}$ follows a standard normal distribution regardless of the value of $\m_1$, making it distribution-free (with respect to $\m_1$). This makes it trivial to obtain a critical value that yields an asymptotically valid test. Moreover, (\ref{eqn:AW1arm_shift}) demonstrates that the (local) asymptotic power increases with $\m_2$, the local reward parameter, as well as $\int_0^1\sqrt{\policy_2(\uu)}\dd\uu$, a random term that becomes larger when Arm~$2$ is sampled more frequently. This random term leads, in general, to a non-Gaussian distribution for the AW statistic under the alternative (i.e., for $\m_2\neq 0$).

\subsubsection*{Inverse Propensity Weighted (IPW) test}
The last statistic we study is based on another commonly used idea, namely \textit{Inverse Propensity Weighting} (IPW). Now using $r_1 = 1$ in the definition of $\rewardd_{r_1;2,\n}$, the IPW test for the hypothesis of interest is given by
\[
\stat^{\text{IPW}}_\n 
= \rewardd_{r_1=1;2,\n}
% = \sum_{\ii=2}^\n \frac{1}{\policy_{2,\ii}} (\reward_{2,\ii} - \reward_{2,\ii-1})
= \frac{1}{\sqrt{\n}}\sum_{\jj=1}^{\ii} \frac{\indicator_{\{A_{\jj} = k\}}}{\policy_{k,\jj}}(\Y_{\jj} - \mu)
\]
Again, Ancillary~\ref{ancillary:weakconvergence_MAB} implies (\ref{eqn:statistic_conv_joint_lr}). That is, under $\lawn_{0,0}$ and for all $\m_1,\m_2\in\rR$, $\big(\stat_\n^{\text{IPW}}, \log\big(\dd\lawn_{\m_1,\m_2}/\dd\lawn_{0,0}\big)\big) \wto \big(\stat^{\text{IPW}}, \log(\dd\law_{\m_1,\m_2}/\dd\law_{0,0})\big)$, where
\begin{equation}\label{eq:IPWTest}
\stat^{\text{IPW}} 
= \rewardd_{1;2}(1)
= \int_0^1 \frac{\dd\W_{\e_2}(\uu)}{\sqrt{\policy_2(\uu)}}.
\end{equation}

Display (\ref{eqn:statistic_conv_alt}) implies that the asymptotic distribution of 
$\stat^{\text{IPW}}_\n$ under $\lawn_{\m_1,\m_2}$ is given by
\begin{equation}
\mathcal{L}(\stat^{\text{IPW}} \cond \law_{\m_1,\m_2}) 
= \mathcal{L}\left( \m_2 + \int_0^1\frac{\dd\B_{\e_2}(\uu)}{\sqrt{\policy_2(\uu)}}
\cond \law_{\m_1,\m_2}\right).
\end{equation}
Based on the SDEs in (\ref{eqn:SDE_MAB}) and the same reasoning as for the Student's t statistic, the asymptotic distribution of $\stat_\n^{\text{IPW}}$ under $\lawn_{\m_1,0}$ appears to be neither normal nor distribution-free with respect to $\m_1$ under the null hypothesis. A formal proof of these conjectures is, however, difficult to obtain. Figure~\ref{fig:MABlim_statM_ThompsonInv} presents a Monte Carlo approximation of the distribution of $\stat^{\text{IPW}}$ for $\m_1=\m_2=0$ and $\m_1=10,$ $\m_2=0$, providing clear evidence that the asymptotic null distribution indeed varies with $\m_1$ and is non-Gaussian.

\subsubsection*{An upper bound to the (local) asymptotic power}
Recall from the discussion at the start of this section that an upper bound to the power of valid tests in the limit experiment yields, via the asymptotic representation theorem, an upper bound to the local asymptotic power of asymptotically valid tests.

Consider, in the limit experiment and for fixed $\mref_1,\mref_2\in\rR$, the auxiliary hypothesis $H_0 : (\m_1,\m_2) = (\mref_1,0)$ versus $H_1 : (\m_1,\m_2) = (\mref_1,\mref_2)$. The Neyman-Pearson test statistic for this hypothesis is given by
\begin{align*}
\stat^{\text{NP}}
% &= \log\frac{\dd\law_{\m_1=\mref_1,\m_2=\mref_2}}{\dd\law_{\m_1=\mref_1,\m_2=0}} 
%  = \log\frac{\dd\law_{\m_1=\mref_1,\m_2=\mref_2}}{\dd\law_{\m_1=0,\m_2=0}}
%   -\log\frac{\dd\law_{\m_1=\mref_1,\m_2=0}}{\dd\law_{\m_1=0,\m_2=0}}   \\
&= \log\frac{\dd\law_{\mref_1,\mref_2}}{\dd\law_{\mref_1,0}} 
 = \log\frac{\dd\law_{\mref_1,\mref_2}}{\dd\law_{0,0}} -\log\frac{\dd\law_{\mref_1,0}}{\dd\law_{0,0}}   \\
&= (\LLRlim_1(\mref_1) + \LLRlim_2(\mref_2)) - \LLRlim_1(\mref_1)   \\
&= \mref_2\reward_{2}(1) - \frac{1}{2}\mref_2^2\freq_{2}(1).
\end{align*}
Letting $\cv^*_{\alpha}=\cv^*_{\alpha}(\mref_1, \mref_2)$ denote the $1-\alpha$ quantile of $\stat^{\text{NP}}$ under $\law_{\mref_1,0}$,
the power, at 
$\law_{\mref_1,\mref_2}$, of $\stat^{\text{NP}}$ is given by
\begin{equation}
	\power_{\alpha}^*(\mref_1, \mref_2) = \Exp_{0,0}\left[\indicator_{\left\{\stat^{\text{NP}} > \cv_{\alpha}^* \right\}}\frac{\dd\law_{\mref_1,\mref_2}}{\dd\law_{\mref_1,0}}\right]
    = \law_{\mref_1,\mref_2}
    \left(  \stat^{\text{NP}} > \cv_{\alpha}^* \right).
\end{equation}
Since the Neyman-Pearson test is the most powerful test for the auxiliary hypothesis, it easily follows that the asymptotic power of an asymptotically valid test, under $\lawn_{\m_1,\m_2}$ with $\m_2 > 0$ and $\m_1\in\rR$, is bounded from above by  $\power_{\alpha}^*(\m_1, \m_2)$. However, it remains unclear whether this upper bound is sharp (i.e., attainable by some feasible asymptotically valid test). In Figure~\ref{fig:MABlim_powerMalgorithms_AW} we will see that the local asymptotic power of the AW test lies very close to this upper bound.

\subsubsection{Simulation results}\label{subsubsec:montecarlo_MAB_onearm}
In this section, we corroborate out theoretical results via Monte Carlo simulations. We study both the size and power properties of the test statistics. 
We consider $T = 200$ for the finite-sample distributions. To approximate distributions of statistics in the limit experiment (which reflect asymptotic distributions of statistics in the sequence),  we simulate the SDEs (\ref{eqn:SDE_MAB}) via an Euler scheme on a grid with $100$ points. Throughout this analysis, we maintain a significance level of $\alpha = 5\%$ and all results are based on $50,000$ replications. 

We consider the three translation-invariant sampling schemes introduced in Section~\ref{subsec:TranslationInvariantSamplingMAB}. Specifically, we implement the translation-invariant Thompson algorithm with $b = 1/20$, the translation-invariant tempered-greedy algorithm with $\alpha = 1$, and the translation-invariant tempered-UCB algorithm with $\alpha = 1$ and $\delta = 1$. Additionally, we include the original Thompson sampling scheme (also with $b = 1/20$) in order to show how non-translation-invariant algorithms may result in invalid tests.

\subsubsection*{Null distributions}

\begin{figure}[!htb] 
\centering
\hspace*{-3mm}
\includegraphics[width = 6.5in]{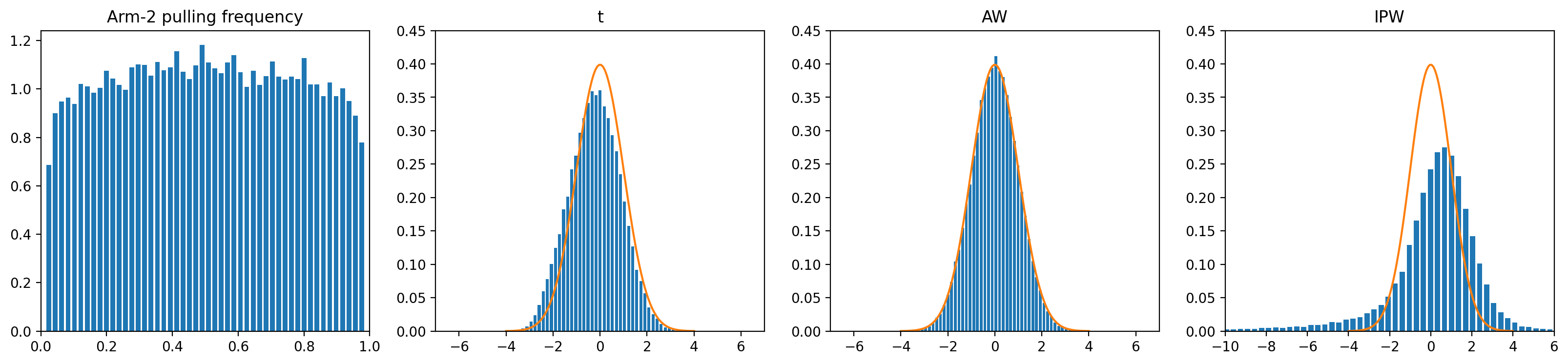}
\hspace*{-3mm}
\includegraphics[width = 6.5in]{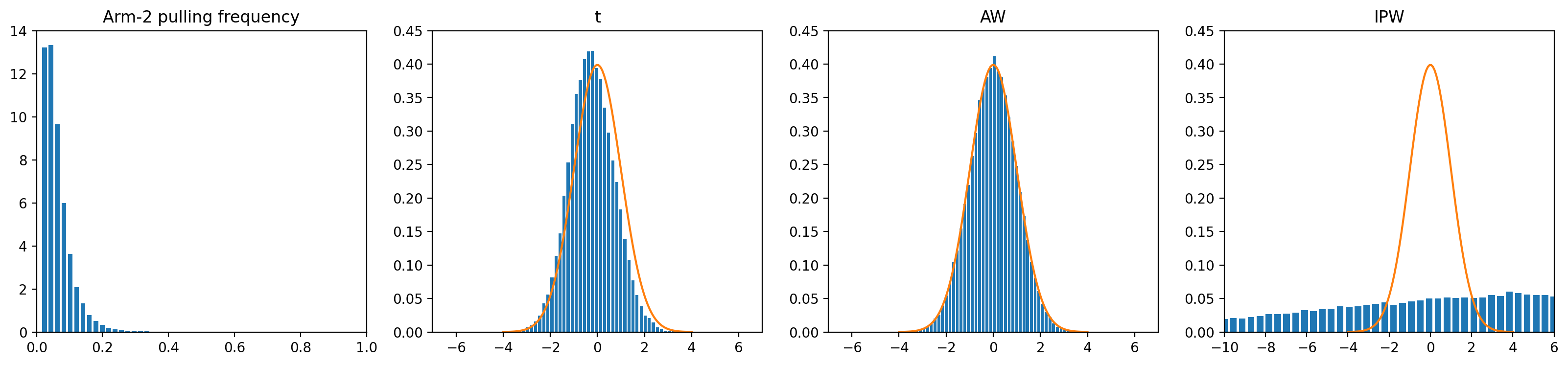}
\caption{{\small Histograms of arm-pulling frequencies, the one-arm t, AW, and IPW statistics, for the limiting Gaussian MAB experiment under the \textit{translation-invariant Thompson sampling}. Parameter setting: $\m_2 = \m_1 = 0$ (top); $\m_1 = 10$ and $\m_2 = 0$ (bottom).}}
\label{fig:MABlim_statM_ThompsonInv}
\end{figure}

\begin{figure}[!htb] 
\centering
\hspace*{-3mm}
\includegraphics[width = 6in]{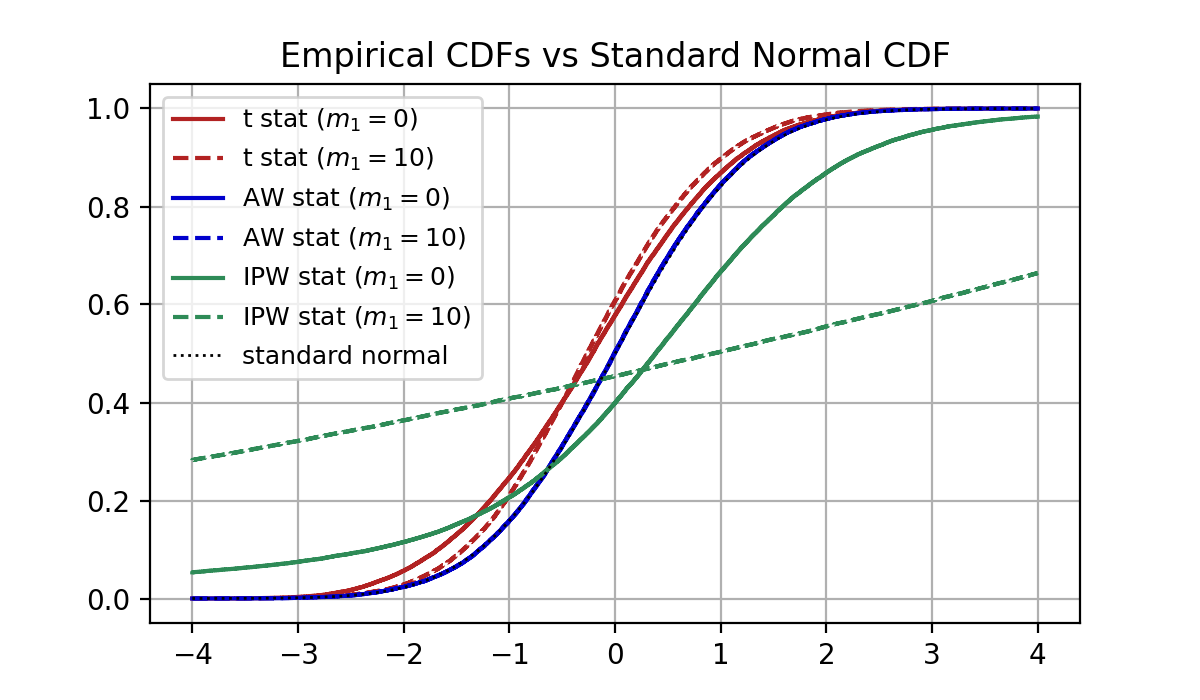}
\caption{{\small Empirical cumulative distribution functions (CDFs) of the one-arm t, AW, and IPW statistics for Arm-$2$ under the null hypothesis ($\m_2 = 0$), with nuisance parameter $\m_1 = 0$ (solid lines) and $\m_1 = 10$ (dashed), in the limiting Gaussian MAB experiment under the \textit{translation-invariant Thompson sampling}.}}
\label{fig:MABlim_cdfM_ThompsonInv}
\end{figure}

\begin{figure}[!htb] 
\centering
\hspace*{-3mm}
\includegraphics[width = 6.5in]{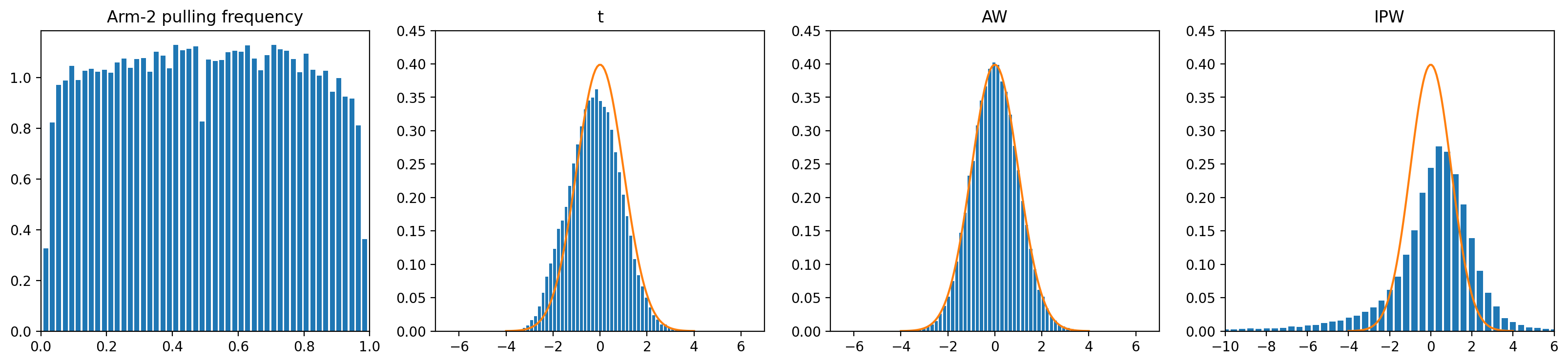}
\hspace*{-3mm}
\includegraphics[width = 6.5in]{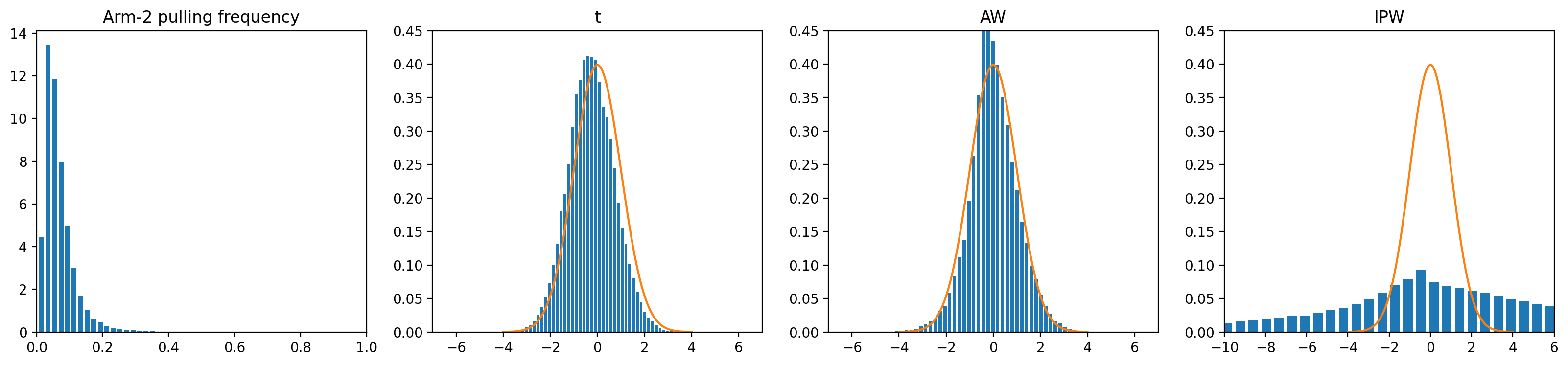}
\caption{{\small Histograms of arm-pulling frequencies, the one-arm t, AW, and IPW statistics, for the finite-sample ($\n = 200$) Gaussian MAB experiment under the \textit{translation-invariant Thompson sampling}. Parameter setting: $\m_2 = \m_1 = 0$ (top); $\m_1 = 10$ and $\m_2 = 0$ (bottom).}}
\label{fig:MABseq_statM_ThompsonInv}
\end{figure}

Figure~\ref{fig:MABlim_statM_ThompsonInv} presents histograms of arm-pulling frequencies and the one-arm t, AW, and IPW statistics (from left to right) for the structural limit experiment under translation-invariant Thompson sampling scheme. The upper panel corresponds to the setting $(\m_1,\m_2) = (0,0)$, while the lower panel corresponds to $(\m_1,\m_2) = (10,0)$. Both settings are thus under the null $\m_2 = 0$. The results confirm the two key observations in Section~\ref{subsec:MAB_evaluateonearm}. First, neither the t nor the IPW statistic follows a normal distribution when the data is adaptively collected. Second, comparing the upper and lower panels indicates that the distributions of the t and IPW statistics vary with changes in the nuisance parameter $\m_1$, demonstrating that these statistics are not distribution-free w.r.t.\ $\m_1$ under the null. In contrast, the AW statistic consistently exhibits a standard normal distribution, regardless of the value of $\m_1$, corroborating our theoretical result that this statistic is distribution-free w.r.t.\ $\m_1$.

To further illustrate these points, Figure~\ref{fig:MABlim_cdfM_ThompsonInv} plots the associated empirical cumulative distribution functions (CDFs) alongside the standard normal CDF (black dotted). Solid and dashed lines correspond to $(\m_1,\m_2) = (0,0)$ and $(\m_1,\m_2) = (10,0)$, respectively. We see a clear distinction of the red and blue solid lines compared to the dashed ones, showing that the t and IPW statistics are not distribution-free w.r.t.\ $\m_1$, and all of them deviate from the standard normal CDF. Conversely, the CDF of the AW statistic (green) perfectly matches the standard normal CDF in both cases. 

Finally, we present the same histograms but from a finite-sample two-armed bandit experiment with $T = 200$ rounds in Figure~\ref{fig:MABseq_statM_ThompsonInv}, serving as the finite-sample counterpart to Figure~\ref{fig:MABlim_statM_ThompsonInv}. A comparison of the two figures demonstrates that the limit experiment provides a good approximation of the finite-sample MAB experiments. Some noticeable deviations appear in the lower panel relative to its asymptotic counterpart in Figure~\ref{fig:MABlim_statM_ThompsonInv}, mainly due to limited sampling of Arm-2, but additional (unreported) simulations confirm that increasing $\n$ further diminishes these differences. 

We report size results for the valid one-arm AW test in Table~\ref{table:sizes} in Appendix~\ref{appsec:additionalsimulations}.

\subsubsection*{Power results}
As only the AW test is asymptotically valid, we will only consider this statistic in the power analysis, together with the upper bound provided by the oracle NP test. 

\begin{figure}[ht] 
\centering
\hspace*{-0mm}\includegraphics[width = 5in]{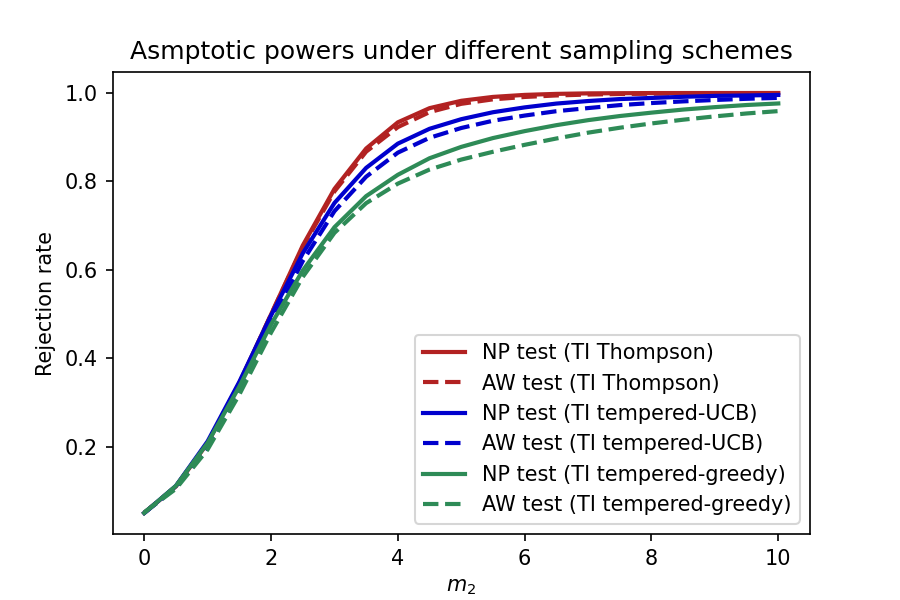}
\caption{{\small Asymptotic power under $\m_1 = 0$ of the Neyman-Person (NP) test (solid) and those of the Adaptive-Weighted (AW) test (dashed) under translation-invariant (TI) Thompson sampling (red), the TI tempered-greedy algorithm (green), and the TI tempered-UCB algorithm (blue).}}
\label{fig:MABlim_powerMalgorithms_AW}
\end{figure}

Figure~\ref{fig:MABlim_powerMalgorithms_AW} displays the asymptotic power when the nuisance parameter $\m_1 = 0$\footnote{Note that the AW test is distribution-free with respect to $\m_1$ only under the null hypothesis; consequently, its power still depends on $\m_1$.} of the AW test (dashed), alongside the power upper bound provided by the oracle NP test (solid lines), under the three translation-invariant algorithms introduced in Section~\ref{subsec:TranslationInvariantSamplingMAB}. First, we observe that the power varies across algorithms. Specifically, when an algorithm places greater emphasis on exploration (e.g., the TI Thompson sampling), the resulting power tends to be higher; conversely, more exploitative algorithms tend to yield lower power.\footnote{Unreported results, based on varying the tuning parameters within each sampling algorithm to adjust the exploration–exploitation trade-off, further support this observation.} Second, the AW test exhibits a slight loss in power compared to the oracle bound under all three algorithms, with this gap becoming slightly more pronounced as the algorithm leans toward exploitation (e.g., the tempered-greedy algorithm). 

\begin{figure}[ht] 
\centering
\hspace*{-0mm}\includegraphics[width = 5in]{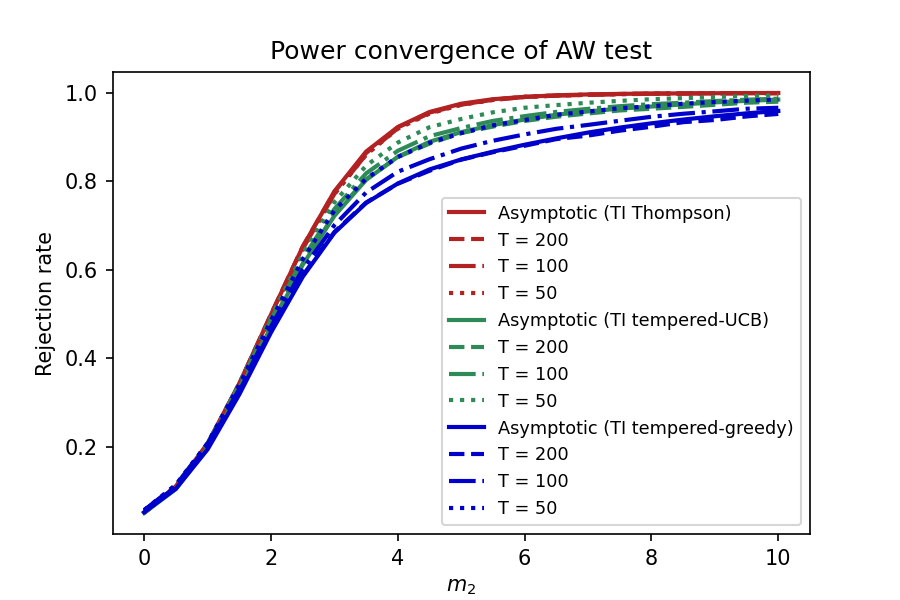}
\caption{{\small Asymptotic power under $\m_1 = 0$ (solid) and finite-sample power for sample sizes $\n=50$ (dotted), $\n=100$ (dash-dotted), and $\n=200$ (dashed), of the Adaptively-Weighted (AW) test, under translation-invariant (TI) Thompson sampling (red), the TI tempered-greedy algorithm (green), and the TI tempered-UCB algorithm (blue).}}
\label{fig:MABlim_powerMconvergence_AW}
\end{figure}

In Figure~\ref{fig:MABlim_powerMconvergence_AW}, we examine how well the asymptotic power---simulated using the limit experiment---approximates that of the finite-sample MAB experiment. Specifically, for the three translation-invariant algorithms (in different colors), we consider sample sizes equal to $\n=50$ (dotted), $\n=100$ (dash-dotted), and $\n=200$ (dashed). In all cases, we find that the asymptotic power closely approximates the finite-sample results, with the differences diminishing as the sample size increases. Notably, power convergence appears more immediate for TI Thompson sampling, which places greater weight on exploration, with the finite-sample and asymptotic powers nearly coinciding. That said, convergence is also rapid for the tempered-greedy and tempered-UCB algorithms: with just 200 observations, the finite-sample power already aligns with the asymptotic power curves. At last, we observe that the convergence of the finite-sample power curves to their asymptotic versions is not necessarily monotonic from below. While there is no theoretical reason that it should be, this seems to be yet another uncommon feature of the MAB problem.

%%%%%%%%%%%%%%%%%%%%%%%%%%%%%%%%%%%%%%%%%%%%%%%%%%%%%%%%%%%%%%%%%%%%%%%%%%%

\subsection{Tests for comparing two arms} \label{subsec:MAB_comparetwoarms}
A potentially more intriguing question involves determining superior performance of one arm over the other (e.g., treatment versus control). Recall the reparameterization $\m_1 = \m - \dlpint$ and $\m_2 = \m + \dlpint$, we can rewrite the limit experiment SDEs (\ref{eqn:SDE_MAB}) associated with processes $\rewardd_{r_1;k}$ as
\begin{equation} \label{eqn:SDERepar}
\begin{aligned} 
\dd\rewardd_{r_1;1}(\uu) =&~ (\m - \dlpint)\policy_1^{1-r_1}(\uu)\dd\uu + \policy_1^{1/2-r_1}(\uu)\dd\B_{\e_1}(\uu),\\
\dd\rewardd_{r_1;2}(\uu) =&~ (\m + \dlpint)\policy_2^{1-r_1}(\uu)\dd\uu + \policy_2^{1/2-r_1}(\uu)\dd\B_{\e_2}(\uu).
\end{aligned} 
\end{equation}

For the hypothesis of interest in (\ref{eqn:hyp_2}), namely $H_0: \dlpint = 0$ against $H_1: \dlpint > 0$, treating the common local reward parameter $\m$ as a nuisance parameter, we focus on test statistics that are distribution-free with respect to the unknown local parameter $\m$. In particular, we first propose a family of such distribution-free test statistics, which gives rise to two-sample versions of the Student's t, AW, and IPW statistics. Notably, the finite-sample version of this family is, by design, also distribution-free with respect to the global common reward parameter $\mu$---a desirable property in practice. We also derive an upper bound on the power of asymptotically valid tests. All corresponding simulation results are reported in Section~\ref{subsubsec:montecarlo_MAB_twoarms}.

\subsubsection{Test statistics} \label{subsubsec:MABstatistics_twoarms}

\subsubsection*{Distribution-free two-sample statistic family}
Let $r_1 = r_2 = r$ in definitions of $\rewardd_{r_1;k,\n}$ and $\freqd_{r_2;k,\n}$ in (\ref{eqn:reward_frequency_r1r2}), the family of tests that is distribution-free with respect to $\m$ is based on 
\begin{equation} \label{eqn:distributionfree_stat}
\stat^{\text{TS-DF}}_{r;\n} \defeq \frac{\rewardd_{r;2,\n}}{\freqd_{r;2,\n}} - \frac{\rewardd_{r;1,\n}}{\freqd_{r;1,\n}},
\end{equation}
which weakly converges to 
\begin{equation*} 
\stat^{\text{TS-DF}}_r \defeq \frac{\int_0^1 \policy_2^{-r}(\uu)\dd\reward_{2}(\uu)}{\int_0^1 \policy_2^{-r}(\uu)\dd\freq_{2}(\uu)} - \frac{\int_0^1 \policy_1^{-r}(\uu)\dd\reward_{1}(\uu)}{\int_0^1 \policy_1^{-r}(\uu)\dd\freq_{1}(\uu)}.
\end{equation*}
The distribution-freeness with respect to $\m$ is easily seen as, using (\ref{eqn:SDERepar}),
\begin{equation*} 
\frac{\int_0^1 \policy_k^{-r}(\uu)\dd\reward_k(\uu)}{\int_0^1 \policy_k^{-r}(\uu)\dd\freq_k(\uu)} 
= \m_k + \frac{\int_0^1\policy_k^{1/2-r}(\uu)\dd\B_{\e_k}(\uu)}{\int_0^1 \policy_k^{1-r}(\uu)\dd\uu},
\end{equation*}
for $k = 1,2$, under any translation-invariant sampling scheme $\policy_k(\uu)$ (Corollary~\ref{cor_MAB_I}). Thus, $\m$ cancels out in $\stat^{\text{TS-DF}}(r)$ for all $r\in\rR$. 

It is also easy to observe 
\begin{align*}
\frac{\rewardd_{r,k,\n}}{\freqd_{r,k,\n}}
= 
- \sqrt{\n}\mu + \sqrt{\n}\frac{\sum_{\ii=2}^\n \policy_{k,\ii-1}^{-r}\indicator_{\{A_{\ii} = k\}}\Y_{\ii}}{\sum_{\ii=2}^\n \policy_{k,\ii-1}^{-r}\indicator_{\{A_{\ii} = k\}}}.
\end{align*}
Likewise, $\mu$ cancels out in (\ref{eqn:distributionfree_stat}) for all $r\in\rR$, making $\stat^{\text{TS-DF}}_{r;\n}$ distribution-free with respect to the global common reward parameter $\mu$.

This class leads to distribution-free two-sample t, AW, and IPW tests, for the case of $r = 0$, $1/2$, and $1$, respectively, as follows. 

\subsubsection*{Two-sample Student's t-test}
The classical two-sample Student's t-test (with known, unit variances) corresponds to the case of $r = 0$, where $\rewardd_{0;k,\ii} = \reward_{k,\ii}$, $\freqd_{0;k,\ii} = \freq_{k,\ii}$ and $\stat^{\text{TS-DF}}_{0;\n} = \reward_{2,\n}/\freq_{2,\n} - \reward_{1,\n}/\freq_{1,\n}$. Specifically, define
\begin{align*}
\stat_\n^{\text{TS-t}} \defeq \left(\frac{\reward_{2,\n}}{\freq_{2,\n}} - \frac{\reward_{1,\n}}{\freq_{1,\n}}\right) \bigg/ \sqrt{\frac{1}{\freq_{1,\n}} + \frac{1}{\freq_{2,\n}}},
\end{align*}
which weakly converges to 
\begin{align*}
\stat^{\text{TS-t}} \defeq \left(\frac{\reward_{2}(1)}{\freq_{2}(1)} - \frac{\reward_{1}(1)}{\freq_{1}(1)}\right) \bigg/ \sqrt{\frac{1}{\freq_{1}(1)} + \frac{1}{\freq_{2}(1)}}.
\end{align*}

The distribution of the numerator of $\stat^{\text{TS-t}}$ under $\law_{\m_1,\m_2}$ is given by
\[
\mathcal{L}\left( 2\dlpint + \frac{\int_0^1\sqrt{\policy_2(\uu)}\dd\B_{\e_2}(\uu)}{\int_0^1\policy_2(\uu)\dd\uu} - \frac{\int_0^1\sqrt{\policy_1(\uu)}\dd\B_{\e_1}(\uu)}{\int_0^1\policy_1(\uu)\dd\uu} \cond \law_{\m_1,\m_2} \right).
\]
For the same reason of the one-arm t statistic---namely the dependence of processes $\sqrt{\policy_k}$ and $\B_{\e_k}$ for both $k = 1,2$---$\stat^{\text{TS-t}}$ is generally not normally distributed. 

% which first motivates the standardization by the denominator, noting $\int_0^1\policy_k(\uu)\dd\uu\big/\big(\int_0^1\policy_k(\uu)\dd\uu\big)^2 = 1/\int_0^1\policy_k(\uu)\dd\uu = 1/\freq_{k}(1)$.

\subsubsection*{Two-sample AW test}
The two-sample AW statistic corresponds to $r = 1/2$, in particular, a standardized version of $\stat^{\text{TS-DF}}_{1/2;\n}$. Define    
\[
\stat_\n^{\text{TS-AW}}  
\defeq \left(\frac{\rewardd_{1/2;2,\n}}{\freqd_{1/2;2,\n}} - \frac{\rewardd_{1/2;1,\n}}{\freqd_{1/2;1,\n}}\right) \bigg/ \sqrt{\frac{1}{(\freqd_{1/2;2,\n})^2} + \frac{1}{(\freqd_{1/2;1,\n})^2}}
\]
which weakly converges to
\begin{align*}
\stat^{\text{TS-AW}}
\defeq&~ \left(\frac{\int_0^1 \policy_2^{-1/2}(\uu)\dd\reward_{2}(\uu)}{\int_0^1 \policy_2^{-1/2}(\uu)\dd\freq_{2}(\uu)} - \frac{\int_0^1 \policy_1^{-1/2}(\uu)\dd\reward_{1}(\uu)}{\int_0^1 \policy_1^{-1/2}(\uu)\dd\freq_{1}(\uu)}\right)  \\
&~~~~ \Bigg/\sqrt{\frac{1}{\big(\int_0^1\policy_2^{1/2}(\uu)\dd\uu\big)^2} + \frac{1}{\big(\int_0^1\policy_1^{1/2}(\uu)\dd\uu\big)^2}}.
\end{align*}

The distribution of the nominator of $\stat^{\text{TS-AW}}$ under $\law_{\m_1,\m_2}$ is given by
\[
\mathcal{L}\left( 2\dlpint + \frac{\B_{\e_2}(1)}{\int_0^1 \policy_2^{-1/2}(\uu)\dd\freq_{2}(\uu)} - \frac{\B_{\e_1}(1)}{\int_0^1 \policy_1^{-1/2}(\uu)\dd\freq_{1}(\uu)} \cond \law_{\m_1,\m_2} \right).
\]
After standardization by the denominator, we find the TS-AW statistic, $\stat^{\text{TS-AW}}$, to be (close to) normally distributed. However, a formal proof that its distribution is standard normal, is challenging to obtain. Therefore, we instead provide simulation evidence in the next subsection.

\subsubsection*{Two-sample IPW test} 
The two-sample IPW statistic, based on $\stat^{\text{TS-DF}}(1)$, is defined as
\begin{align*}
\stat_\n^{\text{TS-IPW}} 
\defeq \frac{\rewardd_{1;2,\n}}{\freqd_{1;2,\n}} - \frac{\rewardd_{1;1,\n}}{\freqd_{1;1,\n}},
\end{align*}
which weakly converges to 
\begin{align*}
\stat^{\text{TS-IPW}} 
\defeq \left(\int_0^1\policy_2^{-1}(\uu)\dd\reward_{2}(\uu) - \int_0^1\policy_1^{-1}(\uu)\dd\reward_{1}(\uu)\right),
\end{align*}
noting that the denominators equal one as $\int_0^1\policy_k^{-1}(\uu)\dd\freq_{k}(\uu) = 1$ for $k = 1,2$. 

The distribution of $\stat^{\text{TS-IPW}}$ under $\law_{\m_1,\m_2}$ is given by
\[
\mathcal{L}\left( 2\dlpint + \int_0^1\policy_2^{-1/2}(\uu)\dd\B_{\e_2}(\uu) - \int_0^1\policy_1^{-1/2}(\uu)\dd\B_{\e_1}(\uu) \cond \law_{\m_1,\m_2} \right),
\]
which is generally not normal as the processes $\policy_k^{-1/2}$ and $\B_{\e_k}$ are dependent. 

\subsubsection*{An upper bound to asymptotic power of valid tests}
For this case of comparing the two arms, we also provide an upper bound on the power of asymptotically valid tests. Specifically, consider the auxiliary hypothesis $H_0 : \dlpint = 0, \, \m = \mref$ versus $H_1 : \dlpint = \bar\dlpint, \, \m = \mref$. The associated Neyman-Pearson test statistic for this hypothesis is given by
\begin{align*}
\stat^{\text{TS-NP}} 
=&~ \log\frac{\dd\law_{\mref-\bar\dlpint,\mref+\bar\dlpint}}{\dd\law_{\mref,\mref}} 
= \log\frac{\dd\law_{\mref-\bar\dlpint,\mref+\bar\dlpint}}{\dd\law_{0,0}} -\log\frac{\dd\law_{\mref,\mref}}{\dd\law_{0,0}}  \\
=&~ \LLRlim(\mref - \bar\dlpint, \mref + \bar\dlpint) - \LLRlim(\mref, \mref)  \\
=&~ \bar\dlpint(\reward_{2}(1) - \reward_{1}(1)) - \frac{1}{2}\bar\dlpint^2 + \mref\bar\dlpint(\freq_{1}(1) - \freq_2(1)).
\end{align*}
%In this case of $K = 2$ and $\m = 0$, $\reward_{2}(\uu) - \reward_{1}(\uu)$ is a Brownian motion as $\reward_{2}(\uu) - \reward_{1}(\uu) - \dlpint\uu$ is a continuous martingale with quadratic variation $\freq_{1}(\uu) + \freq_{2}(\uu) = \uu$ for $\uu\in[0,1]$. 
Let $\cv^{**}_{\alpha}$ denote the $1-\alpha$ quantile of $\stat^{\text{TS-NP}}$ under $\law_{\mref, \mref}$. Then the power of $\stat^{\text{TS-NP}}$ at $\law_{\mref-\bar\dlpint,\mref+\bar\dlpint}$ is given by
\begin{equation}
\power_{\alpha}^{**}(\mref, \bar\dlpint) = \Exp_{0,0}\left[\indicator_{\left\{\stat^{\text{NP}} > \cv_{\alpha}^{**} \right\}}\frac{\dd\law_{\mref-\bar\dlpint,\mref+\bar\dlpint}}{\dd\law_{\mref,\mref}}\right]
    = \law_{\mref-\bar\dlpint,\mref+\bar\dlpint}
    \left(  \stat^{\text{NP}} > \cv_{\alpha}^{**} \right).
\end{equation}
Then, the Neyman-Pearson lemma implies that the asymptotic power of an asymptotically valid test under $\lawn_{\m-\dlpint,\m+\dlpint}$, for $\dlpint > 0$ and $\m\in\rR$, is bounded above by $\power_{\alpha}^{**}(\m, \dlpint)$. However, as is the case of evaluating Arm-$2$ (Section~\ref{subsec:MAB_evaluateonearm}), it remains unclear whether this upper bound is sharp.

\subsubsection{Simulation results}\label{subsubsec:montecarlo_MAB_twoarms}
We validate the asymptotic results of the test statistics introduced in Section~\ref{subsubsec:MABstatistics_twoarms} above through simulations. The setting is the same as in Section~\ref{subsubsec:montecarlo_MAB_onearm}. 

\subsubsection*{Null distributions}

\begin{figure}[ht] 
\centering
\hspace*{-0mm}
\includegraphics[width = 6.5in]{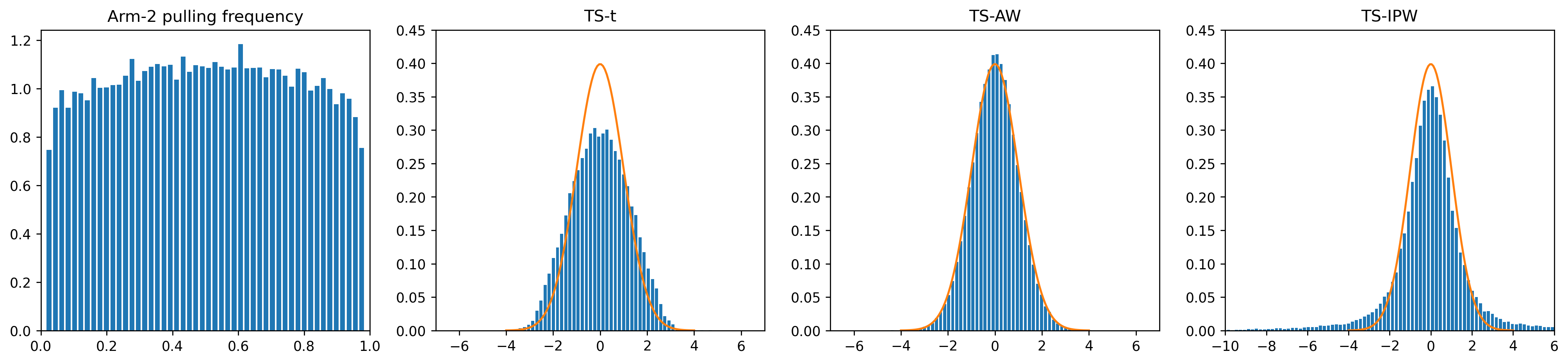}
\hspace*{-0mm}
\includegraphics[width = 6.5in]{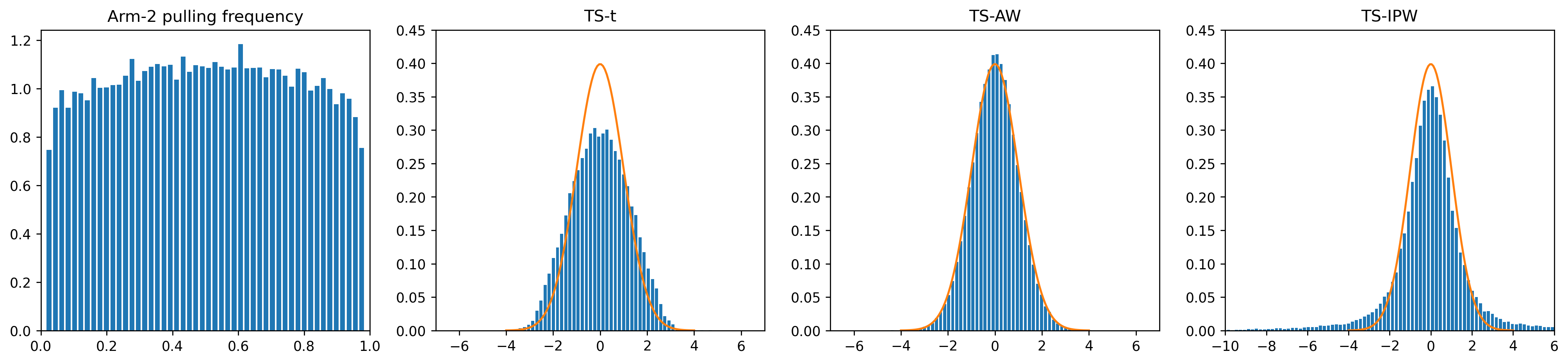}
\caption{{\small Histograms of arm-pulling frequencies, the two-sample t, AW, and IPW statistics, for the limiting Gaussian MAB experiment under \textit{\underline{translation-invariant} Thompson sampling}. Parameter setting: $\m = 0$ and $\dlpint = 0$ (top) and $\m = 50$ and $\dlpint = 0$ (bottom).}}
\label{fig:MABlim_statD_ThompsonInv}
\end{figure}

Figure~\ref{fig:MABlim_statD_ThompsonInv} displays histograms of arm-pulling frequencies and the two-sample (TS) t, AW, and IPW statistics (from left to right), simulated under the structural limit experiment with translation-invariant Thompson sampling. The top and bottom panels correspond to $\m = 0$ and $\m = 50$, respectively, both under the null $\delta = 0$. We observe that all histograms remain unchanged as $\m$ varies. This confirms that using translation-invariant Thompson sampling scheme makes these statistics distribution-free w.r.t.\ $\m$. Among them, only the TS-AW statistic appears normally distributed, while the TS-t and TS-IPW statistics exhibit clear deviations from normality.

\begin{figure}[!htb] 
\centering
\hspace*{-0mm}
\includegraphics[width = 6.5in]{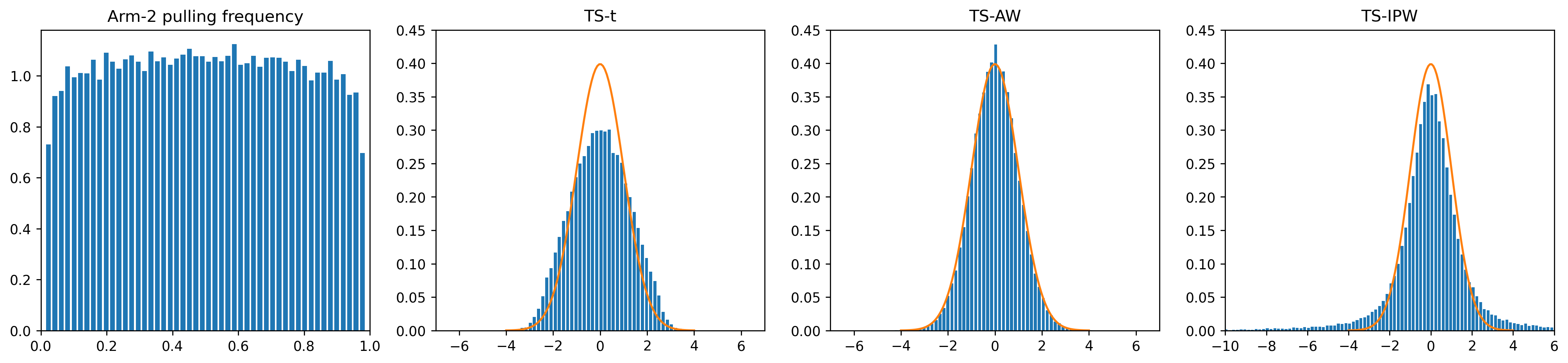}
\hspace*{-0mm}
\includegraphics[width = 6.5in]{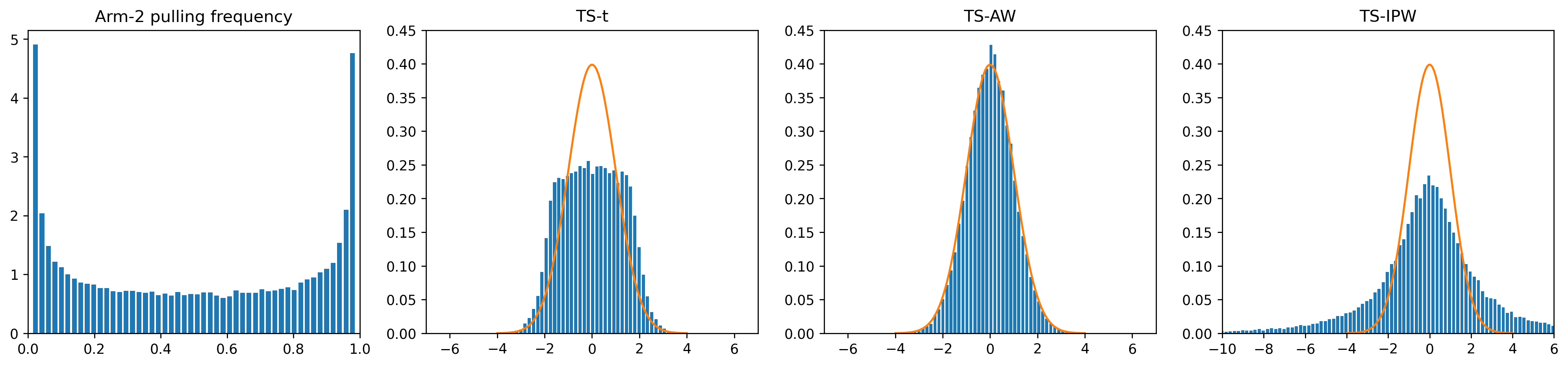}
\caption{{\small Histograms of arm-pulling frequencies, the two-sample LR-, t, AW, and IPW statistics, for the limiting Gaussian MAB experiment under \textit{\underline{classical} Thompson sampling}. Parameter setting: $\m = 0$ and $\delta = 0$ (top) and $\m = 50$ and $\delta = 0$ (bottom).}}
\label{fig:MABlim_statD_Thompson}
\end{figure}

Figure~\ref{fig:MABlim_statD_Thompson} examines the impact of non-translation-invariant sampling schemes on the test statistics. Specifically, we replicate the analysis from Figure~\ref{fig:MABlim_statD_ThompsonInv}, but using classical Thompson sampling, which is not translation invariant (see the discussion in Section~\ref{subsec:TranslationInvariantSamplingMAB}). We first observe that the arm-pulling frequency histogram changes substantially when $\m$ shifts from $0$ to $50$. Moreover, the distributions of the TS-t and TS-IPW statistics exhibit significant changes. These findings underscore the importance of employing translation-invariant sampling schemes---in addition to distribution-free statistics---for valid post-inference. Notably, the two-sample AW statistic remains approximately normally distributed even under non-translation-invariant sampling schemes and across different values of $\m$. We provide size results for the two-sample t-test, AW test, and IPW test in Table~\ref{table:sizes} of Appendix~\ref{appsec:additionalsimulations}.

\subsubsection*{Power results}

\begin{figure}[ht] 
\centering
\hspace*{-0mm}\includegraphics[width = 3.1in]{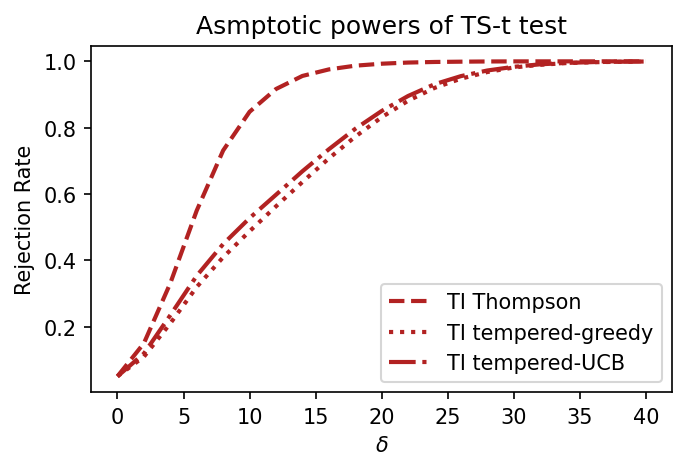}
\hspace*{-0mm}\includegraphics[width = 3.1in]{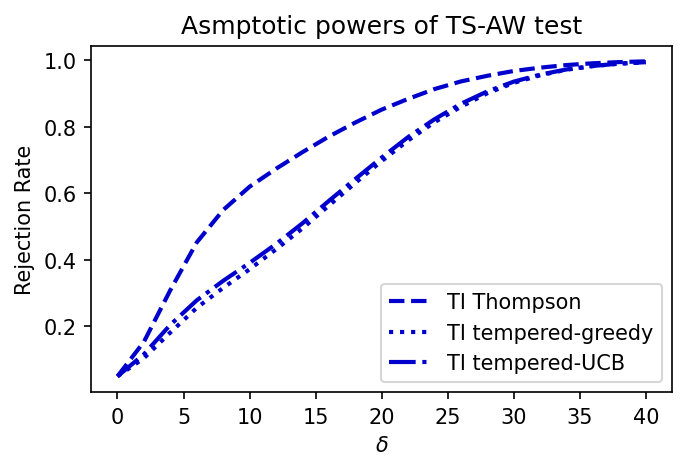}
\hspace*{-0mm}\includegraphics[width = 3.1in]{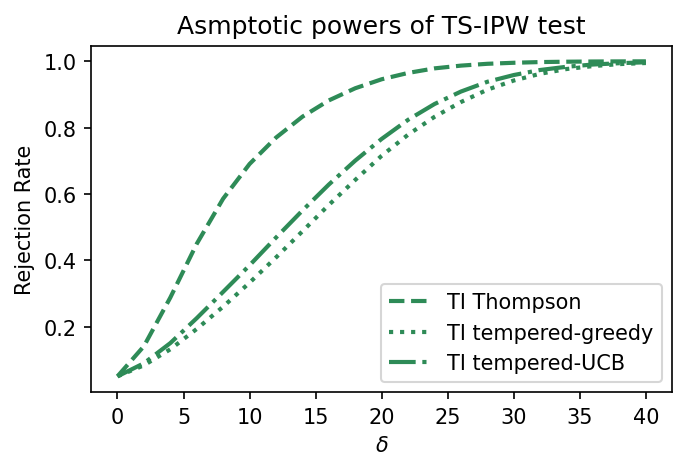}
\caption{{\small MAB asymptotic powers of the TS-t-test (red), the TS-AW test (blue), and the TS-IPW test (green) under translation-invariant (TI) Thompson sampling (dashed), TI tempered-greedy algorithm (dash-dotted), and TI tempered-UCB algorithm (dotted).}}
\label{fig:MABlim_powerDalgorithms}
\end{figure}

Figure~\ref{fig:MABlim_powerDalgorithms} plots the power curves of three two-sample distribution-free tests---namely, the TS-t-test (red), the TS-AW test (blue), and the TS-IPW test (green)---under translation-invariant Thompson sampling (dashed), the tempered-greedy algorithm (dash-dotted), and the tempered-UCB algorithm (dotted). The same observation made in the single-arm evaluation case holds here: sampling schemes that favor exploration lead to higher asymptotic power across all three tests. That said, the power gain of tempered-UCB over tempered-greedy is less pronounced in this two-arm comparison setting. Nonetheless, unreported simulation results indicate that increasing the degree of exploration (by adjusting the hyperparameter) in the tempered-UCB algorithm can improve its power performance.

\begin{figure}[ht] 
\centering
\hspace*{-0mm}\includegraphics[width = 6in]{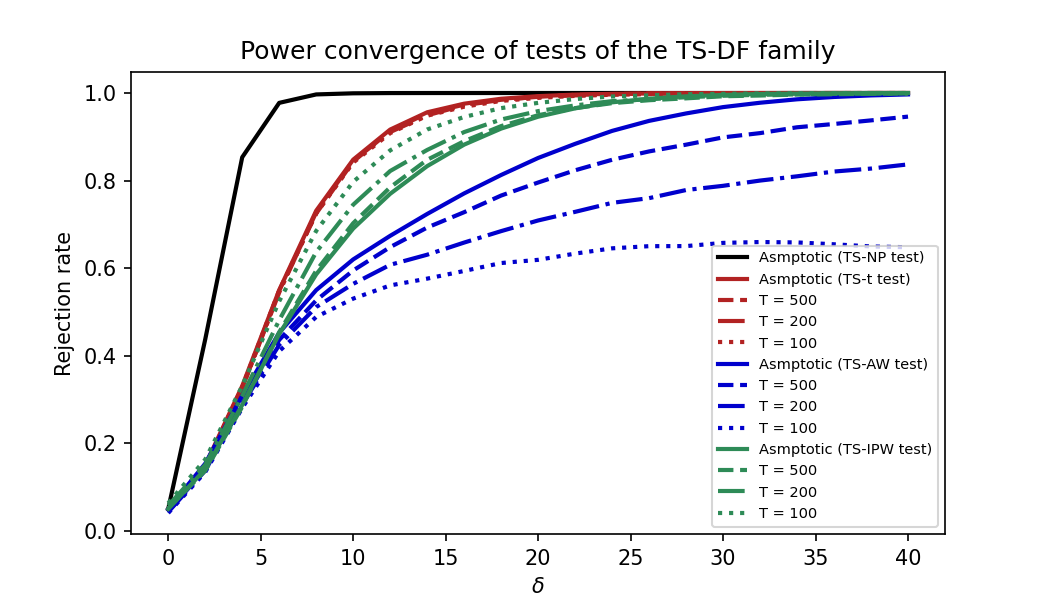}
\caption{{\small MAB asymptotic powers (solid) and finite-sample powers under $\vb = (0,0)^\trans$ for $\n=100$ (dotted), $\n=200$ (dash-dotted), and $\n=500$ (dashed), of the TS-t-test (red) , the TS-AW test (blue), and the TS-IPW test (green) under translation-invariant Thompson sampling.}}
\label{fig:MABlim_powerDconvergence_ThompsonInv}
\end{figure}

Figure~\ref{fig:MABlim_powerDconvergence_ThompsonInv} presents the asymptotic power (solid lines) and finite-sample power under $\vb = (0,0)^\trans$ for $\n=100$ (dotted), $\n=200$ (dash-dotted), and $\n=500$ (dashed) of the TS-t-test (red), the TS-AW test (blue), and the TS-IPW test (green), under translation-invariant Thompson sampling. Comparing the asymptotic powers, we find that the TS-t-test achieves significantly higher power than the other two, with the TS-IPW test as the runner-up, and the TS-AW test exhibiting the lowest power---likely due to the trade-off associated with achieving normality under the null hypothesis. Turning to power convergence, the finite-sample power of both the TS-t and TS-IPW tests approaches their asymptotic counterparts fairly quickly, especially for the TS-t-test, where the convergence is nearly immediate. In contrast, the TS-AW test requires a much larger sample size to approach its asymptotic power.

\section{Statistical applications: contextual bandits}\label{sec:CMAB}
In this section, we consider a contextual two-armed bandit problem, where the potential reward of Arm-$k$ at round $\ii$ follows the linear model
\begin{equation}
\Zkt = \vX_{\ii}^\trans\vbeta_{k} + \ekt, ~~~ k = 1,2.
\end{equation} 
Here, $\vX_{\ii} = (\X_{1,\ii},\dots,\X_{p,\ii})^\trans$ represents i.i.d.\ contextual random variables with non-singular $\Exp\left[\vX_\ii\vX_\ii^\trans\right]$, and which are independent of the innovation term $\ekt$. As before, these $\ekt$ are independent across arms and rounds, with mean zero, unit variance, and density $\f_k$.  A constant term can be incorporated by setting $\X_{1,\ii} = 1$ for all $\ii$, (for $p=1$ this yields the MAB setup studied in Section~\ref{sec:MAB}).

For this contextual bandit problem, we consider \textit{equal-arms asymptotics} where the parameters $\vbeta_{k}$ are localized around a common parameter vector $\vbeta\in\rR^p$ as 
\begin{equation} \label{eqn:equalarms_asymptotics_CMAB}
\vbeta_{k,\n} = \vbeta + \frac{\vb_k}{\sqrt{\n}}.
\end{equation}
Using the reparameterization $\vb_1 = \vb$ and $\vb_2 = \vb + \vzeta$, in this linear CMAB problem, we focus on testing the following linear hypothesis to compare the two arms
\begin{equation} \label{eqn:hypothesis_CMAB_twoarms}
    H_0: \vG^\trans\vzeta = 0, \, \vb\in\rR^{p} \text{~~against~~} H_1: \vG^\trans\vzeta > 0, \, \vb\in\rR^{p},
\end{equation}
for a given vector $\vG\in\rR^{p}$, while treating $\vb$ as a nuisance parameter. This hypothesis is commonly used to compare the predictions $\vG^\trans\vb_1$ and $\vG^\trans\vb_2$---e.g., for treatment versus control in an individual with covariates (or features) recorded in vector $\vG$.

As in Section~\ref{sec:MAB}, we begin by applying our general results from Section~\ref{sec:limitexperiment} to the contextual setting (Section~\ref{subsec:CMABLimit}), followed by the introduction of three translation-invariant sampling schemes (Section~\ref{subsec:TranslationInvariantSamplingCMAB}). We then study the testing problem in \eqref{eqn:hypothesis_CMAB_twoarms} theoretically and support the findings with simulations in Section~\ref{subsec:CMAB_comparetwoarms}.

\subsection{CMAB limit experiment} \label{subsec:CMABLimit}
Denote by $\lawn_{\vb_1,\vb_2}$ the law of $(\vX_{1},\dots,\vX_{\n},\A_{1},\dots,\A_{\n},\Y_{1},\dots,\Y_{\n})$ generated by this linear CMAB problem. The log-likelihood ratio is
\begin{align*}
\log\frac{\dd\lawn_{\vb_1,\vb_2}}{\dd\lawn_{\mfzero,\mfzero}} 
&= \sum_{k=1}^{2}\sum_{\ii=1}^{\n} \indicator_{\{A_{\ii} = k\}} \log\frac{\f_k\big(\Y_{\ii} - \vX_{\ii}^\trans(\vbeta+\vb_k/\sqrt{\n})\big)}{\f_k(\Y_{\ii} - \vX_{\ii}^\trans\vbeta)} 
\defeq \sum_{k=1}^{2}\LLRn_{k}(\vb_k).
\end{align*} 
In case Assumptions \ref{assm:DQM}--\ref{assm:policy_feasible} hold, by Proposition~\ref{prop:LAQ}, we have, under $\lawn_{\mfzero,\mfzero}$,
\begin{align*}
\LLRn_{k}(\vb_k)  
= \vb_k^\trans\CSb_{k,\n} - \frac{1}{2}\vb_k^\trans\QVb_{k,\n}\vb_k + \op(1), 
\end{align*}
with
\begin{align*}
\CSb_{k,\ii} \defeq&~ \frac{1}{\sqrt{\n}}\sum_{\jj=1}^{\ii}\indicator_{\{A_{\jj} = k\}}\vX_\jj\scorek(\Y_{\jj} - \vX_\jj^\trans\vbeta), \\
\QVb_{k,\ii} \defeq&~ \frac{1}{\n}\sum_{\jj=1}^{\ii}\indicator_{\{A_{\jj} = k\}}\FJ_k\vX_\jj\vX_\jj^\trans,
\end{align*}
for $\ii = 1,\dots,\n$ and $k = 1,2$, where $\scorek \defeq -\dot{\f_k}/\f_k$ is again the score function for location and $\FJ_k \defeq \int_\e\scorek(\e)^2\f_k(\e)\dd\e$ is its Fisher information.

In the linear CMAB problem, most existing sampling policies rely on
\begin{align*}
\vC_{k,\ii} =&~ \vC_{r_1=0;k,\ii} = \frac{1}{\sqrt{\n}}\sum_{\jj=1}^{\ii}\indicator_{\{A_{\jj} = k\}}\vX_\jj(\Y_{\jj} - \vX_\jj^\trans\vbeta), \\
\vS_{k,\ii} =&~ \vS_{r_2=0;k,\ii} = \frac{1}{\n}\sum_{\jj=1}^{\ii}\indicator_{\{A_{\jj} = k\}}\vX_\jj\vX_\jj^\trans,
\end{align*}
where $\vC_{r_1;k,\ii}$ and $\vS_{r_2;k,\ii}$ are defined in (\ref{eqn:statistic_general_CS}) (with $\Exp_{\pintb}[\Z_{k}] = \vX_\jj^\trans\vbeta$). In particular, applications often use $\hat\vb_k \defeq \vS_{k,\ii}^{-1}\vC_{k,\ii}$ which can be interpreted as the estimate for $\vb_k$ based on the information available at time $\ii$. This estimate, in turn, provides the predicted reward $\vX_{\ii+1}^\trans\hat\vb_k$ for round $\ii+1$. We impose that the probability of selecting Arm-$k$ at time $\ii+1$ is given by
\begin{align*} 
\pi_{\ii+1}(k\cond\filtr_{\ii}) = \policy_k^{(\n)}\left(\vS_{\ii},\vC_{\ii},\vX_{\ii+1}\right),
\end{align*}
where $\vC_{\ii} = \left(\vC_{1,\ii}^\trans,\vC_{2,\ii}^\trans\right)^\trans$ and $\vS_{\ii} = \left(\vS_{1,\ii}^\trans,\vS_{2,\ii}^\trans\right)^\trans$. Note, however, that $\vC_{k,\ii}$ depends on $\vbeta$, which may cause the resulting sampling policy to violate Assumption~\ref{assm:policy_feasible}. In order to satisfy Assumption~\ref{assm:policy_feasible}, we will impose a translation-invariance restriction on the sampling schemes in Section~\ref{subsec:TranslationInvariantSamplingCMAB}. This mirrors the analysis of Section~\ref{subsec:TranslationInvariantSamplingMAB}.

The results in Section~\ref{sec:limitexperiment} now imply that the two-armed contextual bandit problem converges to a limit experiment with laws $\law_{\vb_1,\vb_2}$, as described below.

\smallskip
\begin{ancillary} \label{ancillary:weakconvergence_CMAB}
Let Assumptions~\ref{assm:DQM}--\ref{assm:policy_limit} hold. 
\begin{itemize}
\item[(a)] Under $\lawn_{\mfzero,\mfzero}$ and for $k = 1,2$, we have
\begin{align*}
\LLRn_{k}(\vb_{k}) &\wto \LLRlim_k(\vb_{k}) = \vb_k^\trans\CSb_{k}(1) - \frac{1}{2}\vb_k^\trans\QVb_{k}(1)\vb_k,  \\
\vC_{r_1;k,\n} &\wto \vC_{r_1;k}(1)  \text{~~and~~}
\vS_{r_2;k,\n} \wto \vS_{r_2;k}(1)
\end{align*}
where, for $\uu\in[0,1]$,
\begin{align*}
\CSb_{k}(\uu) \defeq&~ \int_0^\uu \sqrt{\Epolicyb_{k}(s)}\dd\vW_{\scorek}(s),  \\
\QVb_{k}(\uu) \defeq&~ \FJ_k \int_0^\uu\Epolicyb_{k}(s)\dd s, \\
\vC_{r_1;k}(\uu) \defeq&~ \int_0^\uu \sqrt{\Epolicyb^{\vc}_{r_1;k}(s)}\dd\vW_{\e_k}(s),  \\
\vS_{r_2;k}(\uu) \defeq&~ \int_0^\uu\Epolicyb^{\vs}_{r_1;k}(s)\dd s, 
\end{align*}
with $\vC_{k}(\uu) = \vC_{0,k}$, $\vS_{k} = \vS_{0,k}$, $\Epolicyb_{k}(\uu) \defeq \Exp[\policy_k\big(\vS(\uu),\vC(\uu),\vX\big)\vX\vX^\trans]$, $\Epolicyb^{\vc}_{r_1;k}(\uu) \defeq \Exp\big[\policy_k^{1-2r_1}(\vS(\uu),\vC(\uu),\vX)\vX\vX^\trans\big]$, and $\Epolicyb^{\vs}_{r_2;k}(\uu) \defeq \Exp\big[\policy_k^{1-r_2}(\vS(\uu),\vC(\uu),\vX)\vX\vX^\trans\big]$.\footnote{These expectations here refer to integrating with respect to the distribution $\measure_X$ of $\vX$.} Here, $\vW_{\e_k}$ and $\vW_{\scorek}$ are $p$-dimensional zero-drift Brownian motions such that $\var[\vW_{\e_k}(1)] = \idm_p$, $\var[\vW_{\scorek}(1)] = \FJ_k\idm_p$, $\cov[\vW_{\e_k}(1),\vW_{\scorek}(1)] = \idm_p$, and are mutually independent across $k$. 
\item[(b)] We have, still under $\lawn_{\mfzero,\mfzero}$,
\begin{equation}
\log\frac{\dd\lawn_{\vb_1,\vb_2}}{\dd\lawn_{\mfzero,\mfzero}} \wto \LLRlim_1(\vb_{1}) + \LLRlim_2(\vb_{2}).
\end{equation}
\item[(c)] Under $\law_{\mfzero,\mfzero}$, $\Exp\left[\exp(\LLRlim_1(\vb_{1}) + \LLRlim_2(\vb_{2}))\right] = 1$, for all $\vb_1, \vb_2\in\rR^{p}$. 
\end{itemize}
\end{ancillary}
\smallskip

\smallskip
\begin{ancillary} \label{ancillary:structurallimitexperiment_CMAB}
Let $\vb_1, \vb_2\in\rR^{p}$. The limit experiment $\experiment(\f)$ associated with the log-likelihood ratio in Ancillary~\ref{ancillary:weakconvergence_CMAB} can be described as follows. We observe $\vW_{\scorek}$, $\QVb_{k}$ (and thus $\CSb_{k}$), $\vW_{\e_k}$, $\vC_{r_1;k}$, and $\vS_{r_2;k}$, $k = 1,2$, generated by
\begin{equation} \label{eqn:SDE_CMAB}
\begin{aligned} 
\dd\vW_{\scorek}(\uu) &= \FJ_k\sqrt{\Epolicyb_{k}(\uu)}\vb_k\dd\uu + \dd\vB_{\scorek}(\uu),  \\
\dd\CSb_{k}(\uu) &= \sqrt{\Epolicyb_{k}(\uu)}\dd\vW_{\scorek}(\uu),  \\
\dd\QVb_{k}(\uu) &= \FJ_k\Epolicyb_{k}(\uu)\dd\uu, \\
\dd\vW_{\e_k}(\uu) &= \sqrt{\Epolicyb_{k}(\uu)}\vb_k\dd\uu + \dd\vB_{\e_k}(\uu),  \\
\dd\vC_{r_1;k}(\uu) &= \sqrt{\Epolicyb^{\vc}_{r_1;k}(\uu)}\dd\vW_{\e_k}(\uu),  \\
\dd\vS_{r_2;k}(\uu) &= \Epolicyb^{\vs}_{r_2;k}(\uu)\dd\uu,
\end{aligned}
\end{equation}
for $\uu\in[0,1]$ and $k = 1,2$, where $\vB_{\e_k}$ and $\vB_{\scorek}$ are $p$-dimensional zero-drift Brownian motions with the same covariance as in Ancillary~\ref{ancillary:weakconvergence_CMAB}. 
\end{ancillary}
\smallskip

\smallskip
\begin{remark} \label{remark:CMAB_Gaussian}
Similar to the MAB case in Section~\ref{subsec:MABLimit}, under Gaussian distributions (i.e., $\f_1 = \f_2 = \phi$), we obtain $\score_{k}(\e) = \e$, $\FJ_{k} = 1$, $\CSb_{k,\ii} = \vC_{k,\ii}$ and $\QVb_{k,\ii} = \vS_{k,\ii}$, for $k = 1,2$ and $t = 1,\dots,\n$. Correspondingly, in the limit experiment, Gaussianity implies $\vW_{\scorek}(\uu) = \vW_{\e_k}(\uu)$, $\CSb_{k}(\uu) = \vC_{k}(\uu)$, and $\QVb_{k}(\uu) = \vS_{k}(\uu)$, for $k = 1,2$ and $\uu\in[0,1]$. Consequently, the limiting likelihood ratios in Ancillary~\ref{ancillary:weakconvergence_CMAB} simplify to $\LLRlim_k(\vb_k) = \vb_k^\trans\vC_{k}(1) - \frac{1}{2}\vb_k^\trans\vS_{k}(1)\vb_k$. Furthermore, the structural limit experiment in Ancillary~\ref{ancillary:structurallimitexperiment_CMAB} reduces to 
\begin{equation} \label{eqn:SDE_CMAB_Gaussian}
\begin{aligned} 
\dd\vW_{\e_k}(\uu) &= \sqrt{\Epolicyb_{k}(\uu)}\vb_k\dd\uu + \dd\vB_{\e_k}(\uu),  \\
\dd\vC_{k}(\uu) &= \sqrt{\Epolicyb_{k}(\uu)}\dd\vW_{\e_k}(\uu),  \\
\dd\vS_{k}(\uu) &= \Epolicyb_{k}(\uu)\dd\uu. 
\end{aligned}
\end{equation}
\end{remark}
\smallskip

% For the remainder of this section, we will again focus solely on the Gaussian case to introduce sampling schemes and develop statistical methodologies.

\subsection{Translation-invariant sampling schemes} \label{subsec:TranslationInvariantSamplingCMAB}
We introduce the following \textit{translation-invariance} requirement on sampling schemes for the linear contextual bandit problem. In Section~\ref{subsec:CMAB_comparetwoarms}, we will demonstrate that this restriction helps to control the size of tests when comparing different arms.

\smallskip
\begin{definition} \label{def:translationinvariant_CMAB}
In the context of linear CMAB experiment set up above, a sequence of sampling schemes $\policy_k^{(\n)}$ is called \textit{translation invariant} if 
\begin{equation}
\policy_k^{(\n)}\big(\vs, \vc + \vs\boldsymbol{e}, \vx\big) = \policy_k^{(\n)}\big(\vs, \vc, \vx\big)
\end{equation}
for all $\n\in\rN$, $\boldsymbol{e}\in\rR^{p}$, $\vs\in\rR^{2p\times p}$, $\vc\in\rR^{2p}$, and $\vx\in\rR^p$.
\end{definition}
\smallskip

\smallskip
\begin{corollary} \label{cor_CMAB_I}
In this linear CMAB experiment, let $\policy_k^{(\n)}$, $\n\in\rN$, be a translation-invariant sequence of sampling scheme as defined in Definition~\ref{def:translationinvariant_CMAB}. Then, 
\begin{itemize}
\item[a)] $\policy_k^{(\n)}$ satisfies Assumption~\ref{assm:policy_feasible}.
\item[b)] the associated limiting sampling scheme $\policy_k$ from Assumption~\ref{assm:policy_limit} satisfies
\begin{equation}
\policy_k\big(\vs, \vc + \vs\boldsymbol{e}, \vx\big) = \policy_k\big(\vs, \vc, \vx\big)
\end{equation}
for all $\boldsymbol{e}\in\rR^{p}$, $\vs\in\rR^{2p\times p}$, $\vc\in\rR^{2p}$, and $\vx\in\rR^p$.
\end{itemize} 
\end{corollary}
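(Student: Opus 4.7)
The plan for Part (a) is to exploit a $\vbeta$-free decomposition of the cumulative statistic $\vC_{k,\ii}$. Specifically, define the $\vbeta$-independent companion $\vC_{k,\ii}^\circ \defeq \frac{1}{\sqrt{\n}}\sum_{\jj=1}^{\ii}\indicator_{\{A_{\jj} = k\}}\vX_\jj\Y_{\jj}$, and stack these into $\vC_{\ii}^\circ$. Then a direct calculation gives
\begin{equation*}
\vC_{\ii} = \vC_{\ii}^\circ - \sqrt{\n}\,\vS_{\ii}\vbeta,
\end{equation*}
since $\vS_{k,\ii}\vbeta = \frac{1}{\n}\sum_{\jj=1}^{\ii}\indicator_{\{A_{\jj}=k\}}\vX_\jj\vX_\jj^\trans\vbeta$. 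Taking $\boldsymbol{e} = \sqrt{\n}\vbeta$ in the translation-invariance identity of Definition~\ref{def:translationinvariant_CMAB} then yields
\begin{equation*}
\policy_k^{(\n)}\big(\vS_{\ii}, \vC_{\ii}, \vX_{\ii+1}\big) = \policy_k^{(\n)}\big(\vS_{\ii}, \vC_{\ii} + \vS_{\ii}\sqrt{\n}\vbeta, \vX_{\ii+1}\big) = \policy_k^{(\n)}\big(\vS_{\ii}, \vC_{\ii}^\circ, \vX_{\ii+1}\big),
\end{equation*}
and the right-hand side is manifestly free of $\vbeta$. This gives Assumption~\ref{assm:policy_feasible}. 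This mirrors exactly the non-contextual argument from Corollary~\ref{cor_MAB_I}(a), now with the scalar shift replaced by the matrix shift $\vS_{\ii}\vbeta$.

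For Part (b), the translation-invariance identity holds pointwise for every $\n$, so I would simply pass to the limit. By Assumption~\ref{assm:policy_limit}, $\policy_k^{(\n)}(\vs,\vc,\vx) \to \policy_k(\vs,\vc,\vx)$ for each argument (with uniform convergence on bounded sets), so in particular
\begin{equation*}
\policy_k\big(\vs, \vc + \vs\boldsymbol{e}, \vx\big) = \lim_{\n\to\infty}\policy_k^{(\n)}\big(\vs, \vc + \vs\boldsymbol{e}, \vx\big) = \lim_{\n\to\infty}\policy_k^{(\n)}\big(\vs, \vc, \vx\big) = \policy_k\big(\vs, \vc, \vx\big)
\end{equation*}
for every $\boldsymbol{e}\in\rR^p$, $\vs\in\rR^{2p\times p}$, $\vc\in\rR^{2p}$, and $\vx\in\rR^p$. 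No additional regularity is needed beyond the pointwise convergence assumed in Assumption~\ref{assm:policy_limit}.

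Neither part presents a substantive obstacle; the corollary is essentially a bookkeeping result. The only ``care'' needed is to verify that the dimensions line up when the stacked object $\vS_{\ii}\in\rR^{2p\times p}$ multiplies the shift vector $\vbeta\in\rR^p$ to produce a vector in $\rR^{2p}$ compatible with $\vC_{\ii}$, which is exactly how Definition~\ref{def:translationinvariant_CMAB} is set up. Everything else is a one-line application of the defining identity.
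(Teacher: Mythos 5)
Your proof is correct and matches the paper's own argument essentially verbatim: the same decomposition $\vC_{k,\ii} = \vC_{k,\ii}^\circ - \sqrt{\n}\,\vS_{k,\ii}\vbeta$ with the choice $\boldsymbol{e} = \sqrt{\n}\vbeta$ for part (a), and passing the invariance identity to the limit for part (b), which the paper simply declares immediate.
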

\smallskip

\begin{proof}
For Part a), observe, for $k = 1,2$,
\begin{align*}
\vC_{k,\ii} = \vC_{k,\ii}^\circ - \sqrt{\n}\vS_{k,\ii}\vbeta,
\end{align*}
where $\vC_{k,\ii}^\circ \defeq \frac{1}{\sqrt{\n}}\sum_{\jj=1}^{\ii}\indicator_{\{A_{\jj} = k\}}\vX_\jj\Y_{\jj}$. Let $\vC_{\ii}^\circ \defeq (\vC_{1,\ii}^\circ,\vC_{2,\ii}^\circ)$. By translation-invariance of $\policy_k^{(\n)}$, and setting $\boldsymbol{e} = \sqrt{\n}\vbeta$, we obtain $\policy_k^{(\n)}\big(\vS_{\ii}, \vC_{\ii}\big) = \policy_k^{(\n)}\big(\vS_{\ii}, \vC_{\ii} + \vS_{\ii}\boldsymbol{e}\big) = \policy_k^{(\n)}\big(\vS_{\ii}, \vC_{\ii}^\circ\big)$, which does not depend on $\vbeta$.  Part~(b) is immediate.
\end{proof}

We also identify a sufficient condition for a sampling scheme $\policy^{(\n)}_k$ in this linear CMAB problem to be translation invariant. To illustrate this, consider algorithms based on difference $\vbeta_{2,\n} - \vbeta_{1,\n} = (\vb_{2} - \vb_{1})/\sqrt{\n}$, which remains unchanged under any transformation that adds the same vector to both $\vbeta_{1,\n}$ and $\vbeta_{2,\n}$. Specifically, $\policy^{(\n)}_k$ is translation invariant with respect to the common value $\vbeta$ (where we localize our parameters) if it only depends on
$\vX_{\ii+1}$, $\vS_{\ii}$, and $\hat\vzeta_{\ii} = \hat\vb_{2,\ii} - \hat\vb_{1,\ii} = \vS_{2,\ii}^{-1}\vC_{2,\ii} - \vS_{1,\ii}^{-1}\vC_{1,\ii},$
the latter being an estimate of $\vzeta$ ($= \vb_{2} - \vb_{1}$) based on the information available up to round $\ii$. Indeed, for any $\boldsymbol{e}\in\rR^{p}$, we have
\begin{align*}
\sqrt{\n}\hat\vzeta_{\ii} 
% =&~ \sqrt{\n}(\vS_{2,\ii}^{-1}\vC_{2,\ii} - \vS_{1,\ii}^{-1}\vC_{1,\ii})  \\
=&~ \vS_{2,\ii}^{-1}\sum_{\jj=1}^{\ii}\indicator_{\{A_{\jj} = 2\}}\vX_{\jj}(\Y_{\jj} - \vX_{\jj}^\trans\vbeta_2) - \vS_{1,\ii}^{-1}\sum_{\jj=1}^{\ii}\indicator_{\{A_{\jj} = 1\}}\vX_{\jj}(\Y_{\jj} - \vX_{\jj}^\trans\vbeta_1) \\
=&~ \vS_{2,\ii}^{-1}\sum_{\jj=1}^{\ii}\indicator_{\{A_{\jj} = 2\}}\vX_{\jj}(\Y_{\jj} - \vX_{\jj}^\trans(\vbeta_2 + \boldsymbol{e})) - \vS_{1,\ii}^{-1}\sum_{\jj=1}^{\ii}\indicator_{\{A_{\jj} = 1\}}\vX_{\jj}(\Y_{\jj} - \vX_{\jj}^\trans(\vbeta_1 + \boldsymbol{e})).
\end{align*} 

Similarly, we next state that using a sequence of translation-invariant sampling schemes $\policy_k^{(\n)}$, $\n\in\mathbb{N}$, based on $\hat\vzeta_{\ii}$, also leads to distribution-freeness with respect to $\vb$ in the limit experiment. The proof is omitted, as it follows from arguments similar to those in Corollary~\ref{cor_MAB_II}.

\smallskip
\begin{corollary} \label{cor_CMAB_II}
In the limiting linear CMAB experiment, under translation-invariant sampling schemes as defined in Definition~\ref{def:translationinvariant_CMAB}, the joint distribution of $\vS(\uu)$ and $\vS_{2}^{-1}(\uu)\vC_{2}(\uu) - \vS_{1}^{-1}(\uu)\vC_{1}(\uu)$ for $\uu\in[0,1]$ satisfies
\begin{itemize}
    \item[a)] it remains unchanged when $(\vb_1,\vb_2) = (\vb,\vb)$ for all $\vb\in\rR^p$;
    \item[b)] it does not depend on $\vbeta$.
\end{itemize} 
\end{corollary}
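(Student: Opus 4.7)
The plan is to mimic the derivation in the proof of Corollary~\ref{cor_MAB_II} by first turning the drift of $\vC_k(\uu)$ into a stochastic integral against $\vS_k(\uu)$. From the SDEs in Ancillary~\ref{ancillary:structurallimitexperiment_CMAB} (taking $r_1=r_2=0$, so that $\vC_{0;k}=\vC_k$, $\vS_{0;k}=\vS_k$, $\Epolicyb^{\vc}_{0;k}=\Epolicyb^{\vs}_{0;k}=\Epolicyb_k$), combining the two lines for $\vC_k$ gives
\begin{equation*}
\dd\vC_k(\uu) \;=\; \Epolicyb_k(\uu)\vb_k\,\dd\uu \;+\; \sqrt{\Epolicyb_k(\uu)}\,\dd\vB_{\e_k}(\uu) \;=\; \dd\vS_k(\uu)\,\vb_k \;+\; \sqrt{\Epolicyb_k(\uu)}\,\dd\vB_{\e_k}(\uu),
\end{equation*}
since $\dd\vS_k(\uu)=\Epolicyb_k(\uu)\dd\uu$. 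Integrating from $0$ to $\uu$, and using $\vS_k(0)=0$ together with the fact that $\vb_k$ is constant, yields the representation $\vC_k(\uu) = \vS_k(\uu)\vb_k + \vM_k(\uu)$, where $\vM_k(\uu) \defeq \int_0^\uu \sqrt{\Epolicyb_k(s)}\,\dd\vB_{\e_k}(s)$. Consequently,
\begin{equation*}
\vS_2^{-1}(\uu)\vC_2(\uu) - \vS_1^{-1}(\uu)\vC_1(\uu) \;=\; (\vb_2-\vb_1) \;+\; \vS_2^{-1}(\uu)\vM_2(\uu) - \vS_1^{-1}(\uu)\vM_1(\uu).
\end{equation*}

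For part (a), set $\vb_1=\vb_2=\vb$, so that $\vC_k(\uu)=\vS_k(\uu)\vb + \vM_k(\uu)$ for both $k$, i.e.\ $\vC(\uu)=\vS(\uu)\vb + \vM(\uu)$ with $\vM(\uu)=(\vM_1(\uu)^\trans,\vM_2(\uu)^\trans)^\trans$. Applying the limiting translation invariance of Corollary~\ref{cor_CMAB_I}(b) with $\boldsymbol{e}=\vb$, one obtains $\policy_k(\vS(\uu),\vC(\uu),\vx) = \policy_k(\vS(\uu),\vM(\uu),\vx)$. Therefore $\Epolicyb_k(\uu)=\Exp[\policy_k(\vS(\uu),\vM(\uu),\vX)\vX\vX^\trans]$ depends on $\vb$ only through $(\vS(\uu),\vM(\uu))$. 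The pair $(\vS,\vM)$ thus satisfies the closed, autonomous system $\dd\vS_k(\uu)=\Epolicyb_k(\uu)\dd\uu$ and $\dd\vM_k(\uu)=\sqrt{\Epolicyb_k(\uu)}\dd\vB_{\e_k}(\uu)$, driven only by the Brownian motions $\vB_{\e_k}$ and involving no $\vb$. Uniqueness of the solution (postulated in Proposition~\ref{prop:weakconvergence_general}) then implies that the law of $(\vS(\uu),\vM(\uu))_{\uu\in[0,1]}$ is independent of $\vb$; since for $\vb_1=\vb_2$ the statistic equals $\vS_2^{-1}\vM_2-\vS_1^{-1}\vM_1$, its joint law with $\vS(\uu)$ is also independent of $\vb$.

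For part (b), the claim is essentially immediate: the limit experiment is formulated in terms of $\policy_k$, which by Corollary~\ref{cor_CMAB_I}(a) coincides with a translation-invariant quantity that does not involve $\vbeta$; equivalently, $\vbeta$ has already been absorbed through the local reparameterization $\vbeta_{k,\n}=\vbeta+\vb_k/\sqrt{\n}$ and no longer appears in the SDEs (\ref{eqn:SDE_CMAB}). Hence the joint distribution of $\vS(\uu)$ and $\vS_2^{-1}(\uu)\vC_2(\uu)-\vS_1^{-1}(\uu)\vC_1(\uu)$ cannot depend on $\vbeta$.

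The only step that requires care is the argument in part (a) that the reduced system for $(\vS,\vM)$ is indeed autonomous; this hinges on the pointwise identity $\policy_k(\vS,\vS\vb+\vM,\vx)=\policy_k(\vS,\vM,\vx)$ holding for almost every realization of the driving Brownian motions, which is exactly what translation invariance of the limiting policy provides. Everything else is a direct transcription of the MAB argument in Corollary~\ref{cor_MAB_II}, the key algebraic simplification being the passage from the drift $\Epolicyb_k(\uu)\vb_k\,\dd\uu$ to the differential $\dd\vS_k(\uu)\,\vb_k$.
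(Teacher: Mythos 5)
Your proof is correct and follows essentially the same route as the paper, which itself omits the details and simply points to the argument for Corollary~\ref{cor_MAB_II}: rewrite the drift of $\vC_k$ as $\dd\vS_k(\uu)\,\vb_k$ so that $\vC_k(\uu)=\vS_k(\uu)\vb_k+\vM_k(\uu)$ with $\vM_k$ a driftless stochastic integral, cancel $\vb$ in the difference $\vS_2^{-1}\vC_2-\vS_1^{-1}\vC_1$, and invoke the translation invariance of the limiting policy (Corollary~\ref{cor_CMAB_I}(b)) to see that the $(\vS,\vM)$ system is autonomous in $\vb$. Your explicit appeal to uniqueness of the SDE solution makes the final identification of laws more careful than the paper's terse statement, but it is the same argument.
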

\smallskip

Building on these results, we propose---to the best of our knowledge, for the first time---translation-invariant versions of Thompson sampling, tempered-greedy, and tempered-LinUCB algorithms for the linear CMAB problem, as summarized in Table~\ref{tab:algorithms_CMAB}. Detailed derivations are provided in Appendix~\ref{appsubsec:TI_algorithms_CMAB}. Due to space constraints, we omit the fourth column showing the limiting policy functions $\policy_k$; these are simply the finite-sample policies $\policy_k^{(\n)}$ with hyperparameters replaced by their limiting values according to the third column.

\bigskip
\renewcommand{\arraystretch}{2}  

\begin{table}[ht]
\centering
\caption{Translation-Invariant (TI) Sampling Schemes for Contextual Multi-Armed Bandits}
\label{tab:algorithms_CMAB}
\begin{adjustbox}{width=1.1\textwidth,center}
\begin{tabular}{@{}l>{$}c<{$}>{\centering\arraybackslash}m{3.5cm}}
\toprule
\addlinespace[-0.7em]
\textbf{Algorithm} 
  & \policy_2^{(\n)}\left(\vs,\vc,\vx\right) 
  & \textbf{Hyperparameters}  \\
\midrule
\addlinespace[0.3em]

\text{\small TI Thompson}
  & \Phi\left(\frac{\vx^\trans\left((\vs_{1}^{-1} + \vs_{2}^{-1})^{-1} + \idm_p b_\n^2/\n\right)^{-1}\left((\vs_{1}^{-1} + \vs_{2}^{-1})^{-1}(\vs_{2}^{-1}\vc_{2} - \vs_{1}^{-1}\vc_{1})\right)}{\sqrt{\vx^\trans\left((\vs_{1}^{-1} + \vs_{2}^{-1})^{-1} + \idm_p b_\n^2/\n\right)^{-1}\vx}}\right) 
  & $\frac{b_\n^2}{\n} \to b^2 \in [0,\infty)$   \\
\addlinespace[1em]

\text{\small TI tempered-greedy}
  & \exp\left(\frac{\alpha_\n}{\sqrt{\n}}\vx^\trans\vs_2^{-1}\vc_2\right)\Big/\sum_{k=1}^{2} \exp\left(\frac{\alpha_\n}{\sqrt{\n}}\vx^\trans\vs_k^{-1}\vc_k\right)
  & $\frac{\alpha_\n}{\sqrt{\n}} \to \alpha \in (0,\infty)$  \\
\addlinespace[1em]

\text{\small TI tempered-UCB} 
  & \frac{\exp\Big(\frac{\alpha_\n}{\sqrt{\n}}\Big(\vx^\trans\vs_2^{-1}\vc_2 + \lambda_\n\sqrt{\vx^\trans\vs_2^{-1}\vx}\Big)\Big)}{\sum_{k=1}^{2}\exp\Big(\frac{\alpha_\n}{\sqrt{\n}}\Big(\vx^\trans\vs_k^{-1}\vc_k + \lambda_\n\sqrt{\vx^\trans\vs_k^{-1}\vx}\Big)\Big)}
  & \parbox[c][4.5em][c]{3.5cm}{\centering\small
      $\frac{\alpha_\n}{\sqrt{\n}} \to \alpha \in (0,\infty)$\\
      $\lambda_\n \to \lambda \in (0,\infty)$
    } \\
\addlinespace[0.3em]
\bottomrule
\end{tabular}
\end{adjustbox}
\end{table}

\subsection{Tests for comparing the two arms} \label{subsec:CMAB_comparetwoarms}
For the hypothesis of interest in (\ref{eqn:hypothesis_CMAB_twoarms}), namely, $H_0:\vG^\trans\vzeta = 0, \, \vb\in\rR^{p} \textrm{~against~} H_1:\vG^\trans\vzeta > 0, \, \vb\in\rR^{p},$ we focus on test statistics that are distribution-free, under the null hypothesis, with respect to the nuisance local common reward parameter $\vb$. In particular, Section~\ref{subsubsec:CMABstatistics_twoarms} introduces the two-sample Wald and the Adaptively-Weighted (AW) Wald statistics, the latter of which is shown to be (close to) normally distributed in simulations. We omit the Inverse Propensity Weighted (IPW) version of the Wald statistic, as it tends to have lower power than the standard Wald test and does not exhibit the approximate normality of the AW-Wald statistic. We also omit the upper power bounds, which---as in the non-contextual bandit case in Section~\ref{sec:MAB}---lie well above the powers of tests that are distribution-free with respect to the nuisance parameter $\vb$.

% both tests exhibit powers considerably lower than the power envelope, indicating the cost of ensuring distribution-free inference with respect to $\vb$ (note that the power envelope is not subject to the distribution-free restriction, as it is simulated under the known value of $\vb = (0,0)^\trans$).  

\subsubsection{Test statistics} \label{subsubsec:CMABstatistics_twoarms}

% \begin{equation} 
% \begin{aligned} 
% \dd\vW_{\e_k}(\uu) &= \sqrt{\Epolicyb_{k}(\uu)}\vb_k\dd\uu + \dd\vB_{\e_k}(\uu),  \\
% \dd\vC_{r_1;k}(\uu) &= \sqrt{\Epolicyb^{\vc}_{r_1;k}(\uu)}\dd\vW_{\e_k}(\uu),  \\
% \dd\vS_{r_2;k}(\uu) &= \Epolicyb^{\vs}_{r_2;k}(\uu)\dd\uu,
% \end{aligned}
% \end{equation}
% with $\Epolicyb_{k}(\uu) \defeq \Exp[\policy_k\big(\vS(\uu),\vC(\uu),\vX\big)\vX\vX^\trans]$, $\Epolicyb^{\vc}_{r_1;k}(\uu) \defeq \Exp\big[\policy_k^{1-2r_1}(\vS(\uu),\vC(\uu),\vX)\vX\vX^\trans\big]$, and $\Epolicyb^{\vs}_{r_2;k}(\uu) \defeq \Exp\big[\policy_k^{1-r_2}(\vS(\uu),\vC(\uu),\vX)\vX\vX^\trans\big]$.

\subsubsection*{Two-sample Wald test}
The classical two-sample Wald statistic for the linear model is given by
\begin{align*}
    \stat^{\text{TS-Wald}}_\n(\vG) \defeq \left(\vG^\trans(\vS_{2,\n}^{-1} + \vS_{1,\n}^{-1})\vG\right)^{-1/2}\left(\vG^\trans\hat\vzeta_{\n}\right),
\end{align*}
where $\hat\vzeta_{\ii} = \vS_{2,\ii}^{-1}\vC_{2,\ii} - \vS_{1,\ii}^{-1}\vC_{1,\ii}$ estimates $\vzeta$ based on data available up to round $\ii = 1,\dots,\n$. From the joint convergence result in Ancillary~\ref{ancillary:weakconvergence_CMAB}, we have $\hat\vzeta_{\n} \wto \tilde\vzeta(1) \defeq \vS_{2}(1)^{-1}\vC_{2}(1) - \vS_{1}(1)^{-1}\vC_{1}(1)$. Consequently, 
\begin{align*}
    \stat^{\text{TS-Wald}}_\n(\vG) \wto \stat^{\text{TS-Wald}}(\vG) \defeq \left(\vG^\trans(\vS_{2}(1)^{-1} + \vS_{1}(1)^{-1})\vG\right)^{-1/2}\big(\vG^\trans\tilde\vzeta(1)\big).
\end{align*}
As shown in Section~\ref{subsec:TranslationInvariantSamplingCMAB}, the limit statistic $\stat^{\text{TS-Wald}}$ is distribution free under the null hypothesis with respect to the nuisance local parameter $\vb$.

Le Cam’s third lemma (\ref{eqn:statistic_conv_alt}) and the limit experiment (\ref{eqn:SDE_CMAB}) yield, under $\law_{\vb,\vb+\vzeta}$, $\tilde\vzeta(1)$ follows distribution
\begin{align*}
\mathcal{L}\left(\vzeta + \vS_{2}(1)^{-1}\int_0^1\sqrt{\Epolicyb_{2}(\uu)}\dd\vB_{\e_2}(\uu) - \vS_{1}(1)^{-1}\int_0^1\sqrt{\Epolicyb_{1}(\uu)}\dd\vB_{\e_1}(\uu) \,\cond\, \law_{\vb,\vb+\vzeta}\right).
\end{align*} 
Thus, the distribution of $\stat^{\text{TS-Wald}}_\n(\vG)$ under $\law_{\vb,\vb+\vzeta}$ is characterized accordingly. However, this distribution is generally not normal, due to the nonzero correlation between the integrand process $\sqrt{\Epolicyb_k(u)} = \sqrt{\Exp[\policy_k(\vS(u), \vC(u), \vX)\vX\vX^\trans]}$ (here the expectation is only taken over $\vX$) and the integrator Brownian motion $\vB_{\e_k}(u)$.

% \begin{equation} 
% \begin{aligned} 
% \dd\vC_{k}(\uu) &= \Epolicyb_{k}(\uu)\vb_k\dd\uu + \sqrt{\Epolicyb_{k}(\uu)}\dd\vB_{\e_k}(\uu),  \\
% \dd\vS_{k}(\uu) &= \Epolicyb_{k}(\uu)\dd\uu,
% \end{aligned}
% \end{equation}

\subsubsection*{Two-sample Adaptively-Weighted Wald test}
% \begin{equation} 
% \begin{aligned} 
% \dd\vW_{\e_k}(\uu) &= \sqrt{\Exp[\policy_k\big(\vS(\uu),\vC(\uu),\vX\big)\vX\vX^\trans]}\vb_k\dd\uu + \dd\vB_{\e_k}(\uu),  \\
% \dd\vC_{r_1=1/2;k}(\uu) &= \sqrt{\Exp\big[\vX\vX^\trans\big]}\dd\vW_{\e_k}(\uu),  \\
% \dd\vS_{r_2=1/2;k}(\uu) &= \Exp\big[\policy_k^{1/2}(\vS(\uu),\vC(\uu),\vX)\vX\vX^\trans\big]\dd\uu,
% \end{aligned}
% \end{equation}

The Adaptively-Weighted (AW) version of Wald statistics is based on statistics $\vC_{\frac{1}{2};k,\ii}$ and $\vS_{\frac{1}{2};k,\ii}$ (i.e., $r_1 = r_2 = \frac{1}{2}$), $k = 1,2$. Their asymptotic behavior, according to the limit experiment (\ref{eqn:SDE_CMAB}), is characterized by 
\begin{equation} 
\begin{aligned} 
\dd\vC_{\frac{1}{2};k}(\uu) &= \Exp\big[\policy_k^{1/2}(\vS(\uu),\vC(\uu),\vX)\vX\vX^\trans\big]\vb_k\dd\uu + \sqrt{\Exp\big[\vX\vX^\trans\big]}\dd\vB_{\e_k}(\uu),  \\
\dd\vS_{\frac{1}{2};k}(\uu) &= \Exp\big[\policy_k^{1/2}(\vS(\uu),\vC(\uu),\vX)\vX\vX^\trans\big]\dd\uu.
\end{aligned}
\end{equation}
This motivates the (limiting) AW estimator for $\vzeta$, defined as
\begin{align*}
    \tilde\vzeta^{\text{AW}}(1) \defeq \vS_{\frac{1}{2};2}(1)^{-1}\vC_{\frac{1}{2};2}(1) - \vS_{\frac{1}{2};1}(1)^{-1}\vC_{\frac{1}{2};1}(1),
\end{align*} 
whose distribution under $\law_{\vb,\vb+\vzeta}$ is given by
\begin{align*}
\mathcal{L}\left(\vzeta + \widebar{\sqrt{\Epolicyb_{2}}}^{-1}\vB_{\e_2}(1) - \widebar{\sqrt{\Epolicyb_{1}}}^{-1}\vB_{\e_1}(1) \,\cond\, \law_{\vb,\vb+\vzeta}\right),
\end{align*}
where $\widebar{\sqrt{\Epolicyb_{k}}} \defeq \int_0^1\sqrt{\Epolicyb_{k}(\uu)}\dd\uu$ for $k = 1,2$.

Using this estimator, we define the two-sample AW-Wald statistic as
\begin{equation*}
\stat^{\text{TS-AW-Wald}}(\vG) \defeq \left(\vG^\trans\big(\widebar{\sqrt{\Epolicyb_{2}}}^{-2} + \widebar{\sqrt{\Epolicyb_{1}}}^{-2}\big)\vG\right)^{-1/2}\left(\vG^\trans\tilde\vzeta^{\text{AW}}(1)\right),
\end{equation*}
Simulation results in the next subsection indicate that this standardized statistic is (approximately) standard normal, although a formal theoretical justification remains an open question. The finite-sample version is given by
\begin{equation*}
\stat^{\text{TS-AW-Wald}}_{\n}(\vG) \defeq \left(\vG^\trans\big(\widebar{\sqrt{\Epolicyb_{2,\n}}}^{-2} + \widebar{\sqrt{\Epolicyb_{1,\n}}}^{-2}\big)\vG\right)^{-1/2}\left(\vG^\trans\hat\vzeta^{\text{AW}}_{\n}\right),
\end{equation*}
where $\hat\vzeta^{\text{AW}}_{\n} \defeq \vS_{\frac{1}{2};2,\n}^{-1}\vC_{\frac{1}{2};2,\n} - \vS_{\frac{1}{2};1,\n}^{-1}\vC_{\frac{1}{2};1,\n}$ and $\widebar{\sqrt{\Epolicyb_{k,\n}}} \defeq \vS_{\frac{1}{2};1,\n}\big(\sum_{\ii=1}^{\n}\vX_{\ii}\vX_{\ii}^\trans\big)^{-1/2}$. The weak convergence $\stat^{\text{TS-AW-Wald}}_{\n}(\vG) \wto \stat^{\text{TS-AW-Wald}}(\vG)$ follows from Ancillary~\ref{ancillary:weakconvergence_CMAB}.

% The AW estimator is given by half the difference between the AW estimators for $\vb_k$, that is, 
% \begin{align*} 
% \check\vzeta \defeq \frac{\check\vb_2 - \check\vb_1}{2} = \frac{1}{2}\left(\widebar{\sqrt{\Epolicyb_{2}}}^{-1}\vW_{\e_2}(1) - \widebar{\sqrt{\Epolicyb_{1}}}^{-1}\vW_{\e_1}(1) \right).
% \end{align*}
% Again, by the structural limit experiment (\ref{eqn:SDE_CMAB}), we have
% \begin{align*}
% \check\vzeta 
% = \vzeta + \frac{1}{2}\left(\widebar{\sqrt{\Epolicyb_{2}}}^{-1}\vB_{\e_2}(1) - \widebar{\sqrt{\Epolicyb_{1}}}^{-1}\vB_{\e_1}(1)\right),
% \end{align*}
% for all $\vb\in\rR^p$. This implies that $\check\vzeta$ is normally distributed with mean $\vzeta$ and variance $\big(\widebar{\sqrt{\Epolicyb_{2}}}^{-2} + \widebar{\sqrt{\Epolicyb_{1}}}^{-2}\big)/4$. Thus, the two-sample Adaptively-Weighted (square-root) Wald statistic is defined as
% \begin{equation}
% \stat^{\text{TS-AW-Wald}}(\vG) \defeq \left(\vG^\trans\big(\widebar{\sqrt{\Epolicyb_{2}}}^{-2} + \widebar{\sqrt{\Epolicyb_{1}}}^{-2}\big)\vG\right)^{-1/2}\big(2\vG^\trans\check\vzeta\big),
% \end{equation}
% which follows a standard normal distribution.

\subsubsection{Simulation results}
In this section, we conduct a simulation study to assess the size and power properties of the tests for the linear CMAB experiment, introduced in Sections~\ref{subsec:CMAB_comparetwoarms}. Our primary focus is on asymptotic results simulated via the limit experiment, where we still simulate SDEs (\ref{eqn:SDE_CMAB_Gaussian}) via an Euler scheme on a grid with $200$ points. As in the non-contextual case, we maintain a significance level of $\alpha = 5\%$ throughout the analysis. All results are based on $50,000$ replications.

We consider a linear model, $\Zkt = \beta_{k}^{\text{intercept}} + \beta_{k}^{\text{slope}}\X_\ii + \ekt$, for $k = 1,2$, where $\ekt$ are i.i.d.\ standard normal innovations. This specification includes both a constant term and a scalar contextual variable. Our goal is to test whether $\zeta^{\text{intercept}} = \beta_{2}^{\text{intercept}} - \beta_{1}^{\text{intercept}}$ or $\zeta^{\text{slope}} = \beta_{2}^{\text{slope}} - \beta_{1}^{\text{slope}}$ equals zero, for which we use $\vG = (1,0)^\trans$ and $\vG = (0,1)^\trans$, respectively.  In this setting, the TS-Wald and TS-AW-Wald statistics reduce to the TS-t and TS-AW statistics for testing the intercept and slope parameters.

We evaluate the three translation-invariant sampling schemes for the linear CMAB problem, introduced in Section~\ref{subsec:TranslationInvariantSamplingCMAB}. Specifically, we implement the translation-invariant Thompson algorithm with $b = 1/20$, the translation-invariant tempered-greedy algorithm with $\alpha = 1$, and the translation-invariant tempered-LinUCB algorithm with $\alpha = 1$ and $\gamma = 1$. Additionally, we include the original Thompson sampling scheme (also with $b = 1/20$) to examine how non-translation-invariant algorithms can lead to invalid tests in this linear CMAB setting.

\subsubsection{Null distributions}

\begin{figure}[!htb] 
\centering
\hspace*{-10mm}
\includegraphics[width = 7in]{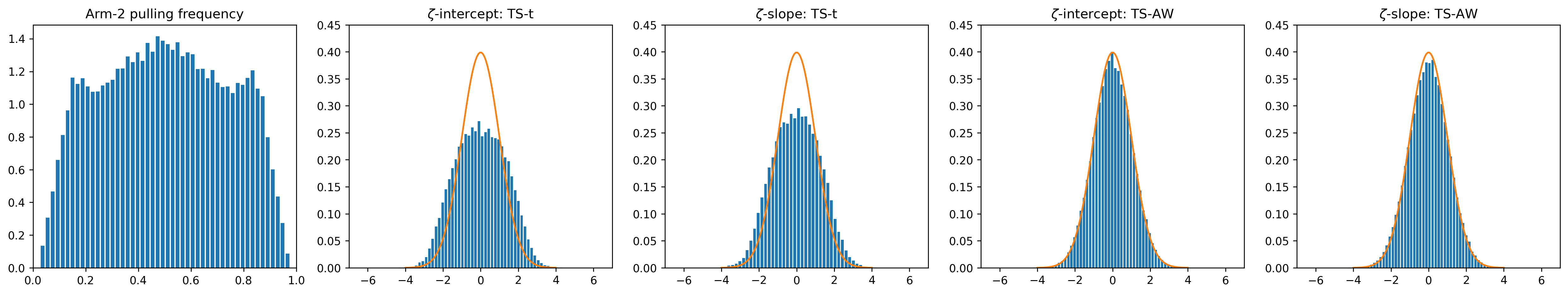}
\hspace*{-10mm}
\includegraphics[width = 7in]{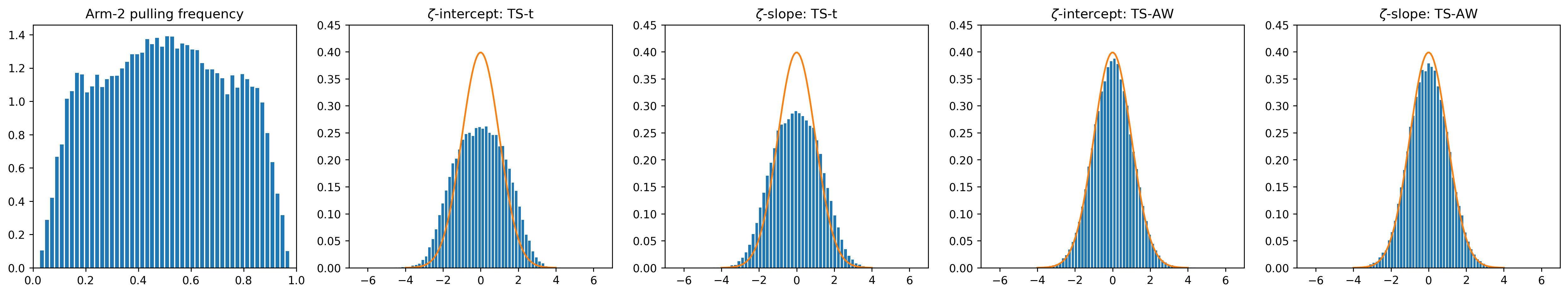}
\caption{{\small Histograms of arm-pulling frequencies, the TS-t, and TS-AW statistics for intercept and slope parameters in the CMAB limit experiment under \textit{\underline{translation-invariant} Thompson sampling}. Parameter setting: $\vb = (0,0)^\trans$, $\vzeta = (0,0)^\trans$ (top); $\vb = (100,100)^\trans$, $\vzeta = (0,0)^\trans$ (bottom).}}
\label{fig:CMABseq_statz_ThompsonInv}
\end{figure}

\begin{figure}[!htb] 
\centering
\hspace*{-10mm}
\includegraphics[width = 7in]{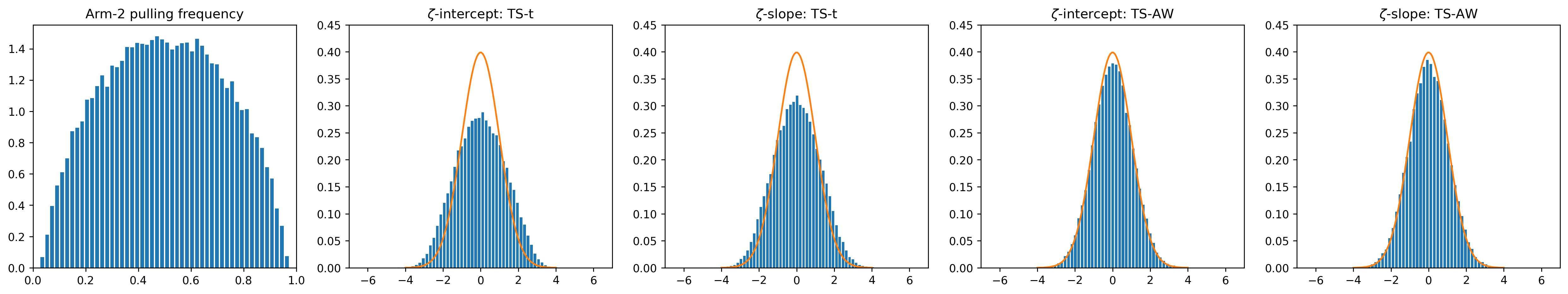}
\hspace*{-10mm}
\includegraphics[width = 7in]{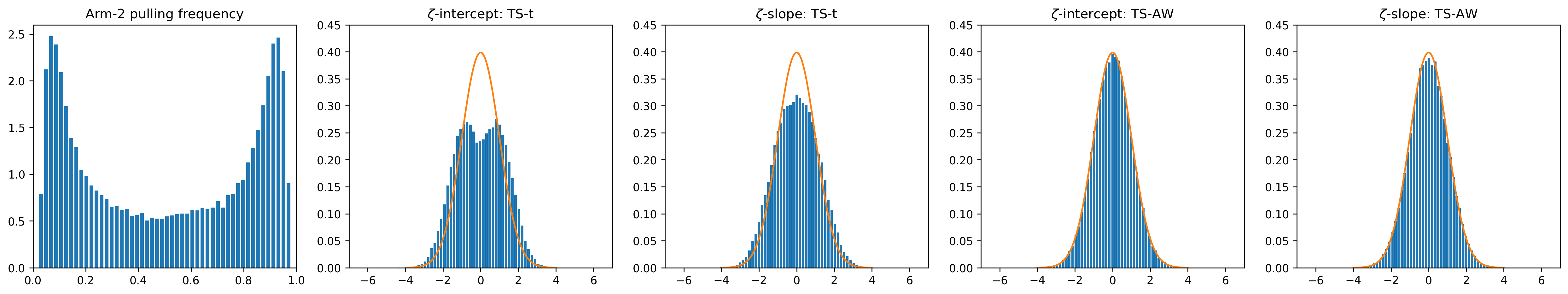}
\caption{{\small Histograms of arm-pulling frequencies, the TS-t, and TS-AW statistics for intercept and slope parameters in the CMAB limit experiment under \textit{\underline{classical} Thompson sampling}. Parameter setting: $\vb = (0,0)^\trans$, $\vzeta = (0,0)^\trans$ (top); $\vb = (100,100)^\trans$, $\vzeta = (0,0)^\trans$ (bottom).}}
\label{fig:CMABseq_statz_Thompson}
\end{figure}

Figure~\ref{fig:CMABseq_statz_ThompsonInv} presents histograms of arm-pulling frequencies, the TS-t, and TS-AW statistics for intercept and slope parameters in the CMAB limit experiment under the null $\vzeta = (0,0)^\trans$ and under \textit{translation-invariant Thompson sampling}. The nuisance common local parameter is set to $\vb = (0,0)^\trans$ (top) and $\vb = (100,100)^\trans$ (bottom). The distributions for all statistics remain unchanged when $\vb$ varies, as they are designed to be distribution-free, under the null hypothesis, with respect to $\vb$. Notably, the TS-AW statistics follow nearly a standard normal distribution. 

Figure~\ref{fig:CMABseq_statz_Thompson}, the counterpart to Figure~\ref{fig:CMABseq_statz_ThompsonInv} but under \textit{classical Thompson sampling}, illustrates the impact of non-translation-invariant sampling schemes on the tests. We observe a dramatic change in arm-pulling frequencies when $\vb$ changes from $(0,0)^\trans$ to $(100,100)^\trans$, which in turn alters the distributions of the TS-t statistics---the difference is small for the slope parameter, though notices in an unreported CDF plot---despite their intended distribution-free property with respect to $\vb$. This again highlights the necessity of translation-invariant sampling schemes to ensure valid inference. An exception is the TS-AW statistic, which, by construction, remains nearly standard normally distributed regardless of the sampling scheme.

\subsubsection{Power results}
Turning to the power results, Figure~\ref{fig:CMABlim_powerZalgorithms} shows the asymptotic powers of the TS-t and TS-AW tests for detecting differences in the intercept (left) and slope (right) parameters between the two arms, under $\vb = (0,0)^\trans$ and under the same three translation-invariant sampling schemes. Similar to the non-contextual bandit case, we observe that the asymptotic power of each test depends on the underlying sampling scheme, with schemes that encourage more exploration, such as the TI Thompson sampling, tending to yield higher power. Interestingly, the relative power gains and losses across two fixed sampling schemes differ between the intercept and slope parameters. For instance, the TS-t-test under the TI tempered-LinUCB algorithm achieves asymptotic power nearly identical to that under TI tempered-greedy when testing the intercept parameter, whereas for the slope parameter, its power is more comparable to that under TI Thompson sampling.

Figure~\ref{fig:CMABlim_powerZconvergence} compares the power performance of the TS-t and TS-AW tests in terms of both asymptotic power and finite-sample powers at $\vb = (0,0)^\trans$ for $T = 100$, $200$, and $500$ under translation-invariant Thompson sampling scheme. The TS-t-test consistently outperforms the AW test in power across all settings, highlighting the trade-off in power required to achieve (near) standard normality under the null for the TS-AW statistic. Moreover, the TS-t-test exhibits faster convergence of finite-sample power to its asymptotic counterpart: even with $T = 100$, the observed power is already close to the asymptotic level. In contrast, the TS-AW test requires substantially larger sample sizes---more than $T = 500$---to attain similar convergence.

\begin{figure}[!htb] 
\centering
\hspace*{-0mm}\includegraphics[width = 3.1in]{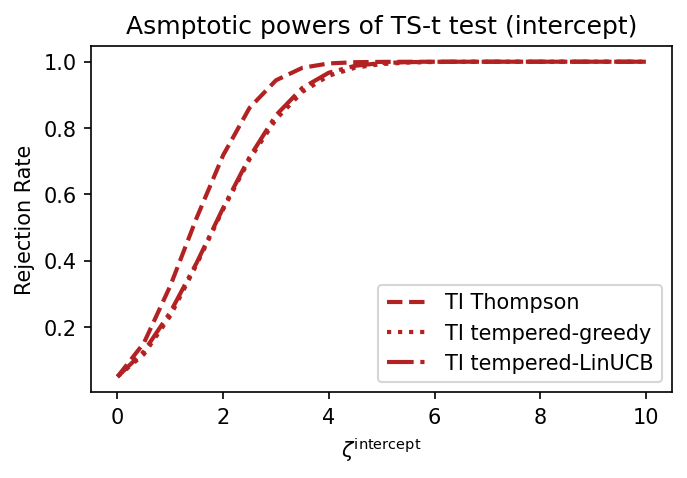}
\hspace*{-0mm}\includegraphics[width = 3.1in]{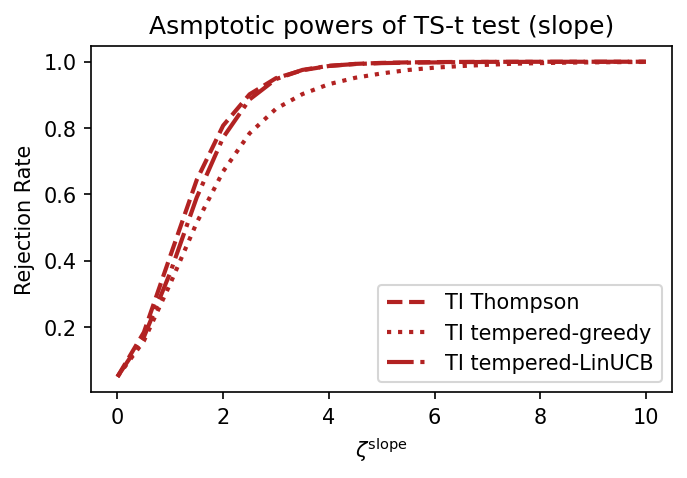}
\hspace*{-0mm}\includegraphics[width = 3.1in]{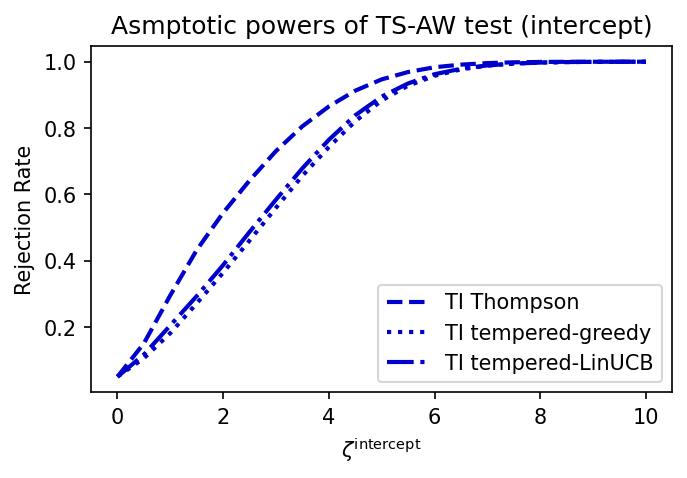}
\hspace*{-0mm}\includegraphics[width = 3.1in]{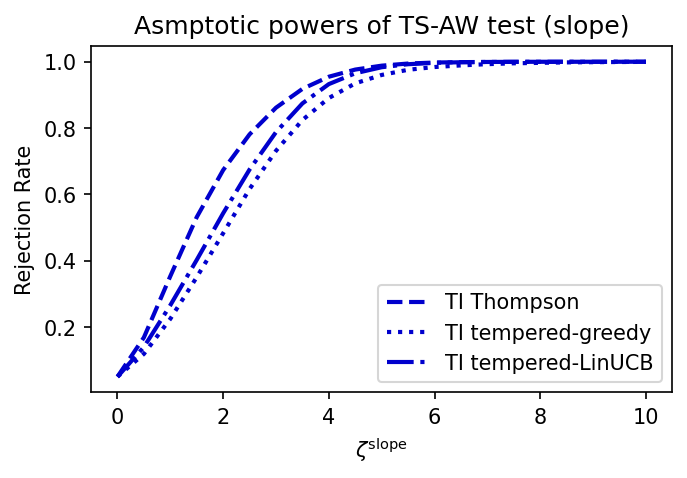}
\caption{{\small CMAB asymptotic powers at $\vb = (0,0)^\trans$ of the TS-t (red) and the TS-AW (blue) tests for the intercept (left) and slope (right) parameters under translation-invariant (TI) Thompson (dashed), TI tempered-greedy (dotted), and TI tempered-LinUCB (dash-dotted) algorithms.}}
\label{fig:CMABlim_powerZalgorithms}
\end{figure}

\begin{figure}[ht] 
\centering
\hspace*{-0mm}\includegraphics[width = 3.1in]{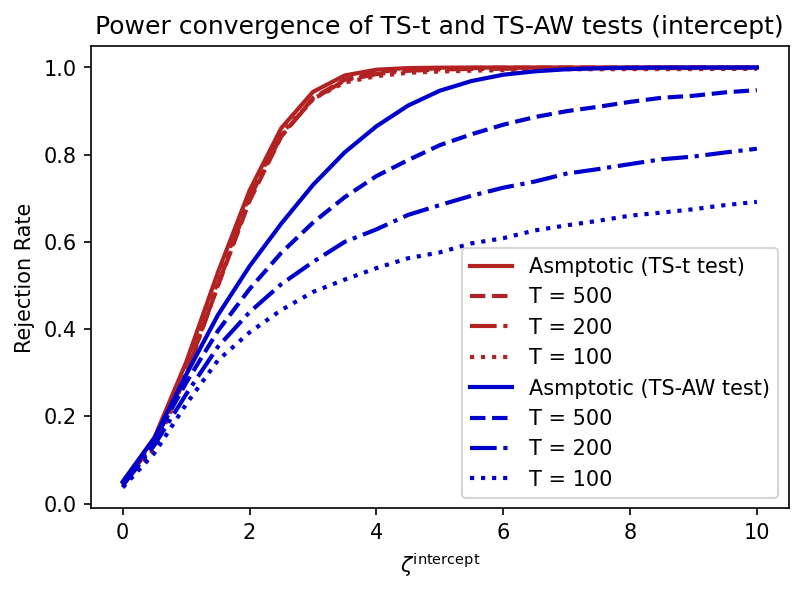}
\hspace*{-0mm}\includegraphics[width = 3.1in]{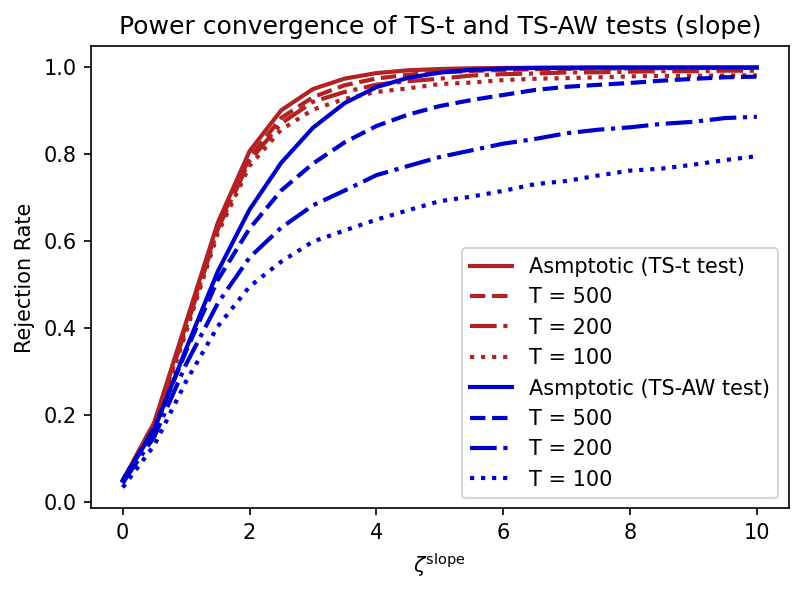}
% \hspace*{-0mm}\includegraphics[width = 3.1in]{plot_MABlim_powerZ0convergence_temperedgreedy}
% \hspace*{-0mm}\includegraphics[width = 3.1in]{plot_MABlim_powerZ1convergence_temperedgreedy}
% \hspace*{-0mm}\includegraphics[width = 3.1in]{plot_MABlim_powerZ0convergence_temperedLinUCB}
% \hspace*{-0mm}\includegraphics[width = 3.1in]{plot_MABlim_powerZ1convergence_temperedLinUCB}
\caption{{\small CMAB asymptotic powers (solid) and finite-sample powers at $\vb = (0,0)^\trans$ for $T = 100$ (dotted), $200$ (dash-dotted), and $500$ (dashed), of the TS-t (red) and TS-AW (blue) tests for the intercept (left) and slope (right) parameters under translation-invariant Thompson sampling.}}
\label{fig:CMABlim_powerZconvergence}
\end{figure}

\section{Conclusion} \label{sec:conclusion}
We show that diffusion approximations that have been used in the literature to describe the limiting behavior of multi-armed bandit problems at fixed parameter value, can also be used as approximations in the H\'ajek-Le Cam sense of convergence of statistical experiments, in particular, for studying inference about unknown parameter values. Specifically, under the equal-arms asymptotics, where all arms have equal means, we develop the limit experiment for the (C)MAB problem, which incorporates a convergence result for a general class of statistics used for inferential purposes. Leveraging a structural representation of the CMAB limit experiment in terms of SDEs, we can readily study the asymptotic behavior of commonly-used test statistics. We establish that many such tests have non-constant size over the (composite) null hypothesis. To address this issue, we rigorously propose the notion of \emph{translation-invariant} sampling schemes, including specific examples such as translation-invariant versions of Thompson, tempered-greedy and tempered-UCB/-LinUCB sampling for (C)MAB settings. Their regret analysis is left for future research.

We summarize our main findings, which apply to both MAB and CMAB settings. When testing the expected reward of a single arm (presented in the MAB case; see Section~\ref{subsec:MAB_evaluateonearm}), we find that the t-test and IPW test are not only asymptotically non-normal---as recently noted in the literature--but moreover, not distribution-free with respect to the local reward parameter of the other arm, and are therefore invalid. The only valid test in this case is the AW test, whose statistic is standard normally distributed under the null, regardless of the expected reward of the other arm. In the case of comparing two arms (e.g., treatment vs. control; see Section~\ref{subsec:MAB_comparetwoarms} for MAB and Section~\ref{subsec:CMAB_comparetwoarms} for CMAB), the validity of a test requires distribution-freeness (w.r.t.\ the common local means of the arms) of its statistics and, perhaps less obviously, translation-invariance of the sampling schemes. On power performance, we find that (i) policies that emphasize exploration (e.g., TI Thompson sampling) generally yield higher power than those favoring exploitation (e.g., TI tempered-greedy), and (ii) when comparing arms, the two-sample t-test outperforms the two-sample AW and IPW tests in both power and power convergence speed. We therefore recommend using the two-sample t-test in conjunction with a translation-invariant sampling scheme for reliable inference.

\bibliographystyle{asa}
\bibliography{references}

\appendix

\section{Translation-Invariant (TI) Sampling Schemes} \label{appsec:TI_algorithms}

\subsection{TI sampling schemes for MAB problem} \label{appsubsec:TI_algorithms_MAB}

\subsubsection*{Translation-invariant Thompson sampling}
The classical sampling proposed by \citet{thompson1933likelihood} operates in a Bayesian framework: the agent begins with a prior belief about the reward distribution for each arm, updates this belief to a posterior distribution based on observed rewards, and samples from these posterior distributions to sequentially decide which arm to pull. The Bayesian updating is performed under the assumption that the observations are i.i.d.\ Specifically, suppose the prior for $\m_k$, the local reward parameter for Arm-$k$ ($k = 1,2$), is $\mathcal{N}(0, b_{\n}^{-2})$, where $b_\n/\sqrt{\n} \to b$ for some fixed $b\in[0,\infty)$. The posterior of $\m_k$, conditional on the filtration $\filtr_\ii$, is then
\begin{align*}
\mathcal{N}\left(\frac{\reward_{k,\ii}}{\freq_{k,\ii} + b_\n^2/\n} , \frac{1}{\freq_{k,\ii} + b_\n^2/\n} \right), ~~ k = 1,2.
\end{align*} 
The probability of choosing Arm-$2$ in round $\ii+1$ is
\begin{equation} \label{eqn:samplinglim_Thompson}
\begin{aligned}
\policy_2^{(\n)}(\freqb_\ii, \rewardb_\ii) = \Phi\left(\frac{\frac{\reward_{2,\ii}}{\freq_{2,\ii}+b_\n^2/\n} - \frac{\reward_{1,\ii}}{\freq_{1,\ii}+b_\n^2/\n}}{\sqrt{\frac{1}{\freq_{1,\ii}+b_\n^2/\n}+\frac{1}{\freq_{2,\ii}+b_\n^2/\n}}}\right),
\end{aligned}
\end{equation}
where $\Phi$ denotes the CDF of standard normal distribution. It is evident from (\ref{eqn:samplinglim_Thompson}) that $\policy_2^{(\n)}$ depends on $\mu$, violating Assumption~\ref{assm:policy_feasible}.\footnote{An exception occurs when $b = 0$, in which case the classical Thompson sampling becomes translation-invariant, as it now depends only on $\freqb(\uu)$ and $\reward_{2}(\uu)/\freq_{2}(\uu) - \reward_{1}(\uu)/\freq_{1}(\uu)$.}

The translation-invariant Thompson sampling scheme, introduced in Section 4.2 of \cite{kuang2024weak}, builds on the same principles as classical Thompson sampling but applies to the local parameter $\dlpint$ instead of local reward parameters $\m_k$. Let the prior distribution of $\dlpint$ be $\mathcal{N}(0, b_\n^{-2})$ with $b_\n/\sqrt{\n} \to b$ for some fixed $b\in[0,\infty)$. The posterior of $\dlpint$ given $\hat\dlpint_{\ii}$ is 
\begin{equation} \label{eqn:samplinglim_ThompsonInv}
\mathcal{N}\left( \frac{\frac{\freq_{1,\ii}\freq_{2,\ii}}{\freq_{1,\ii}+\freq_{2,\ii}} \hat\dlpint_{\ii}}{\frac{\freq_{1,\ii}\freq_{2,\ii}}{\freq_{1,\ii}+\freq_{2,\ii}} + \frac{b_\n^2}{\n}}, \frac{1}{\frac{\freq_{1,\ii}\freq_{2,\ii}}{\freq_{1,\ii}+\freq_{2,\ii}} + \frac{b_\n^2}{\n}} \right).
\end{equation}
The translation-invariant Thompson sampling then picks Arm-$2$ with probability $\policy_2^{(\n)}\left(\freqb_{\ii},\rewardb_{\ii}\right)$ where, for $\vd\in(0,1)^2$ and $\vr\in\rR^2$,
\begin{align*}
\policy_2^{(\n)}\left(\vd,\vr\right) 
=&~ \Phi\Bigg(\frac{d_1d_2}{d_1+d_2}\left(\frac{r_2}{d_2}-\frac{r_1}{d_1}\right) \bigg/ \sqrt{\frac{d_1d_2}{d_1+d_2} + \frac{b_\n^2}{\n}}\Bigg) \\
\to \policy_2\left(\vd,\vr\right) 
=&~ \Phi\Bigg(\frac{d_1d_2}{d_1+d_2}\left(\frac{r_2}{d_2}-\frac{r_1}{d_1}\right) \bigg/ \sqrt{\frac{d_1d_2}{d_1+d_2} + b^2}\Bigg) \textrm{~~pointwise}.
\end{align*}
As such, we can conclude that the sampling scheme and its limiting version satisfy Assumption~\ref{assm:policy_limit}. Since $\policy_2\left(\vd,\vr\right)$ depends only on $\vd$ and $r_2/d_2 - r_1/d_1$, it is invariant with respect to $\m$. 

\subsubsection*{Translation-invariant tempered-greedy sampling}
The tempered-greedy algorithm (see also \citet[Section~2]{kuang2024weak} tempers, i.e., applies the softmax function to, the original greedy algorithm, which chooses arm that gives the maximum average reward $\reward_{k,\ii}/\freq_{k,\ii}$. Specifically, the tempered-greedy picks Arm-$2$ with probability $\exp\left(\frac{\alpha_\n}{\sqrt{\n}}\frac{\reward_2}{\freq_2 + c_\n}\right) \Big/ \sum_{k=1}^{2} \exp\left(\frac{\alpha_\n}{\sqrt{\n}}\frac{\reward_k}{\freq_k + c_\n}\right)$, where the hyperparameter $\alpha_\n\in\rR$ controls the degree of exploitation, and $c_\n\in\rR$ prevents division by zero. 

To obtain a translation-invariant version of the tempered-greedy algorithm, we set $c_\n = 0$ for all $\n \in \rN$. Under this modification, the translation-invariant tempered-greedy then picks Arm-$2$ with probability $\policy_2^{(\n)}\left(\freqb_{\ii},\rewardb_{\ii}\right)$ where, for $\vd\in(0,1)^2$ and $\vr\in\rR^2$,
\begin{align*}
\policy_2^{(\n)}\left(\vd,\vr\right) 
&= \frac{\exp\left(\frac{\alpha_\n}{\sqrt{\n}}\frac{r_2}{d_2}\right)}{\sum_{k=1}^{2} \exp\left(\frac{\alpha_\n}{\sqrt{\n}}\frac{r_k}{d_k}\right)} 
= \frac{\exp\left(\frac{\alpha_\n}{\sqrt{\n}}\left(\frac{r_2}{d_2} - \frac{r_1}{d_1}\right)\right)}{1 + \exp\left(\frac{\alpha_\n}{\sqrt{\n}}\left(\frac{r_2}{d_2} - \frac{r_1}{d_1}\right)\right)}  \\
\to \policy_2\left(\vd,\vr\right) 
&= \frac{\exp\left(\alpha\left(\frac{r_2}{d_2} - \frac{r_1}{d_1}\right)\right)} {1 + \exp\left(\alpha\left(\frac{r_2}{d_2} - \frac{r_1}{d_1}\right)\right)},
\end{align*}
for sequences $\alpha_\n$ such that $\alpha_\n/\sqrt{\n} \to \alpha \in(0,\infty)$. Notably, $\policy_2^{(\n)}\left(\vd,\vr\right)$ depends only on $\vd$ and $r_2/d_2 - r_1/d_1$, making it translation invariant with respect to $\m$. By similar arguments as for translation-invariant Thompson sampling, we can extend the sampling schemes for $d\in[0,1]^2$ in such a way that Assumption~\ref{assm:policy_limit} is satisfied.

\subsubsection*{Translation-invariant tempered-UCB sampling}
The classical Upper Confidence Bound (UCB) algorithm (see, e.g., \cite{lai1985asymptotically}, \cite{auer2002finite}, and \cite{audibert2009minimax}) prioritizes the exploration-exploitation tradeoff by focusing on the upper confidence bounds of the reward parameters rather than solely their mean estimates, as done in the classical greedy algorithm. By incorporating these bounds, UCB systematically assigns more weight to less-explored arms, as their wider confidence bounds often result in higher upper bounds. Specifically, the UCB algorithm selects the arm that maximizes the following value:
\begin{equation}
\frac{\reward_{k,\ii}}{\freq_{k,\ii}} + \sqrt{\frac{\log(T/\delta)}{2\freq_{k,\ii}}},
\end{equation}
where the first term represents the sample mean estimate for $\m_k$, while the second term quantifies the exploration bonus, controlling the width of the confidence interval around the mean estimate.

We temper the widely-adopted UCB algorithm as follows. For sequences satisfying $\alpha_\n/\sqrt{\n} \to \alpha \in (0,\infty)$ and $\delta_\n/\n \to \delta \in (0,\infty)$, the tempered-UCB algorithm selects Arm-$2$ with probability $\policy_2^{(\n)}\left(\freqb_{\ii},\rewardb_{\ii}\right)$, where, for $\vd\in(0,1)^2$ and $\vr\in\rR^2$,
\begin{align*}
\policy_2^{(\n)}\left(\vd,\vr\right) 
=&~ \frac{\exp\Big(\frac{\alpha_\n}{\sqrt{\n}}\Big(\frac{r_2}{d_2} + \sqrt{\frac{\log(\n/\delta_\n)}{2d_2}}\Big)\Big)}{\sum_{k=1}^{2}\exp\Big(\frac{\alpha_\n}{\sqrt{\n}}\Big(\frac{r_k}{d_k} +  \sqrt{\frac{\log(\n/\delta_\n)}{2d_k}}\Big)\Big)}  \\
=&~ \frac{\exp\Big(\frac{\alpha_\n}{\sqrt{\n}}\Big(\frac{r_2}{d_2} - \frac{r_1}{d_1} + \sqrt{\frac{\log(\n/\delta_\n)}{2d_2}} - \sqrt{\frac{\log(\n/\delta_\n)}{2d_1}}\Big)\Big)}{1 + \exp\Big(\frac{\alpha_\n}{\sqrt{\n}}\Big(\frac{r_2}{d_2} - \frac{r_1}{d_1} + \sqrt{\frac{\log(\n/\delta_\n)}{2d_2}} - \sqrt{\frac{\log(\n/\delta_\n)}{2d_1}}\Big)\Big)}  \\
\to \policy_2(\vd,\vr)
%=&~ \exp\Bigg(\alpha\Bigg(\frac{r_2}{d_2} + \sqrt{\frac{\log(1/\delta)}{2d_2}}\Bigg)\Bigg)\bigg/\sum_{k=1}^{2}\exp\Bigg(\alpha\Bigg(\frac{r_k}{d_k} +  \sqrt{\frac{\log(1/\delta)}{2d_k}}\Bigg)\Bigg)  \\
=&~ \frac{\exp\Big(\alpha\Big(\frac{r_2}{d_2} - \frac{r_1}{d_1} + \sqrt{\frac{\log(1/\delta)}{2d_2}} - \sqrt{\frac{\log(1/\delta)}{2d_1}}\Big)\Big)}{1 + \exp\Big(\alpha\Big(\frac{r_2}{d_2} - \frac{r_1}{d_1} + \sqrt{\frac{\log(1/\delta)}{2d_2}} - \sqrt{\frac{\log(1/\delta)}{2d_1}}\Big)\Big)}.
\end{align*}
Since $\policy_2\left(\vd,\vr\right)$ depends only on $\vd$ and $r_2/d_2 - r_1/d_1$, the tempered-UCB algorithm is translation invariant. By similar arguments as for translation-invariant Thompson sampling, we can extend the sampling schemes for $d\in[0,1]^2$ in such a way that Assumption~\ref{assm:policy_limit} is satisfied.

\subsection{TI sampling schemes for CMAB problem} \label{appsubsec:TI_algorithms_CMAB}

\subsubsection*{Translation-invariant Thompson sampling}
The translation-invariant Thompson sampling scheme is motivated by updating the information on the local arm difference parameters $\vzeta$. Specifically, let the prior distribution of $\vzeta$ be $\mathcal{N}(\mfzero, b_\n^{-2}\idm_p)$, where $\idm_p$ denotes the identity matrix of dimension $p$, and $b_\n/\sqrt{\n} \to b \in [0,\infty)$. The posterior of $\vzeta$ given $\hat\vzeta_{\ii}$ is
%\begin{align*}
%\post\vzeta_{k,\ii} | \hat\vzeta_{k,\ii} \sim \mathcal{N}\left( \frac{\big(\vS_{1,\ii}^{-1} + \vS_{k,\ii}^{-1}\big)^{-1} \hat\vzeta_{k,\ii}}{\big(\vS_{1,\ii}^{-1} + \vS_{k,\ii}^{-1}\big)^{-1} + \frac{b_\n^2}{\n}\idm_p}, \frac{1}{\big(\vS_{1,\ii}^{-1} + \vS_{k,\ii}^{-1}\big)^{-1} + \frac{b_\n^2}{\n}\idm_p} \right)
%\end{align*}
\begin{align*}
\mathcal{N}\left( \vGamma_{\ii}^{-1}\big(\big(\vS_{1,\ii}^{-1} + \vS_{2,\ii}^{-1}\big)^{-1}\hat\vzeta_{\ii}\big), \vGamma_{\ii}^{-1}\right),
\end{align*} 
where $\vGamma_{\ii} = \big(\vS_{1,\ii}^{-1} + \vS_{2,\ii}^{-1}\big)^{-1} + \idm_p b_\n^2/\n$. 

Translation-invariant Thompson sampling compares $\vX_{\ii+1}^\trans\post\vzeta_{\ii}$, whose posterior distribution is $N\big(\vX_{\ii+1}^\trans\vGamma_{\ii}^{-1}\big(\big(\vS_{1,\ii}^{-1} + \vS_{2,\ii}^{-1}\big)^{-1}\hat\vzeta_{\ii}\big), \vX_{\ii+1}^\trans\vGamma_{\ii}^{-1}\vX_{\ii+1}\big)$, to zero. This leads to the probability of picking Arm-2 being $\policy^{(\n)}_2(\vS_{\ii},\vC_{\ii},\vX_{\ii+1})$ where
\begin{align*}
\policy^{(\n)}_2(\vs,\vc,\vx) 
=&~ \Phi\left(\frac{\vx^\trans\left((\vs_{1}^{-1} + \vs_{2}^{-1})^{-1} + \idm_p b_\n^2/\n\right)^{-1}\left((\vs_{1}^{-1} + \vs_{2}^{-1})^{-1}(\vs_{2}^{-1}\vc_{2} - \vs_{1}^{-1}\vc_{1})\right)}{\sqrt{\vx^\trans\left((\vs_{1}^{-1} + \vs_{2}^{-1})^{-1} + \idm_p b_\n^2/\n\right)^{-1}\vx}}\right) \\
\to 
\policy_2(\vs,\vc,\vx) 
=&~ \Phi\left(\frac{\vx^\trans\left((\vs_{1}^{-1} + \vs_{2}^{-1})^{-1} + \idm_p b^2\right)^{-1}\left((\vs_{1}^{-1} + \vs_{2}^{-1})^{-1}(\vs_{2}^{-1}\vc_{2} - \vs_{1}^{-1}\vc_{1})\right)}{\sqrt{\vx^\trans\left((\vs_{1}^{-1} + \vs_{2}^{-1})^{-1} + \idm_p b^2\right)^{-1}\vx}}\right) 
\end{align*}
for $\vs = (\vs_1,\vs_2)$ with $\vs_1, \vs_2\in\rR^{p\times p}$, $\vc = (\vc_1,\vc_2)$ with $\vc_1, \vc_2\in\rR^{p}$, and $\vx\in\rR^{p}$.

\subsubsection*{Translation-invariant tempered-greedy}
The translation-invariant greedy algorithm picks the arm that maximizes $\vX_{\ii+1}^\trans\hat\vb_{k,\ii}.$ For the same reason as in Section~\ref{subsec:TranslationInvariantSamplingMAB}, such an algorithm does not converge and violates Assumption~\ref{assm:policy_feasible}. To overcome this issue, we temper it using the softmax function. Specifically, for some $\alpha_\n/\sqrt{\n} \to \alpha \in (0,\infty)$, the translation-invariant tempered-Greedy algorithm selects Arm-$2$ with probability $\policy_2^{(\n)}\left(\vS_{\ii}, \vC_{\ii}, \vX_{\ii+1}\right)$, where
\begin{align*}
\policy_2^{(\n)}\left(\vs, \vc, \vx\right) 
=&~ \frac{\exp\left(\frac{\alpha_\n}{\sqrt{\n}}\vx^\trans\vs_2^{-1}\vc_2\right)}{\sum_{k=1}^{2} \exp\left(\frac{\alpha_\n}{\sqrt{\n}}\vx^\trans\vs_k^{-1}\vc_k\right)}  
= \frac{\exp\left(\frac{\alpha_\n}{\sqrt{\n}}\vx^\trans(\vs_2^{-1}\vc_2 - \vs_{1}^{-1}\vc_{1})\right)}{\sum_{k=1}^{2}\exp\left(\frac{\alpha_\n}{\sqrt{\n}}\vx^\trans(\vs_k^{-1}\vc_k - \vs_{1}^{-1}\vc_{1})\right)}  \\
\to
\policy_2\left(\vs, \vc, \vx\right) 
=&~ \frac{\exp\left(\alpha\vx^\trans(\vs_{k}^{-1}\vc_{k} - \vs_{1}^{-1}\vc_{1})\right)}{\sum_{k=1}^{2}\exp\left(\alpha\vx^\trans(\vs_k^{-1}\vc_k - \vs_{1}^{-1}\vc_{1})\right)},
\end{align*}
for $\vs = (\vs_1,\vs_2)$ with $\vs_1, \vs_2\in\rR^{p\times p}$, $\vc = (\vc_1,\vc_2)$ with $\vc_1, \vc_2\in\rR^{p}$, and $\vx\in\rR^{p}$.

\subsubsection*{Translation-invariant tempered-LinUCB}
In the original LinUCB algorithm, the arm selection is based on maximizing
\begin{align*}
\vX_{\ii+1}^\trans\left(\vS_{k,\ii} + \idm_p c_\n^2/\n\right)^{-1}\vC_{k,\ii} + \lambda_\n\sqrt{\vX_{\ii+1}^\trans\left(\vS_{k,\ii} + \idm_p c_\n^2/\n\right)^{-1}\vX_{\ii+1}},
\end{align*}
where $\left(\vS_{k,\ii} + \idm_p c_\n^2/\n\right)^{-1}\vC_{k,\ii}$ serves as the ridge estimator for $\vb_k$ for $k = 1,2$.

To ensure convergence, we temper the LinUCB algorithm by the softmax function and, in addition, set $c_\n = 0$ to enforce translation invariance (as in the tempered-UCB algorithm for the MAB problem). The resulting translation-invariant tempered-LinUCB algorithm selects Arm-$2$ with probability $\policy_2^{(\n)}\left(\vS_{\ii}, \vC_{\ii}, \vX_{\ii+1}\right)$, where
\begin{align*}
\policy_2^{(\n)}\left(\vs, \vc, \vx\right) 
=&~ \frac{\exp\Big(\frac{\alpha_\n}{\sqrt{\n}}\Big(\vx^\trans\vs_2^{-1}\vc_2 + \lambda_\n\sqrt{\vx^\trans\vs_2^{-1}\vx}\Big)\Big)}{\sum_{k=1}^{2}\exp\Big(\frac{\alpha_\n}{\sqrt{\n}}\Big(\vx^\trans\vs_k^{-1}\vc_k + \lambda_\n\sqrt{\vx^\trans\vs_k^{-1}\vx}\Big)\Big)} \\
=&~ \frac{\exp\Big(\frac{\alpha_\n}{\sqrt{\n}}\Big(\vx^\trans(\vs_2^{-1}\vc_2 - \vs_{1}^{-1}\vc_{1}) + \lambda_\n\sqrt{\vx^\trans\vs_2^{-1}\vx}\Big)\Big)}{\sum_{k=1}^{2}\exp\Big(\frac{\alpha_\n}{\sqrt{\n}}\Big(\vx^\trans(\vs_k^{-1}\vc_k - \vs_{1}^{-1}\vc_{1}) + \lambda_\n\sqrt{\vx^\trans\vs_k^{-1}\vx}\Big)\Big)}. 
\end{align*}
for $\vs = (\vs_1,\vs_2)$ with $\vs_1, \vs_2\in\rR^{p\times p}$, $\vc = (\vc_1,\vc_2)$ with $\vc_1, \vc_2\in\rR^{p}$, and $\vx\in\rR^{p}$. For some $\alpha_\n/\sqrt{\n} \to \alpha \in (0,\infty)$ and $\lambda_\n \to \lambda \in (0,\infty)$, the limiting algorithm is given by
\begin{align*}
\policy_k\left(\vs, \vc, \vx\right) 
=&~ \frac{\exp\Big(\alpha\Big(\vx^\trans(\vs_2^{-1}\vc_2 - \vs_{1}^{-1}\vc_{1}) + \lambda\sqrt{\vx^\trans\vs_2^{-1}\vx}\Big)\Big)}{\sum_{k=1}^{2}\exp\Big(\alpha\Big(\vx^\trans(\vs_k^{-1}\vc_k - \vs_{1}^{-1}\vc_{1}) + \lambda\sqrt{\vx^\trans\vs_k^{-1}\vx}\Big)\Big)}.
\end{align*}
Both the finite-sample and the limiting tempered-LinUCB algorithms are translation invariant as as they depend solely on $\vX_{\ii+1}$, $\vS_{\ii}$ and $\vS_{2,\ii}^{-1}\vC_{2,\ii} - \vS_{1,\ii}^{-1}\vC_{1,\ii}$.

\section{Proofs} \label{appsec:proofs}

\begin{proof}[Proof of Proposition~\ref{prop:LAQ}]
Recall that the log-likelihood ratio contribution for arm $k\in\SK$ is 
\begin{align*}
\LLRn_{k}(\lpintb)  
= \sum_{\ii=1}^{\n} \indicator_{\{A_{\ii} = k\}}\log\frac{\fzk(\Y_{\ii}\cond\vX_{\ii},\pintb_{\n})}{\fzk(\Y_{\ii }\cond\vX_{\ii},\pintb)} 
= \sum_{\ii=1}^{\n} \indicator_{\{A_{\ii} = k\}}\log\frac{\fzk(\Zkt\cond\vX_{\ii},\pintb_{\n})}{\fzk(\Z_{k,\ii }\cond\vX_{\ii},\pintb)}.
\end{align*}
We follow \citet[Proposition 1]{hallin2015quadratic} to prove the likelihood expansion for $\LLRn_{k}(\lpintb)$. To put notions in their language, we let $P_{\n} = \lawn_{\mfzero}$, $\tilde{P}_{\n} = \lawn_{\lpintb}$, define
\begin{align*}
\vS_{\n\ii} = \frac{1}{\sqrt{\n}}\indicator_{\{A_{\ii} = k\}}\scorebk(\Zkt \cond\vX_{\ii}),
\end{align*}
for $\ii = 1,\dots,\n$, and write the individual likelihood ratio of observation $\ii$ as
\begin{align*}
LR_{\n\ii} = 1 + \indicator_{\{A_{\ii} = k\}}\left(\frac{\fzk(\Zkt\cond\vX_{\ii},\pintb_{\n})}{\fzk(\Zkt\cond\vX_{\ii},\pintb)}-1\right).
\end{align*}
By the DQM condition in Assumption~\ref{assm:DQM}, we can decompose 
\begin{align*}
\sqrt{LR_{\n\ii}} = 1 + \frac{1}{2}\lpintb^\trans\vS_{\n\ii} + \frac{1}{2}R_{\n\ii},
\end{align*}
where $R_{\n\ii} = \indicator_{\{A_{\ii} = k\}}r_k\big(\Zkt\cond\vX_{\ii},\lpintb/\sqrt{\n}\big)$.

We verify the four conditions in \citet[Proposition 1]{hallin2015quadratic} using, in their notation, the filtration defined by $\filtr_{\n,\ii-1} = \sigma(\filtr_{\ii-1}\cup\sigma(\X_{\ii}))$.

Their \textit{Condition (a)} is trivially met as $\lpintb$ is constant.

For their \textit{Condition (b)}, note 
\begin{align*}
\Exp_{P_{\n}}\left[\vS_{\n\ii}\cond\A_{\ii},\X_{\ii},\filtr_{\ii-1}\right] 
&= \frac{1}{\sqrt{\n}}\indicator_{\{\A_{\ii} = k\}}\,\Exp_{P_{\n}}\big[\scorebk(\Zkt\cond\vX_{\ii})\cond\A_{\ii},\X_{\ii},\filtr_{\ii-1}\big]  \\
&= \frac{1}{\sqrt{\n}}\indicator_{\{\A_{\ii} = k\}}\,\Exp_{P_{\n}}\big[\scorebk(\Zkt\cond\vX_{\ii})\cond\vX_{\ii}\big]  \\
&= \mfzero,
\end{align*}
where the second equality is due to the fact that $\Zkt$ is independent of $\A_{\ii}$ conditionally on $\vX_{\ii}$ and $\filtr_{\ii-1}$, and is independent of $\filtr_{\ii-1}$. The third equality is due to the DQM condition (see discussion after Assumption~\ref{assm:DQM}). Thus, their Display (2) follows because
\begin{align*}
  \Exp_{P_{\n}}\left[\vS_{\n\ii}\cond \filtr_{\n,\ii-1}\right] 
= \Exp_{P_{\n}}\left[\Exp_{P_{\n}}\left[\vS_{\n \ii}\cond\A_{\ii},\X_{\ii},\filtr_{\ii-1}\right]\cond \filtr_{\n,\ii-1}\right] 
= \mfzero.
\end{align*}
Similarly, for their $J_{T}$ in Display (3), we have 
\begin{align*} 
 &~ \sum_{\ii=1}^{\n} \Exp_{P_{\n}}\left[ \vS_{\n\ii}\vS_{\n\ii}^\trans \cond \filtr_{\n,\ii-1} \right]  \\
=&~ \Exp_{P_{\n}}\left[\sum_{\ii=1}^{\n}\indicator_{\{A_{\ii} = k\}}\Exp_{P_{\n}}\left[ \vS_{\n\ii}\vS_{\n\ii}^\trans \cond \A_{\ii},\vX_{\ii},\filtr_{\ii-1} \right] \cond \filtr_{\n,\ii-1} \right].
\end{align*}
Noting that 
\begin{equation*}
\begin{aligned}
 &~ \sum_{\ii=1}^{\n} \indicator_{\{A_{\ii} = k\}}\Exp_{P_{\n}}\left[ \vS_{\n\ii}\vS_{\n\ii}^\trans \cond \A_{\ii},\vX_{\ii},\filtr_{\ii-1} \right]  \\
=&~ \sum_{\ii=1}^{\n} \indicator_{\{A_{\ii} = k\}}\Exp_{P_{\n}}\left[\vS_{\n\ii}\vS_{\n\ii}^\trans\cond\vX_{\ii}\right]  \\
=&~ \frac{1}{\n}\sum_{\ii=1}^{\n} \indicator_{\{A_{\ii} = k\}}\Exp_{P_{\n}}\left[\scorebk(\Zkt\cond\vX_{\ii})\scorebk(\Zkt\cond\vX_{\ii})^\trans\cond\vX_{\ii}\right]  \\
=&~ \frac{1}{\n}\sum_{\ii=1}^{\n} \indicator_{\{A_{\ii} = k\}}\FJbpintk(\vX_{\ii}) 
= \Op(1),
\end{aligned}
\end{equation*}
where the last equality holds due to the existence of $\Exp[\FJbpintk(\vX)]$ by Assumption~\ref{assm:DQM}.

Turn next to the conditional Lindeberg's condition. For any $\delta > 0$, 
\begin{align*}
&~ \sum_{\ii=1}^{\n} \,\Exp_{P_{\n}}\left[\big|\lpintb^\trans\vS_{\n\ii}\big|^2\indicator_{\{|\lpintb^\trans\vS_{\n\ii}| > \delta\}} \cond \filtr_{\n,\ii-1}\right]  \\
=&~ \frac{1}{\n}\sum_{\ii=1}^{\n} \pi_{\ii}(k\cond\vX_{\ii},\filtr_{\ii-1}) \Exp_{P_{\n}}\left[\big|\lpintb^\trans\scorebk(\Zkt\cond\vX_{\ii})\big|^2\indicator_{\{|\lpintb^\trans\vS_{\n\ii}| > \delta\}}\cond\vX_{\ii}\right]  \\
\leq&~ \frac{1}{\n}\sum_{\ii=1}^{\n} \,\Exp_{P_{\n}}\left[\big|\lpintb^\trans\scorebk(\Zkt\cond\vX_{\ii})\big|^2\indicator_{\{|\lpintb^\trans\scorebk(\Zkt\cond\vX_{\ii})| > \sqrt{\n}\delta\}}\cond\vX_{\ii}\right],
\end{align*}
where the inequality is due to the facts that $\pi_{\ii}(k\cond\vX_{\ii},\filtr_{\ii-1})\leq 1$ for all $k$ and $\ii$. The unconditional $\Exp_{P_{\n}}$-expectation of this average is $\Exp_{P_{\n}}\left[\big|\lpintb^\trans\scorebk(\Zkt\cond\vX_{\ii})\big|^2\indicator_{\{|\lpintb^\trans\scorebk(\Zkt\cond\vX_{\ii})| > \sqrt{\n}\delta\}}\right]$ which tends to zero as $T\to\infty$ by dominated convergence.

%To demonstrate the last equality, define 
%\begin{align*}
%\iota_\ii \defeq \frac{1}{\n}|\lpintb^\trans\scorebk(\Zkt\cond\vX_{\ii})|^2\indicator_{\{|\lpintb^\trans\scorebk(\Zkt\cond\vX_{\ii})| > \sqrt{\n}\delta\}},
%\end{align*}
%so that, conditionally on $\vX_{\ii}$, $\iota_\ii = |\lpintb^\trans\scorebk(\Zkt\cond\vX_{\ii})|^2/\n$ with probability $\prob(|\lpintb^\trans\scorebk(\Zkt\cond\vX_{\ii})| > \sqrt{\n}\delta)$ and 0 otherwise. Note that, by Chebyshev’s inequality, 
%\begin{align*}
%\prob(|\lpintb^\trans\scorebk(\Zkt\cond\vX_{\ii})\cond > \sqrt{\n}\delta\cond\vX_{\ii}) 
%&\leq \frac{\Exp_{\pintb}[|\lpintb^\trans\scorebk(\Zkt\cond\vX_{\ii})|^2\cond\vX_{\ii}]}{\n\delta^2} \to 0, ~\textrm{a.s.,}
%\end{align*}
%as $\n\to\infty$, which in turn implies that $\iota_\ii \pto 0$. Then, noting that $|\iota_\ii| \leq |\lpintb^\trans\scorebk(\Zkt\cond\vX_{\ii})|^2$ and $\Exp[|\lpintb^\trans\scorebk(\Zkt\cond\vX_{\ii})|^2\cond\vX_{\ii}] = \lpintb^\trans\FJbpintk(\vX_{\ii})\lpintb < \infty$, by the dominated convergence theorem, we have 
%\begin{align*}
%\lim_{\ii\to\infty}\Exp[\iota_\ii\cond\vX_{\ii}] 
%%= \Exp[\lim_{\ii\to\infty}\iota_\ii\cond\vX_{\ii}] 
%= 0, ~~ a.s.\,
%\end{align*}
%implying that $\lim_{\n\to\infty}\frac{1}{\n}\sum_{\ii=1}^{\n}\Exp[\iota_\ii\cond\vX_{\ii}] = 0$ almost surely and hence (\ref{eqn:proof_Conditionb_op1}).

For their \textit{Condition (c)}, note 
\begin{equation} \label{eqn:proof_condition(c)}
\begin{aligned}
 &~ \sum_{\ii=1}^{\n} \Exp_{P_{\n}}\left[ R_{\n\ii}^2 \cond \filtr_{\n,\ii-1} \right]  \\
=&~ \sum_{\ii=1}^{\n} \Exp_{P_{\n}}\left[\indicator_{\{A_{\ii} = k\}}\Exp_{P_{\n}}\left[ r_k^2\big(\Zkt\cond\vX_{\ii},\lpintb/\sqrt{\n}\big) \cond \A_{\ii},\vX_{\ii},\filtr_{\ii-1} \right] \cond \filtr_{\n,\ii-1} \right]  \\
=&~ \sum_{\ii=1}^{\n} \pi_{\ii}(k\cond\vX_{\ii},\filtr_{\ii-1}) \Exp_{P_{\n}}\left[r_k^2\big(\Zkt\cond\vX_{\ii},\lpintb/\sqrt{\n}\big)\cond\vX_{\ii}\right] \\
\leq&~ \sum_{\ii=1}^{\n}\Exp_{P_{\n}}\left[r_k^2\big(\Zkt\cond\vX_{\ii},\lpintb/\sqrt{\n}\big)\cond\vX_{\ii}\right].
\end{aligned}
\end{equation}
Therefore, from Assumption~\ref{assm:DQM}, we have
\begin{align*}
\Exp_{P_{\n}}\sum_{\ii=1}^{\n} \Exp_{P_{\n}}\left[ R_{\n\ii}^2\cond\vX_{\ii},\filtr_{\ii-1} \right] 
\leq \n\Exp\left[r_k^2\big(\Z_{k}\cond\vX,\lpintb/\sqrt{\n}\big)\right] 
= To(1/T) = o(1).
\end{align*}

Their Display (5) is satisfied as 
\begin{align*}
 &~ \sum_{\ii=1}^{\n} (1- \Exp_{P_{\n}}[LR_{\n\ii}\cond\vX_{\ii},\filtr_{\ii-1}]) \\
=&~ \sum_{\ii=1}^{\n} -\Exp_{P_{\n}}\left[\indicator_{\{A_{\ii} = k\}}\left(\frac{\fzk(\Zkt\cond\vX_{\ii},\pintb_{\n})}{\fzk(\Zkt\cond\vX_{\ii},\pintb)}-1\right)\cond\vX_{\ii},\filtr_{\ii-1}\right]  \\
=&~ \sum_{\ii=1}^{\n} - \pi_{\ii}(k\cond\vX_{\ii},\filtr_{\ii-1})\Exp_{P_{\n}}\left[\frac{\fzk(\Zkt\cond\vX_{\ii},\pintb_{\n})}{\fzk(\Zkt\cond\vX_{\ii},\pintb)}-1\cond\vX_{\ii}\right]  \\
=&~ 0,
\end{align*}
where the second equality follows the same arguments as (\ref{eqn:proof_condition(c)}). The last equality is automatic when the densities $\fzk$ are strictly positive.

Finally, their \textit{Condition (d)} is naturally true under our setting. 
\end{proof}

\medskip

\begin{proof}[Proof of Proposition~\ref{prop:weakconvergence_general}]
% We proceed our proof by considering univariate $\funf_k$ and $\funq_k$ functions, and thus now in non-bold-face, for the sake of notation simplicity. The proofs presented below readily extend to multivariate cases. 

Our proof is based on \citet[Theorem 2.1]{NelsonFoster1994} which, in turn, is a modified version of  \citet[Theorem 11.2.3]{stroock1997multidimensional}. In particular, our case corresponds to their setting of $n = m$, $h = 1/\n$ and $\mathit{\Delta} = 1$. Recalling that the discrete-time process $\Ubparsum_{\ii} = ({\Ubparsum_{1,\ii}}^\trans,{\Ubparsum_{2,\ii}}^\trans)^\trans \defeq ({\parsumfb_{\ii}}^\trans,{\parsumqb_{\ii}}^\trans)^\trans$ is a time-homogeneous Markov process, we define the conditional mean and covariance matrix of $\Ubparsum_{\ii}$ in block form as 
\begin{align*}
\vmu^{(\n)}(\vu) = \begin{pmatrix}
\vmu_1^{(\n)}(\vu) \\ \vmu_2^{(\n)}(\vu) \end{pmatrix}, ~~
\vOmega^{(\n)}(\vu) = \begin{pmatrix}
\vOmega^{(\n)}_{11}(\vu) & \vOmega^{(\n)}_{12}(\vu)  \\
\vOmega^{(\n)}_{21}(\vu) & \vOmega^{(\n)}_{22}(\vu)  \\
\end{pmatrix},
\end{align*}
where (note that $h^{-\Delta} = \n$)
\begin{equation}
\begin{aligned}
\vmu_{i}^{(\n)}(\vu) \defeq&~ \n\,\Exp\left[ \Ubparsum_{i,\ii+1} - \Ubparsum_{i,\ii}\cond\Ubparsum_{\ii} = \vu\right], \\
\vOmega_{i j}^{(\n)}(\vu) \defeq&~ \n\cov\left[\Ubparsum_{i,\ii+1} - \Ubparsum_{i,\ii}, \Ubparsum_{j,\ii+1} - \Ubparsum_{j,\ii}\cond\Ubparsum_{\ii} = \vu\right],
\end{aligned}
\end{equation}
for $i,j \in \{1,2\}$ and $\vu\in\rR^{m}$. And we define their limiting versions as 
\begin{equation}
\vmu(\vu)
\defeq 
\begin{pmatrix}
\mfzero_{K}  \\
\big(\Exp\big[\policy_k(\vu,\vX)\funqb_k(\Z_{k},\vX,\vu)\big]\big)_{k=1}^{K} 
\end{pmatrix},
\end{equation}
and
\begin{equation}
\vOmega(\vu)
\defeq
\begin{pmatrix}
\vOmega_{11}(\vu) & \mfzero_{K\times K}  \\
\mfzero_{K\times K} & \mfzero_{K\times K}  \\
\end{pmatrix},
\end{equation}
with, for $k,k'\in\SK$,
\begin{align*}
\vOmega_{11}(\vu)[k,k']  &= 
\begin{cases}
\Exp\big[\policy_k(\vu,\vX){\funfb_{k}}(\Z_{k},\vX\cond\vu)\funfb_{k}(\Z_{k},\vX\cond\vu)^\trans\big], & \text{if~} k = k', \\
0 & \text{if~} k \neq k'.
\end{cases}
\end{align*}
Then, in what follows, we are going to show, for some $\delta > 0$ and as $\n\to\infty$, 
\begin{align}
& \vmu^{(\n)}(\vu) \to \vmu(\vu), \label{eqn:condition_b} \\
& \vOmega^{(\n)}(\vu) \to \vOmega(\vu), \textrm{~~and~~} \label{eqn:condition_c} \\
& \n\Exp\left[\left\|\Ubparsum_{\ii+1} - \Ubparsum_{\ii}\right\|^{2+\delta} \cond \Ubparsum_{\ii} = \vu\right] \to 0, \label{eqn:condition_d}
\end{align}
where the convergences in (\ref{eqn:condition_b})--(\ref{eqn:condition_d}) are uniform on bounded $\vu$ subsets, i.e., on every set of the form $\{\vu : \|\vu\| \leq c\}$ with $c>0$.

The convergences of (\ref{eqn:condition_b}), (\ref{eqn:condition_c}), and (\ref{eqn:condition_d}) correspond to Conditions~(b'), (c'), and (d') in \citet[Theorem 2.1]{NelsonFoster1994}, respectively. Their Condition~(a') is immediate for our application. 

\medskip 

\noindent
(I) We start with verifying (\ref{eqn:condition_b}) by showing the convergence of these expectation component-wise conditioning on $\Ubparsum_{\ii} = \vu$. 
\begin{itemize}
\item[1)] For $k = 1,\dots,K$, we have
\begin{align*}
\vmu_1^{(\n)}(\vu)[k]
&= \n\,\Exp\left[ \parsumf_{k,\ii+1} - \parsumf_{k,\ii}\cond\Ubparsum_{\ii} = \vu\right] \\
&= \sqrt{\n}\,\Exp\left[ \indicator_{\{A_{\ii+1} = k\}}\funfb_k(\Y_{\ii+1},\vX_{\ii+1} \cond \Ubparsum_{\ii}) \cond \Ubparsum_{\ii} = \vu\right] \\
&= \sqrt{\n}\,\Exp\left[ \Exp\left[\indicator_{\{A_{\ii+1} = k\}}\funfb_k(\Y_{\ii+1},\vX_{\ii+1}\cond\Ubparsum_{\ii})\cond\A_{\ii+1} = k,\vX_{\ii+1},\Ubparsum_{\ii}\right]\cond\Ubparsum_{\ii} = \vu\right] \\
&= 0,
\end{align*}
where the last equality holds because, as per (\ref{eqn:functioncondition_moment_I}),
\begin{align*}
 &~ \Exp\left[\indicator_{\{A_{\ii+1} = k\}}\funfb_k(\Y_{\ii+1},\vX_{\ii+1}\cond\Ubparsum_{\ii})\cond\A_{\ii+1} = k,\vX_{\ii+1},\Ubparsum_{\ii}\right] \\
=&~ \indicator_{\{A_{\ii+1} = k\}}\Exp\big[\funfb_k(\Y_{\ii+1},\vX_{\ii+1}\cond\Ubparsum_{\ii})\cond\A_{\ii+1} = k,\vX_{\ii+1},\Ubparsum_{\ii}\big] \\
=&~ 0.
\end{align*}
\item[2)] For $k = 1,\dots,K$, we have
\begin{align*}
\vmu_2^{(\n)}(\vu)[k]
=&~ \n\,\Exp\big[ \parsumq_{k,\ii+1}  - \parsumq_{k,\ii}\cond\Ubparsum_{\ii} = \vu\big] \\
=&~ \Exp\big[ \indicator_{\{A_{\ii+1} = k\}}\funqb_k(\Y_{\ii+1},\vX_{\ii+1},\Ubparsum_{\ii}) \cond \Ubparsum_{\ii} = \vu\big] \\
=&~ \Exp\big[\policy_k^{(\n)}(\vu,\vX)\funqb_k(\Z_{k},\vX,\vu)\big] \\
\to&~ \Exp\big[\policy_k(\vu,\vX)\funqb_k(\Z_{k},\vX,\vu)\big],
\end{align*}
where the convergence is by Assumption~\ref{assm:policy_limit}.
\end{itemize}

\medskip

\noindent
(II) Turn next to the covariance in (\ref{eqn:condition_c}). 
\begin{itemize}
\item[1)] For $k \in \{1,\dots,K\}$, we have
\begin{align*}
 &~ \vOmega_{11}^{(\n)}(\vu)[k,k] \\ 
=&~ \cov\left[\indicator_{\{A_{\ii+1} = k\}}\funfb_k(\Y_{\ii+1},\vX_{\ii+1}\cond\Ubparsum_{\ii}), \indicator_{\{A_{\ii+1} = k\}}\funfb_k(\Y_{\ii+1},\vX_{\ii+1}\cond\Ubparsum_{\ii})\cond\Ubparsum_{\ii} = \vu\right]  \\
=&~ \Exp\left[\indicator_{\{A_{\ii+1} = k\}}\funfb_k(\Y_{\ii+1},\vX_{\ii+1}\cond\Ubparsum_{\ii})\funfb_k(\Y_{\ii+1},\vX_{\ii+1}\cond\Ubparsum_{\ii})^\trans\cond\Ubparsum_{\ii} = \vu\right]  \\
=&~ \Exp\left[\policy_k^{(\n)}(\vu,\vX)\funfb_{k}(\Z_{k},\vX\cond\vu)\funfb_{k}(\Z_{k},\vX\cond\vu)^\trans\right]  \\
\to&~ \Exp\left[\policy_k(\vu,\vX)\funfb_{k}(\Z_{k},\vX\cond\vu)\funfb_{k}(\Z_{k},\vX\cond\vu)^\trans\right], 
\end{align*}
where the convergence follows from Assumption~\ref{assm:policy_limit}. For $k, k' \in \{1,\dots,K\}$ and $k \neq k'$, we have $\vOmega_{11}^{(\n)}(\vu)[k,k'] = 0$ immediately since one of $\indicator_{\{A_{\ii+1} = k\}}$ and $\indicator_{\{A_{\ii+1} = k'\}}$ has to be zero. 
\item[2)] For all $k, k' = 1,\dots,K$, we have
\begin{align*}
&~ \vOmega_{22}^{(\n)}(\vu)[k,k'] \\
=&~ \n\cov\left[\frac{1}{\n}\indicator_{\{A_{\ii+1} = k\}}\funqb_k(\Y_{\ii+1},\vX_{\ii+1},\Ubparsum_{\ii}), \frac{1}{\n}\indicator_{\{A_{\ii+1} = k'\}}\funqb_k(\Y_{\ii+1},\vX_{\ii+1},\Ubparsum_{\ii})\cond\Ubparsum_{\ii} = \vu\right] \\
=&~ \frac{1}{\n}\cov\left[\indicator_{\{A_{\ii+1} = k\}}\funqb_k(\Y_{\ii+1},\vX_{\ii+1},\Ubparsum_{\ii}), \indicator_{\{A_{\ii+1} = k'\}}\funqb_k(\Y_{\ii+1},\vX_{\ii+1},\Ubparsum_{\ii})\cond\Ubparsum_{\ii} = \vu\right] \\
\leq&~ \frac{1}{4\n}\Exp\big[\funqb_k(\Z_{k},\vX,\vu)\funqb_k(\Z_{k},\vX,\vu)^\trans\big] 
\to 0, 
\end{align*}
uniformly in $\vu$. Similarly, $\vOmega_{1 2}^{(\n)}(\vu)$ converges to zero uniformly in $\vu$ due to exploding rates ($\sqrt{\n}$) in the denominator. The same holds for $\vOmega_{2 1}^{(\n)}(\vu)$. 
\end{itemize} 

\medskip

\noindent
\textit{(iii)} Finally, we show that the conditional moment condition (\ref{eqn:condition_d}) holds because
\begin{align*}
&~ \n\Exp\Big[\left\|\Ubparsum_{\ii+1} - \Ubparsum_{\ii}\right\|^{2+\delta} \cond \Ubparsum_{\ii} = \vu\Big]  \\
=&~ \n\Exp\Bigg[\Bigg(\sum_{k}\Bigg( \frac{1}{\n}\indicator_{\{A_{\ii+1} = k\}}\|\funfb_k(\Y_{\ii+1},\vX_{\ii+1}\cond\Ubparsum_{\ii})\|^2 \\
&~~~~~~~~~~~~~~~~ + \frac{1}{\n^2}\indicator_{\{A_{\ii+1} = k\}}\|\funqb_k(\Y_{\ii+1},\vX_{\ii+1},\Ubparsum_{\ii})\|^2\Bigg)\Bigg)^{1+\frac{\delta}{2}}\cond \Ubparsum_{\ii} = \vu\Bigg] \\
=&~ \n^{-\frac{\delta}{2}}\Exp\Bigg[\Bigg(\sum_{k}\Bigg( \indicator_{\{A_{\ii+1} = k\}}\|\funfb_k(\Y_{\ii+1},\vX_{\ii+1}\cond\Ubparsum_{\ii})\|^2 \\
&~~~~~~~~~~~~~~~~~~~~ + \frac{1}{\n}\indicator_{\{A_{\ii+1} = k\}}\|\funqb_k(\Y_{\ii+1},\vX_{\ii+1},\Ubparsum_{\ii})\|^2\Bigg)\Bigg)^{1+\frac{\delta}{2}}\cond \Ubparsum_{\ii} = \vu\Bigg] \\
\leq&~ \n^{-\frac{\delta}{2}}\Exp\Bigg[\Bigg(\sum_{k}\Bigg( \|\funfb_k(\Z_{k,\ii+1},\vX_{\ii+1}\cond\Ubparsum_{\ii})\|^2 + \frac{1}{\n}\|\funqb_k(\Z_{k,\ii+1},\vX_{\ii+1},\Ubparsum_{\ii})\|^2  \Bigg)\Bigg)^{1+\frac{\delta}{2}} \Bigg] \\
\leq&~ C\n^{-\frac{\delta}{2}}\sum_{k}\Exp\Bigg[ \|\funfb_k(\Z_{k,\ii+1},\vX_{\ii+1}\cond\Ubparsum_{\ii})\|^{2+\delta} + \frac{1}{\n^{1+\frac{\delta}{2}}}\|\funqb_k(\Z_{k,\ii+1},\vX_{\ii+1},\Ubparsum_{\ii})\|^{2+\delta} \Bigg] \\
=&~ \op(1),
\end{align*}
where the last quality holds because all terms in the expectation before it are $\Op(1)$ terms by (\ref{eqn:functioncondition_moment_II}), and $\n^{-\frac{\delta}{2}} \to 0$ for $\delta > 0$.

\end{proof}

\section{Additional Simulation Results} \label{appsec:additionalsimulations}

\renewcommand{\arraystretch}{1.5}  % Increase space between rows for readability

\begin{table}[ht]
\centering
\begin{tabular}{l cccc | cccc | cccc}
\toprule
 & \multicolumn{4}{c}{\textbf{ThompsonInv}} & \multicolumn{4}{c}{\textbf{temperedgreedy}} & \multicolumn{4}{c}{\textbf{temperedUCB}} \\
\cmidrule(r){2-5} \cmidrule(r){6-9} \cmidrule(r){10-13} % Adds horizontal lines between groups
Sample Size & $50$ & $100$ & $200$ & $500$ & $50$ & $100$ & $200$ & $500$ & $50$ & $100$ & $200$ & $500$ \\
\midrule
one-arm AW & 5.04 & 4.91 & 5.15 & 5.00 & 5.36 & 5.27 & 5.26 & 5.30 & 5.69 & 5.67 & 5.47 & 5.32 \\
two-sample t & 4.32 & 4.67 & 4.90 & 4.88 & 4.60 & 4.84 & 4.74 & 4.74 & 4.86 & 4.79 & 4.64 & 4.60 \\
two-sample AW & 4.02 & 4.38 & 4.49 & 4.80 & 4.09 & 4.40 & 4.56 & 4.65 & 4.18 & 4.61 & 4.75 & 4.60 \\
two-sample IPW & 7.24 & 6.55 & 6.02 & 5.36 & 2.99 & 6.39 & 7.51 & 6.43 & 1.81 & 4.91 & 6.73 & 6.34 \\
\bottomrule
\end{tabular}
\caption{Size table for valid tests, including the one-arm AW test for evaluating a single arm (Section~\ref{subsec:MAB_evaluateonearm}) and the two-sample t-test, AW test, and IPW test for comparing the two arms (Section~\ref{subsec:MAB_comparetwoarms}) in the MAB problem.}
\label{table:sizes}
\end{table}

% \begin{figure}[!htb] 
% \centering
% \hspace*{-3mm}
% \includegraphics[width = 6.5in]{plot_dstatMABseq_ThompsonInv_T[200]m1[0.0]m2[0.0]}
% \hspace*{-3mm}
% \includegraphics[width = 6.5in]{plot_dstatMABseq_ThompsonInv_T[200]m1[50.0]m2[50.0]}
% \caption{{\small Histograms of arm-pulling frequencies, the two-sample LR, AW, t, and IPW statistics, for the sequence of the non-contextual two-armed bandit experiment (with $T = 200$) under the translation-invariant Thompson sampling. Parameter setting: $\m = 0$ and $\delta = 0$ (upper) and $\m = 50$ and $\delta = 0$ (bottom).}}
% \label{fig:dstatMABseq_ThompsonInv}
% \end{figure}

% \begin{figure}[!htb] 
% \centering
% \hspace*{-3mm}
% \includegraphics[width = 6.5in]{plot_dstatMABseq_Thompson_T[200]m1[0.0]m2[0.0]}
% \hspace*{-2mm}
% \includegraphics[width = 6.5in]{plot_dstatMABseq_Thompson_T[200]m1[50.0]m2[50.0]}
% \caption{{\small Histograms of arm-pulling frequencies, the two-sample LR-, AW, t, and IPW statistics, for the sequence of the non-contextual two-armed bandit experiment (with $T = 200$) under the classical Thompson sampling. Parameter setting: $\m = 0$ and $\delta = 0$ (upper) and $\m = 50$ and $\delta = 0$ (bottom).}}
% \label{fig:dstatMABseq_Thompson}
% \end{figure}

\end{document}